\newtheorem{theorem}{Theorem}[section]
\newtheorem{proposition}[theorem]{Proposition}
\newtheorem{corollary}[theorem]{Corollary}
\newtheorem{lemma}[theorem]{Lemma}
\newtheorem{definition}[theorem]{Definition}
\newtheorem{remark}[theorem]{Remark}
\numberwithin{equation}{section}
\theoremstyle{definition}
\title[]
 {The scattering of fractional Schr\"{o}dinger operators with short range potentials}
\author{Rui Zhang, Tianxiao Huang\textsuperscript{*} and Quan Zheng}
\address{Rui Zhang, School of Mathematics and Statistics, Huazhong University of Science and Technology, Wuhan, Hubei 430074, PR China }
\email{lrzhang@hust.edu.cn}
\address{Tianxiao Huang (corresponding author), School of Mathematics (Zhuhai), Sun Yat-sen University, Zhuhai, Guangdong 519082, China}
\email{htx5@mail.sysu.edu.cn}
\address{Quan Zheng, School of Mathematics and Statistics, Huazhong University of Science and Technology, Wuhan, Hubei 430074, PR China }
\email{qzheng@hust.edu.cn}
\subjclass[2010]{35J10, 58J50}
\keywords{fractional Schr\"{o}dinger operator, short range potential, asymptotic completeness, limiting absorption principle}
\begin{document}

\begin{abstract}
For any positive real number $s$, we study the scattering theory in a unified way for the fractional Schr\"{o}dinger operator $H=H_0+V$, where $H_0=(-\Delta)^\frac s2$ and the real-valued potential $V$ satisfies short range condition. We prove the existence and asymptotic completeness of the wave operators $W_\pm=\mathrm{s-}\lim_{t\rightarrow\pm\infty}e^{itH}e^{-itH_0}$, the discreteness and finite multiplicity of the non-zero pure point spectrum $\sigma_\mathrm{pp}\setminus\{0\}$ of $H$, and the finite decay property of eigenfunctions. The short range condition is sharp with respect to the allowed decay rate of $V$, and the decay threshold for the existence and non-existence of the wave operators is faster than $|x|^{-1}$ at the infinity in some sense. Our approach is inspired by the theory of limiting absorption principle for simply characteristic operators established by S. Agmon and L. H\"{o}rmander in the 1970s.
\end{abstract}

\maketitle


\section{Introduction}\label{sec1}

Let $s$ be any positive real number, and $H_0=(-\Delta)^\frac s2$ be defined through the Fourier symbol $|\xi|^s$ which is self-adjoint in $L^2(\mathbb{R}^n)$, with domain $H^s$ and absolutely continuous spectrum. We consider the fractional Schr\"{o}dinger operator $H=H_0+V$, where $V$ is a real-valued multiplication operator which will be called a \textit{short range potential}. To formulate our results, we first introduce the subspace $B$ of $L^2(\mathbb{R}^n)$ defined by 
\begin{equation*}
B=\left\{u\in L^2(\mathbb{R}^n);~\|u\|_B\triangleq\sum_{j>0}R_j^\frac12\|u\|_{L^2(X_j)}<\infty\right\},
\end{equation*}
where 
\begin{equation}\label{equ12}
R_0=0,~R_j=2^{j-1}~\text{when}~j>0,~\text{and}~X_j=\{x\in\mathbb{R}^n;~R_{j-1}<|x|<R_j\}.
\end{equation}
$C_c^\infty$ and $\mathscr{F}^{-1}C_c^\infty$ are both dense in the Banach space $B$ where $\mathscr{F}^{-1}$ denotes the inverse Fourier transform. The dual space of $B$ is defined by
\begin{equation*}
B^*=\left\{v\in L_\mathrm{loc}^2(\mathbb{R}^n);~\|v\|_{B^*}\triangleq\sup_{j>0}R_j^{-\frac12}\|v\|_{L^2(X_j)}<\infty\right\},
\end{equation*} 
where the duality is given by the pairing $\langle v,u\rangle=\int_{\mathbb{R}^n}v\bar{u}dx$. We note that the embedding $\mathscr{S}\hookrightarrow B$ holds where $\mathscr{S}$ is the space of Schwartz functions, thus $B^*\hookrightarrow\mathscr{S}'$ where $\mathscr{S}'$ denotes the temperate distributions. Also note that $B$ is not reflexive and $\mathscr{S}$ is not dense in $B^*$, see \cite[p. 228]{Hor2}. When $s>0$, we define $B_s^*$ to be the space of $f\in\mathscr{S}'$ such that $(I-\Delta)^\frac s2f\in L_\mathrm{loc}^2$ and
\begin{equation}\label{14}
\|f\|_{B_s^*}\triangleq\|(I-\Delta)^\frac s2f\|_{B^*}<\infty.
\end{equation}
We have $B_s^*\hookrightarrow B^*$ (see Corollary \ref{cor42}) and thus elements of $B_s^*$ are $L_\mathrm{loc}^2$ functions.
\begin{definition}\label{def1}
Suppose $s>0$ and $V\in L_\mathrm{loc}^2(\mathbb{R}^n)$ is real-valued. We call $V$ a short range potential, if the map $f\mapsto Vf$ is compact from $B_s^*$ to $B$. 
\end{definition}
Note that $B\hookrightarrow L^2\hookrightarrow B^*$, a short range potential $V$ also defines a compact multiplication operator from $H^s$ to $L^2$, making the operator sum $H=H_0+V$ self-adjoint with domain $H^s$. The current work concerns some classical aspects in scattering theory. Our first main result is the existence of wave operators.

\begin{theorem}\label{thm51}
	The wave operators $W_\pm$ defined by the strong $L^2$ limits
	\begin{equation}\label{eq1.3}
	W_\pm u=\lim_{t\rightarrow\pm\infty}e^{itH}e^{-itH_0}u,\quad u\in L^2,
	\end{equation}
	exist and are isometric. Therefore, the following intertwining property holds:
	\begin{equation}
	e^{itH}W_\pm=W_\pm e^{itH_0},\quad t\in\mathbb{R}.
	\end{equation}
\end{theorem}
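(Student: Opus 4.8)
The natural approach is Cook's method combined with the structure of $B$, $B^*$, $B_s^*$. Since $e^{itH}e^{-itH_0}$ is unitary on $L^2$ for every $t$, it suffices to produce a dense subspace $\mathcal D\subset L^2$ on which $t\mapsto e^{itH}e^{-itH_0}u$ converges in $L^2$ as $t\to\pm\infty$: the limits then extend to all of $L^2$ by density, they are automatically isometric (a strong limit of isometries is an isometry), and the intertwining relation follows on replacing $t$ by $t+\tau$ and letting $t\to\pm\infty$. I would take $\mathcal D=\mathscr F^{-1}C_c^\infty(\mathbb R^n\setminus\{0\})$, which is dense in $L^2$ and on which $e^{-itH_0}u\in\mathscr S$ with $\widehat{e^{-itH_0}u}=e^{-it|\xi|^s}\widehat u\in C_c^\infty(\mathbb R^n\setminus\{0\})$. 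For such $u$ one has $\tfrac{d}{dt}(e^{itH}e^{-itH_0}u)=ie^{itH}Ve^{-itH_0}u$ in $L^2$ (legitimate since $e^{-itH_0}u\in H^s=\mathrm{dom}(H)$ and $Ve^{-itH_0}u\in L^2$), so everything reduces to the time-integrability $\int_1^{\infty}\|Ve^{-itH_0}u\|_{L^2}\,dt<\infty$ and its mirror image as $t\to-\infty$.

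To bound $\|Ve^{-itH_0}u\|_{L^2}$ I would cut $e^{-itH_0}u$ by smooth radial cutoffs into a ``slow'' part $a_t$ supported in $\{|x|\le\varepsilon|t|\}$, a ``fast'' part $c_t$ supported in $\{|x|\ge C|t|\}$, and a bulk part $g_t$ supported in $\{\varepsilon|t|\le|x|\le C|t|\}$, where $0<\varepsilon<C$ are chosen strictly outside the range of group speeds $\{s|\xi|^{s-1}:\xi\in\mathrm{supp}\,\widehat u\}$. Non-stationary phase makes $a_t$, $c_t$ and all their derivatives $O_N(\langle t\rangle^{-N})$; since $L^2\hookrightarrow B^*$ yields $H^s\hookrightarrow B_s^*$, one gets $\|Va_t\|_{L^2}+\|Vc_t\|_{L^2}\le\|V\|_{B_s^*\to B}(\|a_t\|_{B_s^*}+\|c_t\|_{B_s^*})=O_N(\langle t\rangle^{-N})$, which is integrable. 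For the bulk part, stationary phase (the Hessian of $|\xi|^s$ is nondegenerate off the origin when $s\ne1$) gives $\|g_t\|_{L^\infty}\lesssim\langle t\rangle^{-n/2}$, hence $\|Vg_t\|_{L^2}\le\|g_t\|_{L^\infty}\,\|V\|_{L^2(\varepsilon|t|\le|x|\le C|t|)}\lesssim\langle t\rangle^{-n/2}\sum_{R_j\sim|t|}\|V\|_{L^2(X_j)}$, and integrating over the geometric range of shells reduces the whole matter to $\sum_{j>0}R_j^{1-n/2}\|V\|_{L^2(X_j)}<\infty$. This is exactly the quantitative content of the boundedness of $V\colon B_s^*\to B$: testing against functions that are a mollification of $R_j^{-(n-1)/2}\mathbf 1_{X_j}$, smooth at scale $R_j$ (so $(I-\Delta)^{s/2}$ acts essentially as the identity, and the finitely many inner shells are absorbed by $V\in L^2_{\mathrm{loc}}$), one checks that superpositions of such functions over disjointly supported shells have $B_s^*$-norm $O(1)$ uniformly, whence $\sum_{j>0}R_j^{1-n/2}\|V\|_{L^2(X_j)}\lesssim\|V\|_{B_s^*\to B}$.

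The step I expect to be the genuine obstacle is the bulk estimate when $s\le1$: at $s=1$ the Hessian of $|\xi|^s$ degenerates in the radial direction, the free dispersive rate drops to $\langle t\rangle^{-(n-1)/2}$, and the crude H\"older split no longer closes against the short range condition. The robust remedy — and the one faithful to the paper's Agmon--H\"ormander viewpoint — is to replace Cook's estimate by the stationary construction built on the limiting absorption principle. Concretely: fix a compact $\Lambda\subset(0,\infty)$ on which the limiting absorption principle holds for both $H_0$ and $H$ (i.e. avoiding the finitely many positive eigenvalues of $H$ in that range), reduce to $u=E_0(\Lambda)u$ and recover the general case by patching $\Lambda\uparrow(0,\infty)$ (legitimate since $H_0$ is purely absolutely continuous), insert Stone's formula $e^{-isH_0}u=\tfrac1{2\pi i}\int_\Lambda e^{-is\lambda}(R_0(\lambda+i0)-R_0(\lambda-i0))u\,d\lambda$ into $e^{itH}e^{-itH_0}u-u=i\int_0^t e^{isH}Ve^{-isH_0}u\,ds$, carry out the $s$-integration by Fubini as $\int_0^t e^{is(H-\lambda)}\,ds$ acting on $V(R_0(\lambda\pm i0)-R_0(\lambda\mp i0))u\in B$, and let $t\to\pm\infty$. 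The limiting absorption principle for $H$ (giving $R(\lambda\mp i0)\colon B\to B_s^*$ bounded and continuous on $\Lambda$, which makes sense precisely because $V$ maps $B_s^*$ into $B$) both identifies the limit $\int_0^{\pm\infty}e^{is(H-\lambda)}(\,\cdot\,)\,ds$ in $B_s^*$ and permits carrying it under the $\lambda$-integral, producing a closed stationary formula for $W_\pm u$ and hence its existence. Making that interchange rigorous — in particular the genuine (not merely Abel) convergence of the time integral, uniformly in $\lambda\in\Lambda$ — is where the uniform limiting absorption bounds for $H_0$ and $H$ together with the mapping property of $V$ are indispensable, and is the part I would expect to require the most care.
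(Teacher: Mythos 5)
Your overall architecture (Cook's method on the dense subspace $\mathscr F^{-1}C_c^\infty(\mathbb R^n\setminus\{0\})$, non-stationary phase for the parts of $e^{-itH_0}u$ supported off the propagation region $\{r|t|/2<|x|<2R|t|\}$, and the reduction of those parts to $O_N(|t|^{-N})$ in $B_s^*$ via $\|V\|_{B_s^*\to B}$) coincides with the paper's proof. The gap is in the bulk term. Your first route --- stationary phase giving $\|g_t\|_{L^\infty}\lesssim\langle t\rangle^{-n/2}$ followed by H\"older against $\|V\|_{L^2}$ on the annulus --- fails exactly where you say it does (at $s=1$ the Hessian of $|\xi|^s$ degenerates radially), and it is in any case an unnecessary detour. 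Your second route, replacing Cook's method by a stationary construction built on the limiting absorption principle for $H$, is not available at this stage and is not completed: in the paper's logical order the LAP for $H$ (invertibility of $I+VR_0(\lambda\pm i0)$ on $B$ away from the exceptional set $\Lambda$) is only established in Sections 7--8, it requires \emph{compactness} of $V:B_s^*\to B$ and the discreteness of $\Lambda$, whereas Theorem \ref{thm51} is proved under mere boundedness (see Remark \ref{rk31}); and the identification of $\int_0^{\pm\infty}e^{is(H-\lambda)}(\cdot)\,ds$ with a bounded operator $B\to B_s^*$, uniformly in $\lambda$ and interchangeable with the spectral integral, is precisely the hard content of the stationary theory of wave operators, which you flag but do not supply.

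The idea you are missing is that no dispersive decay is needed for the bulk term $u_t^2=\varphi(\cdot/t)e^{-itH_0}u$: the $|t|^{-1}$ decay is extracted from the \emph{weights} in the short range condition, not from the solution. Since $\mathrm{supp}\,u_t^2\subset\{r|t|/4<|x|<4R|t|\}$ meets only the shells $X_j$ with $R_j\sim|t|$ (a bounded number of them, uniformly in $t$), the quantitative boundedness criterion \eqref{eq411} together with the trivial uniform bound $\|J_su_t^2\|_{L^2}\leq C$ gives
\begin{equation*}
\|Vu_t^2\|_{L^2}\leq C\sum_{r|t|/4<R_j<8R|t|}M_j\leq\frac{C}{r|t|}\sum_{r|t|/4<R_j<8R|t|}M_jR_j,
\end{equation*}
using only $M_j\leq M_jR_j/(cR_j)\leq CM_jR_j/|t|$ on the relevant range of $j$. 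Integrating $dt/|t|$ over $|t|\sim R_j$ produces the constant $\log(32R/r)$ for each $j$, and \eqref{eq412} closes the argument: $\int_{|t|>1}\|Vu_t^2\|_{L^2}\,dt\leq C\sum_jM_jR_j<\infty$. This works for every $s>0$ and every dimension, with no stationary phase, no restriction at $s=1$, and no appeal to the resolvent of $H$. (Your observation that $\sum_jR_j^{1-n/2}\|V\|_{L^2(X_j)}\lesssim\sum_jM_jR_j$ is correct, but it is a weaker consequence of \eqref{eq411}--\eqref{eq412} than what the direct argument uses.)
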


Let $\mathscr{H}_\mathrm{ac}$ and $\mathscr{H}_\mathrm{sing}$  respectively denote the absolutely continuous and singular continuous subspaces of $L^2(\mathbb{R}^n)$ with respect to $H$. Our second main result is the following.

\begin{theorem}\label{thm14}
	$W_\pm$ are asymptotically complete, that is,
	\begin{equation}\label{equ18}
	\mathrm{Ran}(W_+)=\mathrm{Ran}(W_-)=\mathscr{H}_\mathrm{ac}\quad\text{and}\quad\mathscr{H}_\mathrm{sing}=\{0\}.
	\end{equation}
	Consequently, the scattering operator $S=W_+^*W_-$ is unitary.
\end{theorem}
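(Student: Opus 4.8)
\emph{Proof plan.} The plan is to deduce everything from the limiting absorption principle (LAP) for $H_0$ and $H$, which is established in the body of the paper: the boundary values $R_0(\lambda\pm i0)=\lim_{\eps\downarrow0}(H_0-\lambda\mp i\eps)^{-1}$ and $R(\lambda\pm i0)=\lim_{\eps\downarrow0}(H-\lambda\mp i\eps)^{-1}$ exist in the operator norm of $\mathcal B(B,B^*)$, in fact map $B$ into $B_s^*$ (reflecting the order $s$), and depend H\"older continuously on $\lambda$ in compact subsets of $(0,\infty)$, respectively of $(0,\infty)\setminus\sigma_\mathrm{pp}(H)$, where $\sigma_\mathrm{pp}(H)\setminus\{0\}$ is discrete with finite multiplicities ($\lambda=0$ being the only threshold of $|\xi|^s$). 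Granting this, the argument has three steps: (i) $\mathscr H_\mathrm{sing}=\{0\}$; (ii) existence of the reversed limits $\Omega_\pm u=\lim_{t\to\pm\infty}e^{itH_0}e^{-itH}P_\mathrm{ac}(H)u$; (iii) conclusion via the multiplicativity of wave operators.

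\emph{Step (i): no singular continuous spectrum.} By Stone's formula, for every $f\in B$ the spectral density $\tfrac{d}{d\lambda}\langle E_H(\lambda)f,f\rangle=\tfrac1\pi\Im\langle R(\lambda+i0)f,f\rangle$ is continuous on each compact $I\subset(0,\infty)\setminus\sigma_\mathrm{pp}(H)$, so the part of the spectral measure of $f$ lying over such an $I$ is purely absolutely continuous. Since $B$ is dense in $L^2$, since these intervals exhaust $(0,\infty)\setminus\sigma_\mathrm{pp}(H)$, since $\sigma(H)\cap(-\infty,0)$ consists only of discrete eigenvalues (relative compactness of $V$), and since $\{0\}$ is Lebesgue-null, it follows that $H$ has no singular continuous spectrum: $\mathscr H_\mathrm{sing}=\{0\}$ and $\mathscr H_\mathrm{ac}=\mathscr H_\mathrm{pp}^{\perp}$. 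In particular the subspaces $E_H(I)L^2$, over $I\Subset(0,\infty)\setminus\sigma_\mathrm{pp}(H)$, span $\mathscr H_\mathrm{ac}$.

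\emph{Step (ii)--(iii): reversed wave operators and completeness.} The short range condition is designed to make $V$ a Kato-smooth perturbation of $H_0$, localized to the intervals $I$ above: writing $V=V_2^*V_1$ for a factorization afforded by Definition~\ref{def1}, the boundedness $V\colon B_s^*\to B$ composed with the LAP bounds $R_0,R\colon B\to B_s^*$ produces the uniform resolvent estimates of Kato's criterion,
\[
\sup_{\lambda\in I,\,\eps>0}\bigl\|V_i\,R_0(\lambda+i\eps)\,V_j^*\bigr\|_{L^2\to L^2}<\infty,\qquad \sup_{\lambda\in I,\,\eps>0}\bigl\|V_i\,R(\lambda+i\eps)\,V_j^*\bigr\|_{L^2\to L^2}<\infty\qquad(i,j\in\{1,2\}),
\]
so that $V_1$ and $V_2$ are locally $H_0$-smooth and locally $H$-smooth. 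Running Cook's method in the factored form $\tfrac{d}{dt}\langle e^{itH_0}e^{-itH}u,\phi\rangle=-i\langle V_1e^{-itH}u,V_2e^{-itH_0}\phi\rangle$, whose right side is integrable in $t$ by Cauchy--Schwarz together with these two smoothness bounds, shows that the strong limit $\Omega_\pm u=\lim_{t\to\pm\infty}e^{itH_0}e^{-itH}u$ exists for $u$ in the dense set $\bigcup_I E_H(I)L^2\subset\mathscr H_\mathrm{ac}$, hence (the evolutions being unitary) for all $u\in\mathscr H_\mathrm{ac}$, so $\Omega_\pm u=\lim_{t\to\pm\infty}e^{itH_0}e^{-itH}P_\mathrm{ac}(H)u$ exists for all $u\in L^2$ with $\|\Omega_\pm u\|=\|P_\mathrm{ac}(H)u\|$. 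The chain rule $W_\pm(A,C)=W_\pm(A,B)W_\pm(B,C)$ for wave operators, applied to $\Omega_\pm=W_\pm(H_0,H)P_\mathrm{ac}(H)$ and $W_\pm=W_\pm(H,H_0)$, gives $W_\pm\Omega_\pm=P_\mathrm{ac}(H)$ and $\Omega_\pm W_\pm=I$. Hence every $u\in\mathscr H_\mathrm{ac}$ equals $P_\mathrm{ac}(H)u=W_\pm(\Omega_\pm u)\in\operatorname{Ran}W_\pm$; combined with $\operatorname{Ran}W_\pm\subseteq\mathscr H_\mathrm{ac}$ (which follows from the intertwining relation of Theorem~\ref{thm51}, since $H$ is purely absolutely continuous on $\operatorname{Ran}W_\pm$), this gives $\operatorname{Ran}W_+=\operatorname{Ran}W_-=\mathscr H_\mathrm{ac}$. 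With Step (i) this is \eqref{equ18}; and as $W_+,W_-$ are then unitaries of $L^2$ onto the common subspace $\mathscr H_\mathrm{ac}$, $S=W_+^*W_-$ is unitary on $L^2$.

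\emph{The main obstacle} is the Kato-smoothness used in Step (ii): extracting from the LAP, which only controls $R(\lambda\pm i0)$ in $\mathcal B(B,B^*)$, the $L^2\to L^2$ resolvent bounds that are uniform up to the real axis, when $V$ is merely $L^2_\mathrm{loc}$ and possibly unbounded. This forces one to exploit the regularizing factor $(I-\Delta)^{s/2}$ in the definition of $B_s^*$ and, decisively, the \emph{compactness} rather than mere boundedness of $V\colon B_s^*\to B$ — together with the dispersive decay of $e^{-itH_0}$ — in order to absorb $V$ against the resolvent; this is exactly the point at which the sharpness of the short range decay rate enters. (The LAP for $H$ is the other essential input: it is obtained as a Fredholm perturbation of the LAP for $H_0$, with the exceptional set pinned down, via the radiation-condition characterization of $R(\lambda\pm i0)$, to be the positive point spectrum of $H$.)
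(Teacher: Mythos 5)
Your architecture is the classical Kato--Kuroda route (LAP $\Rightarrow$ no singular continuous spectrum via Stone's formula; factor $V=V_2^*V_1$, deduce local $H_0$- and $H$-smoothness of the factors, run Cook's method backwards to build $\Omega_\pm$, then use the chain rule), whereas the paper follows H\"ormander's Chapter XIV route: it builds distorted Fourier transforms $F_\pm f(\xi)=\widehat{(I+VR_0(|\xi|^s\pm i0))^{-1}f}(\xi)$ on the level surfaces $M_\lambda$, proves $\|E^cf\|_{L^2}^2=(2\pi)^{-n}\|F_\pm f\|_{L^2}^2$ and $F_\pm W_\pm=\mathscr F$, and reads off $\mathrm{Ran}\,W_\pm=E^cL^2\supset\mathscr H_\mathrm{ac}\oplus\mathscr H_\mathrm{sing}$ together with $E^dL^2\subset\mathscr H_\mathrm{pp}$. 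Your Step (i) is sound given the LAP for $H$ that the paper establishes (Lemma \ref{lm81} plus Banach--Steinhaus give local boundedness and weak* continuity of $R(\lambda\pm i0):B\to B^*$ off the discrete set $\tilde\Lambda$), and Step (iii) is standard once Step (ii) is in place.

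The genuine gap is Step (ii). A factorization with the mapping properties Kato's criterion needs is \emph{not} ``afforded by Definition \ref{def1}'': that definition only gives compactness of $V:B_s^*\to B$, and the sandwiched bounds $\sup_{\lambda\in I,\eps>0}\|V_iR_0(\lambda+i\eps)V_j^*\|_{L^2\to L^2}<\infty$ do not follow formally from composing $V:B_s^*\to B$ with $R_0:B\to B_s^*$, because neither factor $|V|^{1/2}$ maps $L^2\to B$ (an $\ell^1$-type space) nor $B_s^*\to L^2$ in general. To make this work one must split the order-$s$ smoothing, writing $R_0=J_{-s/2}\tilde R_0J_{-s/2}$ and proving $|V|^{1/2}J_{-s/2}:B^*\to L^2$ with norm controlled by $(\sum_jM_jR_j)^{1/2}$ --- a nontrivial redo of the Section \ref{sec3} analysis (it happens to close at the critical exponent $p=n/s$ of \eqref{eq13}, but not for the abstract class of Definition \ref{def1}, whose characterization \eqref{eq411} constrains $VJ_{-s}$, not $|V|^{1/2}J_{-s/2}$). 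On top of that, local $H$-smoothness requires inverting $I+V_1R_0(\lambda+i0)V_2^*$ on $L^2$ uniformly for $\lambda$ in compacts of $(0,\infty)\setminus\sigma_\mathrm{pp}$ and identifying its exceptional set with the point spectrum --- a second Fredholm analysis, on $L^2$ rather than on $B$, distinct from the paper's Lemma \ref{lm81}. You explicitly flag exactly this as ``the main obstacle'' and offer only a heuristic (``absorb $V$ against the resolvent'' using compactness and dispersive decay) rather than an argument; as written, the central quantitative step of your proof is therefore asserted, not proved. The paper's distorted-Fourier-transform construction is precisely the device that circumvents the factorization problem in the $B/B^*$ framework, at the cost of the delicate measurability/trace bookkeeping of Lemma \ref{lm82} and Theorem \ref{thm74}.
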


Let $\sigma_\mathrm{pp}$ denote the pure point spectrum of $H$, and $J_\tau=(I-\Delta)^\frac \tau 2$ for $\tau\in\mathbb{R}$. Our third main result regards the eigen properties of $H$.

\begin{theorem}\label{thm13}
$\sigma_\mathrm{pp}\setminus\{0\}$ is discrete in $\mathbb{R}\setminus\{0\}$. For $\lambda\in\sigma_\mathrm{pp}\setminus\{0\}$, the associated eigenspace is finite dimensional, and each eigenfunction $u$ satisfies
\begin{equation}\label{eq16}
\|\langle\cdot\rangle^{s-\epsilon}J_{s'}u\|_{L^2}<\infty,
\end{equation}
for any $\epsilon>0$ and $s'\leq s$, where $\langle\cdot\rangle=(1+|\cdot|^2)^\frac12$.
\end{theorem}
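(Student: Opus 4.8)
The plan is to obtain all three assertions from the limiting absorption principle for $H_0$ and $H$ on $(0,\infty)$, the fractional analogue of the Agmon--H\"ormander theory for simply characteristic operators: for each $\lambda>0$ the characteristic set $\{\xi\in\mathbb{R}^n:|\xi|^s=\lambda\}$ is a sphere on which $\nabla|\xi|^s=s|\xi|^{s-2}\xi\ne0$, so $H_0$ is simply characteristic on $(0,\infty)$. I will use, from the theory developed in the earlier sections: (i) the boundary values $R_0(\lambda\pm i0)$ of $R_0(z)=(H_0-z)^{-1}$ exist as bounded operators $B\to B_s^*$, locally H\"older continuous in $\lambda\in(0,\infty)$, while for $z\notin[0,\infty)$ the operator $H_0-z$ is elliptic of order $s$ and $R_0(z):B\to B_s^*$ is bounded; (ii) the resulting limiting absorption principle for $H=H_0+V$: the boundary values $R(\lambda\pm i0):B\to B_s^*$ exist and are continuous on $(0,\infty)\setminus\mathcal E$, where $\mathcal E$ is discrete in $(0,\infty)$ (discreteness comes from the analytic Fredholm theorem: $z\mapsto R_0(z)V$ extends to an analytic family of compact operators on $B_s^*$ across $(0,\infty)$, the characteristic spheres varying analytically in the energy, and $I+R_0(z)V$ is invertible for $\Im z>0$ since $H$ is self-adjoint, so $(I+R_0(z)V)^{-1}$ is meromorphic near $(0,\infty)$ and its real poles form a discrete set).

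For \emph{discreteness and finite multiplicity}: since $V$ is short range, $V:H^s\to L^2$ is compact (as noted after Definition~\ref{def1}), so $\sigma_\mathrm{ess}(H)=\sigma_\mathrm{ess}(H_0)=[0,\infty)$ by Weyl's theorem; hence $\sigma_\mathrm{pp}(H)\cap(-\infty,0)$ is discrete in $(-\infty,0)$ with finite multiplicities, accumulating only possibly at $0$. On $(0,\infty)$: if $\lambda\notin\mathcal E$ then $R(\lambda\pm i0)$ are $B\to B^*$ and continuous near $\lambda$, so Stone's formula gives $E_H(\{\lambda\})=0$ on the dense subspace $B$, i.e. $\lambda\notin\sigma_\mathrm{pp}(H)$; thus $\sigma_\mathrm{pp}(H)\cap(0,\infty)\subset\mathcal E$ is discrete in $(0,\infty)$. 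For multiplicity, let $Hu=\lambda u$ with $\lambda\in\mathcal E$ and $u\in H^s$; then $u\in B_s^*$ and $(H_0-\lambda)u=-Vu\in B$ by Definition~\ref{def1}. Applying $R_0(\lambda+i\eta)$ to $(H_0-\lambda-i\eta)u=-Vu-i\eta u$ and letting $\eta\downarrow0$ gives $u=-R_0(\lambda+i0)(Vu)$, since $i\eta R_0(\lambda+i\eta)u\to0$ in $L^2$ ($H_0$ has no point spectrum) while $R_0(\lambda+i\eta)(Vu)\to R_0(\lambda+i0)(Vu)$ by (i); the same with $\eta\uparrow0$ yields $u=-R_0(\lambda-i0)(Vu)$, hence
\begin{equation*}
(I+\widetilde R_0(\lambda)V)u=0,\qquad \widetilde R_0(\lambda):=\tfrac12\bigl(R_0(\lambda+i0)+R_0(\lambda-i0)\bigr).
\end{equation*}
Because $V:B_s^*\to B$ is compact and $\widetilde R_0(\lambda):B\to B_s^*$ is bounded, $\widetilde R_0(\lambda)V$ is compact on $B_s^*$, so the $\lambda$-eigenspace embeds into the finite dimensional space $\ker(I+\widetilde R_0(\lambda)V)$. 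Combined with the negative part, this gives discreteness of $\sigma_\mathrm{pp}\setminus\{0\}$ in $\mathbb{R}\setminus\{0\}$ and finiteness of multiplicities.

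For the \emph{decay estimate}, fix $Hu=\lambda u$, $u\in L^2\cap H^s$, $\lambda\ne0$; by the above $u=-\widetilde R_0(\lambda)(Vu)$ when $\lambda>0$ and $u=-R_0(\lambda)(Vu)$ when $\lambda<0$. The mechanism is a bootstrap in weighted spaces. For $\lambda>0$ the essential input is a fractional Rellich property: because $u\in L^2$, the identity $(|\xi|^s-\lambda)\widehat u=-\widehat{Vu}$ forces $\widehat{Vu}$ to vanish, in the trace sense, on the sphere $\{|\xi|^s=\lambda\}$, since otherwise $\widehat u\approx\widehat{Vu}/(|\xi|^s-\lambda)$ would fail to be locally $L^2$ there; this vanishing of the radiating part is precisely what lets $\widetilde R_0(\lambda)$, applied to $Vu$, produce polynomial decay rather than merely $B^*$-membership. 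One then iterates: from $\langle\cdot\rangle^{\sigma}u\in L^2$ the short range condition (Definition~\ref{def1}, which forces $V$ to decay slightly faster than $|x|^{-1}$ in the sense of the compactness $B_s^*\to B$) upgrades $Vu$ in the weighted scale, and a weighted refinement of the limiting absorption estimate for $R_0(\lambda\pm i0)$ — gaining $s$ derivatives from ellipticity of $(-\Delta)^{s/2}-\lambda$ off the sphere, and exploiting the first-order vanishing of $|\xi|^s-\lambda$ together with the trace vanishing of $\widehat{Vu}$ near it — upgrades $u$; the iteration saturates at $\langle\cdot\rangle^{s-\epsilon}u\in L^2$, the ceiling $s$ being forced by the order of $H_0$ (the commutator $[(-\Delta)^{s/2},\langle\cdot\rangle^{\mu}]$ stops being of lower order as $\mu\uparrow s$) and by the a priori regularity $u\in H^s$. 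The factor $J_{s'}$ with $s'\le s$ in \eqref{eq16} is carried through the same iteration, being absorbed by the ellipticity off the sphere and the order-$s$ smoothing of $R_0(\lambda\pm i0)$ near it. For $\lambda<0$ there is no characteristic sphere, $H_0-\lambda$ is invertible, and the same weighted bootstrap applies directly, yielding at least the stated decay.

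The analytical heart, and the step I expect to take the most effort, is the decay part for non-integer $s$: proving the fractional Rellich property and the weighted limiting absorption estimates when $(-\Delta)^{s/2}$ is non-local, so that the weights $\langle\cdot\rangle^{\mu}$ and the operators $J_{s'}$ no longer commute with $H_0$ and the commutators $[(-\Delta)^{s/2},\langle\cdot\rangle^{\mu}]$ must be controlled directly in the $B$--$B^*$ scale; the Fredholm arguments behind discreteness and finite multiplicity are, by comparison, routine once the limiting absorption principle for $H$ has been established.
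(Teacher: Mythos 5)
Your reduction of an $L^2$ eigenfunction to the integral equation $u=-R_0(\lambda\pm i0)Vu$ --- writing $u=-R_0(\lambda+i\eta)Vu-i\eta R_0(\lambda+i\eta)u$ for $u\in H^s=\mathrm{Dom}(H_0)$ and letting $\eta\downarrow 0$, with $i\eta R_0(\lambda+i\eta)u\to 0$ in $L^2$ because $H_0$ has no point spectrum, and $R_0(\lambda+i\eta)Vu\to R_0(\lambda+i0)Vu$ weak* in $B^*$ --- is correct, and it is genuinely simpler than the paper's route, which instead runs a Fredholm duality argument in the annihilator space $A_\lambda$ (Section \ref{sec9}) in order to cover generalized eigenfunctions that are a priori only in $B_s^*$; since Theorem \ref{thm13} concerns honest $L^2$ eigenfunctions, your shortcut is legitimate, and the finite--multiplicity conclusion from compactness of $R_0(\lambda\pm i0)V$ on $B_s^*$ matches Proposition \ref{pro72}(i).

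The genuine gap is in the discreteness argument. You derive discreteness of the exceptional set $\mathcal{E}$ from the analytic Fredholm theorem, asserting that $z\mapsto R_0(z)V$ extends to an analytic family of compact operators \emph{across} $(0,\infty)$. It does not: by \eqref{eq73} the boundary values from $\mathbb{C}^+$ and $\mathbb{C}^-$ differ by $2\pi i\,\mathscr{F}^{-1}(\delta(|\xi|^s-\lambda)\hat f)$, so $(0,\infty)$ is a branch cut and there is no single-valued analytic continuation through it; passing to a second sheet would require hypotheses (analyticity or exponential decay of $V$) far beyond the short range condition. What is actually available is only continuity of $\lambda\mapsto VR_0(\lambda\pm i0)$ up to the boundary, and continuity of a family of compact operators does not force the set where $I+VR_0(\lambda+i0)$ fails to be injective to be discrete --- it could a priori be a Cantor set or an interval. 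The paper closes this by a compactness-plus-orthogonality argument (Proposition \ref{pro72}(iii)): if distinct $\lambda_j\to\lambda\neq 0$ admit normalized solutions $u_j$, then by the weighted estimate of part (ii) these are mutually orthogonal $L^2$ eigenfunctions with uniform decay and regularity, Kolmogorov--Riesz compactness forces $u_j\to 0$ in $L^2$, Lemma \ref{lm76} then gives $Vu_j\to 0$ in $B$, hence $u_j\to 0$ in $B_s^*$, a contradiction. Some argument of this kind is indispensable and is missing from your proposal. A secondary remark on the decay part: you correctly identify the mechanism (vanishing of the trace of $\widehat{Vu}$ on $M_\lambda$, then weighted resolvent bounds), but the quantitative heart is precisely Theorem \ref{thm64}, the weighted estimate for $R_0(\lambda\pm i0)$ on such $f$ with weights satisfying $(1+t)\mu'(t)\le(s+\frac12)\mu(t)$; the ceiling comes not from commutators of $(-\Delta)^{s/2}$ with the weight but from the fact that $|\xi|^s$ is only of class $\Lambda_s$ at $\xi=0$ (handled by Littlewood--Paley in Lemma \ref{lm62}), and the paper reaches $\langle\cdot\rangle^{s+\frac12}J_su\in B^*$ in one application with the truncated weight $\mu_\epsilon(t)=(1+t)^{s+1/2}(1+\epsilon t)^{-s-1/2}$ followed by absorption via $\sum M_jR_j<\infty$, rather than by iteration. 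As written, your bootstrap is a plan rather than a proof of that step.
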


In the well known Schr\"{o}dinger case $s=2$, i.e. $H_0=-\Delta$, the term \textit{short range} physically means that $V$ decays fast enough to ensure a quantum scattering system to have locality in a large scale, for example, in regard to Theorem \ref{thm14}, that $e^{itH}$ asymptotically behaves as the  superposition of bound states $\in\mathscr{H}_\mathrm{pp}$ and a scattering state $\in\mathscr{H}_{ac}$ is what experiments have indicated. Researchers have found over the years that for $V$ to be a short range potential, the threshold of its decay rate at the infinity is the Coulomb type decay $|x|^{-1}$. For examples, the classical work \cite{D} of Dollard's shows that the wave operators do not exist if $V(x)=|x|^{-1}$; the physical background of Schr\"{o}dinger operators indicates that the embedded eigenvalues of $H=-\Delta+V$ should not occur when $V$ has short range, but Neumann-Wigner \cite{N} shows that this may not be true if $V(x)=O(|x|^{-1})$; closely relatedly, Kiselev \cite{Kis} has shown in dimension $n=1$ for the half-line Schr\"{o}dinger operators with Dirichlet boundary condition that such decay rate is sharp for the absence of singular spectrum. On the other hand for short range conditions on $V$, besides Reed-Simon \cite{RS3,RS4} and the fruitful references therein for such developed topics, we only mention a few papers which are more relevant to our work. Among early efforts, the peak of studying short range condition has been widely accepted to be the work \cite{A} of S. Agmon's, where it was proved that the compactness of the map
\begin{equation}\label{eq17}
H^2\ni u\mapsto(1+|\cdot|)^{1+\epsilon}Vu\in L^2
\end{equation}
indicates the existence and asymptotic completeness of the wave operators, mainly by establishing the so-called \textit{limiting absorption principle}:
\begin{equation}\label{eq1.8}
\sup_{\lambda>\lambda_0}\sup_{\epsilon>0}\|(H-\lambda\pm i\epsilon)^{-1}\|_{L^{2,\sigma}\rightarrow L^{2,-\sigma}}\leq C(V,\lambda_0),
\end{equation}
where $\lambda_0>0$, $\sigma>\frac12$ and $L^{2,\sigma}=\{f\in L^2;~(1+|\cdot|)^\sigma f\in L^2\}$. In an abstract point of view, since the spectral measure $dE_\lambda$ of $H$ is crucial to understanding the dynamics of $e^{itH}$, the importance of studying the resolvent of $H$ near the real axis comes from the formal fact that
\begin{equation}\label{dE}
dE_\lambda=\lim_{\epsilon\downarrow0}(2\pi i)^{-1}\left((H-\lambda-i\epsilon)^{-1}-(H-\lambda+i\epsilon)^{-1}\right).
\end{equation}
In particular, $|V(x)|\leq C(1+|x|)^{-1-\epsilon}$ verifies Agmon's condition, and such assumption is closely related to condition $\sigma>\frac12$ in the limiting absorption principle \eqref{eq1.8}. Notice that a special case of \eqref{eq1.8} is $H=H_0$ where the singularity of \eqref{dE} lies in the sphere $\{|\xi|^2=\lambda\}$, in Agmon-H\"{o}rmander \cite{AH}, such idea was pushed to a more critical situation where $L^{2,\sigma}\rightarrow L^{2,-\sigma}$ ($\sigma>\frac12$) is replaced by $B\rightarrow B^*$ by observing the sharp $L^2$ type Fourier restriction property $\mathscr{F}:~B\rightarrow L^2(\mathbb{S}^{n-1})$, while the short range scattering theory was correspondingly established in the Chapter XIV of H\"{o}rmander \cite{Hor2}. More recently Ionescu-Schlag \cite{IS}, inspired by the sharp $L^p$ Fourier restriction property (i.e. Stein-Tomas restriction theorem) $\mathscr{F}:~L^\frac{2n+2}{n+3}(\mathbb{R}^n)\rightarrow L^2(\mathbb{S}^{n-1})$ ($n>1$), as well as by an earlier effort Goldberg-Schlag \cite{go} in three dimensions, extended such ideas and established the limiting absorption principle in more appropriate interpolation spaces, resulting in a remarkable enlargement of the perturbation class for $-\Delta$ including first order perturbations.

In fact, the technicalities in Agmon \cite{A} had already worked for principal type differential operators. Moreover, Agmon-H\"{o}rmander \cite{AH} and H\"{o}rmander \cite{Hor2} had actually developed the short range scattering theory for the so-called simply characteristic operators, that is $P(D)$ defined by a real polynomial $P$ satisfying
\begin{equation}\label{eq110}
(\sum_{\alpha}|P^{(\alpha)}(\xi)|^2)^\frac12\leq C\left(1+|P(\xi)|+|\nabla P(\xi)|\right),\quad\xi\in\mathbb{R}^n,
\end{equation} 
where $P^{(\alpha)}(\xi)=\partial^\alpha P(\xi)$. A rough understanding of such condition is that, if we localize the Fourier symbol $(P(\xi)-\lambda)^{-1}$ in a unit ball centered at any $\eta\in\mathbb{R}^n$, then \eqref{eq110} ensures that such localization has some uniform property moduli the \textit{strength} of the translated $P$, i.e. the left hand side of \eqref{eq110} with $P^{(\alpha)}(\xi)$ replaced by $P^{(\alpha)}(\eta)$, while the restriction property $\mathscr{F}:~B\rightarrow L^2(M_\lambda)$ still holds where $M_\lambda=\{P(\xi)=\lambda\}$. Notice that the translation of $P(\xi)-\lambda$ generates a set of moduli polynomials that is precompact under the topology defined by strength, the localized resolvent estimates for $P$ can thus be pieced up to gain the resolvent estimate. The short range condition for a differential operator $V(x,D)$ is defined by the compactness of the map
\begin{equation*}
B_P^*\ni f\mapsto V(x,D)f\in B,
\end{equation*}
where
\begin{equation}\label{eq115}
\|f\|_{B_P^*}=\sum_\alpha\|P^{(\alpha)}(D)f\|_{B^*}.
\end{equation}
It turns out that the allowed decay for each coefficient of $V(x,D)$ is also slightly faster than $|x|^{-1}$ at the infinity (see \cite[p. 246]{Hor2}), however, the decay rate is only indicated by some integrability, which is in some sense better than the form of \eqref{eq17} where an $\epsilon$-decay appears. In particular, such theory applies to $P(D)=(-\Delta)^m$ where $m$ is any positive integer, and we note that in this case the $B_P^*$ norm in \eqref{eq115} is equivalent to the $B_{2m}^*$ norm that we define in \eqref{14}. (see Remark \ref{rrk42}.)

It is then natural to ask for the short range scattering theory of $(-\Delta)^\frac s2$ when $s>0$ is not an even number, and such study has already drawn some authors' attention, while understanding the dynamics of fractional Schr\"{o}dinger equations is of physical importance, see Laskin \cite{Las1,Las2}. Just to mention a few with respect to the research interests of the current paper, Giere \cite{g} considered functions of $-\Delta$ and obtained asymptotic completeness based on semigroup difference method, which applies to $(-\Delta)^\frac s2$ in the case $0<s<2$ perturbed by potentials decaying at the rate $(1+|x|)^{-1-\epsilon}$ in dimension $n>2+2\epsilon-s$. Kitada \cite{K} considered $s\geq1$ and obtained through eigenfunction expansion method that the asymptotic completeness for short range potentials verifying $|V(x)|\leq C(1+|x|)^{-1-\epsilon}$. (\cite{K} also obtained result for long range potentials.) There are also works specifically for $\sqrt{-\Delta}$ and relevant pseudo differential operators, see e.g. \cite{Sch,Umeda,Wei}.  

However, at the level of results, the above works in the fractional case still contain an $\epsilon$-decay assumption of $V$, unlike Agmon-H\"{o}rmander \cite{AH}, H\"{o}rmander \cite{Hor2} and Ionescu-Schlag \cite{IS} which have avoided such extra decay by considering better function spaces. Also, the range of order $s$ is limited in these works. 

The purpose of our work is to carry Agmon and H\"{o}rmander's approach to all fractional case $s>0$ for $(-\Delta)^\frac s2$, and obtain a sharper short range condition that is comparable to the case of $(-\Delta)^m$. A typical short range potential class by our Definition \ref{def1} is the collection of real-valued $V\in L_\mathrm{loc}^p$ satisfying (see Proposition \ref{prop25})
\begin{equation}\label{eq13}
p\begin{cases}
=\frac ns,\quad&\text{if}~0<s<\frac n2,\\
>2,\quad&\text{if}~s=\frac n2,\\
=2,\quad&\text{if}~s>\frac n2,
\end{cases}\quad\text{and}\quad \sum_{j>0}R_j\sup_{y\in X_j}\|V(\cdot+y)\|_{L^p(B(1))}<\infty,
\end{equation}
where $B(1)$ is the unit ball with center $0$. In particular, a real-valued $V$ satisfying
\begin{equation}\label{eq112}
|V(x)|\leq C(1+|x|)^{-1-\epsilon_j},\quad x\in X_j,~\epsilon_j\in(0,1),~\sum_{j>0}R_j^{-\epsilon_j}<\infty,
\end{equation}
verifies condition \eqref{eq13}, for example $\epsilon_j=1/j^{1-\epsilon}$ for any $0<\epsilon<1$. Moreover, the decay assumption \eqref{eq112} is sharp for the existence of wave operators by the following proposition (see Section \ref{sec4}):

\begin{proposition}\label{prop15}
Suppose $J_0\in\mathbb{N}_+$ and there is a sequence $\{\epsilon_j\}_{j>J_0}$ with $\epsilon_j\in[-1,+\infty)$ such that
\begin{equation}\label{eq114}
\sum_{j>J_0}R_j^{-\epsilon_j}=\infty.
\end{equation}
If $V\in L^\infty$ is real-valued and satisfies
\begin{equation}
\kappa_1(1+|x|)^{-1-\epsilon_j}\leq V(x)\leq \kappa_2(1+|x|)^{-1-\epsilon_j},\quad x\in X_j,\,j>J_0,
\end{equation}
or satisfies
\begin{equation}
-\kappa_2(1+|x|)^{-1-\epsilon_j}\leq V(x)\leq-\kappa_1(1+|x|)^{-1-\epsilon_j},\quad x\in X_j,\,j>J_0,
\end{equation}
for some $\kappa_2\geq \kappa_1>0$, then the wave operators \eqref{eq1.3} do not exist for all $s>0$.
\end{proposition}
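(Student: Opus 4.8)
The plan is to follow the classical Dollard-type scheme for disproving existence of wave operators, fed by the stationary-phase localization of $e^{-it(-\Delta)^{s/2}}$. We argue by contradiction: assume the limit $W_+u=\lim_{t\to+\infty}e^{itH}e^{-itH_0}u$ in \eqref{eq1.3} exists for every $u\in L^2$; then $W_+$ is automatically isometric and satisfies $e^{-itH}W_+=W_+e^{-itH_0}$, and $\|r_t\|_{L^2}\to0$ for $r_t:=(W_+-I)e^{-itH_0}u$. Fix $u$ with $\hat u\in C_c^\infty$ supported in a small ball around a point $\xi_0\neq0$ and $\|u\|_{L^2}=1$. From the Duhamel identity $e^{iTH}e^{-iTH_0}u-u=i\int_0^T e^{itH}Ve^{-itH_0}u\,dt$, pairing with $W_+u$, letting $T\to+\infty$, and using the intertwining relation, I would obtain that $\int_0^\infty\langle Ve^{-itH_0}u,W_+e^{-itH_0}u\rangle\,dt$ converges; writing $W_+e^{-itH_0}u=e^{-itH_0}u+r_t$ and taking real parts,
\begin{equation*}
\int_0^\infty\big(a(t)+\Re b(t)\big)\,dt\ \text{ converges},\qquad a(t):=\int_{\mathbb R^n}V|e^{-itH_0}u|^2\,dx\in\mathbb R,\quad b(t):=\langle Ve^{-itH_0}u,r_t\rangle .
\end{equation*}
The goal is then to contradict this by showing $\int_0^\infty a(t)\,dt=+\infty$ in a way the error $\Re b$ cannot cancel.

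For the dynamical input, one applies non-stationary phase to the oscillatory integral defining $e^{-itH_0}u$, whose $\xi$-phase gradient is $x-t\nabla_\xi|\xi|^s$, with $\nabla_\xi|\xi|^s=s|\xi|^{s-2}\xi$ nonzero for $\xi\neq0$ for every $s>0$ — this is the only property of the order $s$ that is used (the radial degeneracy of the Hessian at $s=1$ affects only the $L^\infty$ decay rate, which is irrelevant here). For $\hat u$ supported in a small enough ball this yields that $e^{-itH_0}u$ is concentrated, up to an $O_N(t^{-N})$ $L^2$-tail, in the annulus $A_t=\{(1-\eta)v_0 t<|x|<(1+\eta)v_0 t\}$ with $v_0=s|\xi_0|^{s-1}>0$ and $\eta$ small, so $\|e^{-itH_0}u\|_{L^2(A_t)}\geq1-O_N(t^{-N})$. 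I would restrict throughout to the indices $\mathcal J_0=\{j>J_0:\epsilon_j\leq1\}$, which by \eqref{eq114} still satisfy $\sum_{j\in\mathcal J_0}R_j^{-\epsilon_j}=\infty$ (since $\sum_{\epsilon_j>1}R_j^{-\epsilon_j}\leq\sum_j2^{-(j-1)}<\infty$) and along which all dyadic constants stay bounded. For $j\in\mathcal J_0$ put $S_j=\{t>0:v_0t\in(R_{j-1},R_j)\}$ and let $\tilde S_j\subset S_j$ be the sub-interval where $A_t\subset X_j$ (so $|S_j|\sim|\tilde S_j|\sim R_j/v_0$): for $t\in\tilde S_j$, the lower bound $V\geq\kappa_1(1+|x|)^{-1-\epsilon_j}$ on $X_j$ with the mass concentration gives $a(t)\gtrsim R_j^{-1-\epsilon_j}$, while $|V|\leq\kappa_2(1+|x|)^{-1-\epsilon_{j'}}$ on the $O(1)$ shells $X_{j'}$ meeting $A_t$ gives $\|Ve^{-itH_0}u\|_{L^2}\lesssim\sup_{A_t}|V|+\|V\|_{L^\infty}O(t^{-N})$; the region $\{|x|\leq R_{J_0}\}$, where $V$ need not be sign-definite, and the ``bad'' times outside $\bigcup\tilde S_j$ only contribute $O(t^{-N})$-type errors. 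A routine dyadic summation would then give, with $\mathcal J(T_1,T_2):=\{j\in\mathcal J_0:\tilde S_j\subset(T_1,T_2)\}$ and $0<c\leq C$ depending only on $n,s,\xi_0,\eta,\kappa_1,\kappa_2,\|V\|_{L^\infty}$,
\begin{equation*}
\int_{T_1}^{T_2}a(t)\,dt\ \geq\ c\sum_{j\in\mathcal J(T_1,T_2)}R_j^{-\epsilon_j}-C,\qquad \int_{T_1}^{T_2}\|Ve^{-itH_0}u\|_{L^2}\,dt\ \leq\ C\sum_{j\in\mathcal J(T_1,T_2)}R_j^{-\epsilon_j}+C.
\end{equation*}

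To finish, since $|b(t)|\leq\|Ve^{-itH_0}u\|_{L^2}\|r_t\|_{L^2}$ and $\|r_t\|_{L^2}\to0$, the two inequalities combine to
\begin{equation*}
\int_{T_1}^{T_2}\big(a(t)+\Re b(t)\big)\,dt\ \geq\ \Big(c-C\sup_{t\geq T_1}\|r_t\|_{L^2}\Big)\sum_{j\in\mathcal J(T_1,T_2)}R_j^{-\epsilon_j}\ -\ C\big(1+\sup_{t\geq T_1}\|r_t\|_{L^2}\big).
\end{equation*}
Taking $T_1$ so large that $C\sup_{t\geq T_1}\|r_t\|_{L^2}<c/2$, and then $T_2\to+\infty$, hypothesis \eqref{eq114} forces $\sum_{j\in\mathcal J(T_1,T_2)}R_j^{-\epsilon_j}\to+\infty$, hence $\int_{T_1}^{T_2}(a+\Re b)\,dt\to+\infty$; but $\int_{T_1}^{T_2}(a+\Re b)\,dt=\int_0^{T_2}(a+\Re b)\,dt-\int_0^{T_1}(a+\Re b)\,dt$ has a finite limit as $T_2\to+\infty$ because $\int_0^\infty(a+\Re b)\,dt$ converges — a contradiction. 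Thus $W_+$ cannot exist, and non-existence of $W_-$ follows from the time-reversal symmetry $\overline{e^{-itH}v}=e^{itH}\bar v$ (as $V$ is real and $|\xi|^s$ even), or by rerunning the argument as $t\to-\infty$. I expect the main obstacle to be the pair of displayed inequalities in the second step: the \emph{same} divergent series $\sum R_j^{-\epsilon_j}$ must dominate $\int a$ from below and $\int\|Ve^{-itH_0}u\|_{L^2}$ from above (up to additive constants), so that the $o(1)$-small quantity $\sup_{t\geq T_1}\|r_t\|_{L^2}$ genuinely beats the error $\Re b$; keeping the dyadic bookkeeping uniform in $j$ is what forces the reduction to $\mathcal J_0$ and what makes the uniform non-degeneracy of $\nabla_\xi|\xi|^s$ (valid for all $s>0$) important. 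The remaining ingredients — the Duhamel/pairing identity and the stationary-phase localization — are standard.
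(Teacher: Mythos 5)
Your proposal is correct and is essentially the paper's own argument: both run Cook's/Duhamel's identity paired with $W_+u$, split off the main term $\langle Ve^{-itH_0}u,e^{-itH_0}u\rangle$, bound it below by $R_j^{-1-\epsilon_j}$ on dyadic time windows using non-stationary-phase concentration of $e^{-itH_0}u$ in $\{|x|\sim v_0t\}$, control the remainder by $\|Ve^{-itH_0}u\|_{L^2}\cdot o(1)$ together with a matching dyadic upper bound for $\|Ve^{-itH_0}u\|_{L^2}$, and sum over $j$ to contradict \eqref{eq114}. Your reduction to the indices with $\epsilon_j\leq1$ is a sensible extra precaution for uniformity of the dyadic constants (the paper instead normalizes $|\nabla|\xi|^s|\in(\frac56,\frac67)$ and tracks explicit numerical constants), but the route is the same.
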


A typical example for Proposition \ref{prop15} is taking $\epsilon_j=1/j$ in a subsequence. We also mention that recently, Ishida-Wada \cite[Theorem 1.6]{IW} has shown that the wave operators do not exist if $V(x)=\kappa\langle x\rangle^{-\gamma}$ for all $\kappa\in\mathbb{R}$, $\gamma\in(0,1]$ and $s>0$. Now the decay assumption \eqref{eq112} and Proposition \ref{prop15} give a sharper decay threshold for $V$ with respect to the existence and non-existence of the wave operators, and somehow surprisingly, such threshold decays faster than $|x|^{-1}$ at the infinity.

The framework of the proofs in this paper mainly follows that of H\"{o}rmander \cite[Chapter XIV]{Hor2} established for simply characteristic operators. However, when $s$ is not an even integer, there are two main new difficulties. 

First, the symbol $|\xi|^s$ of $H_0=(-\Delta)^\frac s2$ is not a polynomial. Notice that Agmon-H\"{o}rmander's theory for polynomial symbols $P$ highly relies on the fact that $P(D)$ is local, and on the simple characteristic condition \eqref{eq110} as well as the polynomial strength topology mentioned above, we have to find essential substitutions for such properties or concepts when considering $H_0$, and there will be two main kinds of arguments relevant. In the analysis of detailed characterization for $V$ to be short range in Section \ref{sec3}, we are indebted to the well known kernel estimate of the Bessel potential operator $(I-\Delta)^{-\frac s2}$ to substitute the locality property by some exponential decay tail estimate (see \eqref{eq411}), which will serve as an acceptable error in the boundedness and compactness discussions. In establishing the boundedness for boundary values $(H_0-\lambda\pm i0)^{-1}$ of the free resolvent (i.e. limiting absorption principle in the free case) in Section \ref{sec5} and Section \ref{sec6}, in order to piece up localized resolvent estimates, we use "ellipticity" of $|\xi|^s$ as already indicated in \eqref{14}, rather than the strength in \eqref{eq110}, to investigate compactness in some continuous type function spaces instead of under the polynomial strength topology which does not make sense anymore. 

The second difficulty is the limited smoothness of the symbol $|\xi|^s$ at $0$. Such a fact will be much relevant in Section \ref{sec6} to the study of weighted version of the free resolvent estimates, which shall be used to prove the eigen properties of $H=H_0+V$. As an interesting result, the eigenfunctions of $H$ can only be shown to have finite decay as \eqref{eq16} indicates. On the other hand, however, for any simply characteristic operator $P(D)$, every eigenfunction $u$ of $P(D)+V(x,D)$ with short range perturbation satisfies the rapid decay (see \cite[Corollary 14.5.6]{Hor2})
\begin{equation}\label{equ113}
\|\langle\cdot\rangle^rP^{(\alpha)}(D)u\|_{L^2}<\infty,
\end{equation} 
for all $r>0$ and $\alpha\in\mathbb{N}_0^n$, which is way more stronger than \eqref{eq16}. The Littlewood-Paley theory for inhomogeneous Lipschitz functions shall be used to treat the finite smoothness issue, due to which the reason for \eqref{eq16} to hold is (see Section \ref{sec7})
\begin{equation}\label{eq1.18}
\|\langle\cdot\rangle^{s+\frac12}J_su\|_{B^*}<\infty,
\end{equation}
where $u$ in priori is only assumed to be an eigenfunction in $B_s^*$, and \eqref{eq1.18} really lies in a clutch for us to show the genuine $L^2$ eigen properties of $H$.

The rest of this paper is organized as follows. In Section \ref{sec3}, we give equivalent conditions for a multiplication operator $V$ to be bounded and compact from $B_s^*$ to $B$. In Section \ref{sec4}, we give the proof of Theorem \ref{thm51} where potential $V$ is not necessarily required to be of short range (see Remark \ref{rk31}), and we also prove Proposition \ref{prop15}. In Section \ref{sec5}, we study the behavior of the free resolvent $R_0(z)=(H_0-z)^{-1}$ near the real axis except $0$ from above and below, whose boundary values $R_0(\lambda\pm i0)$ shall be realized as bounded operators from $B$ to $B_s^*$. In Section \ref{sec6}, we study the weighted estimates for $R_0(\lambda\pm i0)$ acting on functions whose Fourier transform vanish on the sphere $\{|\xi|^s=\lambda\}$, which will be used to study the eigen properties of $H$, and the Littlewood-Paley theory is used to handle the non-smoothness of $|\xi|^s$. In Section \ref{sec7}, we study a set $\Lambda\subset\mathbb{R}\setminus\{0\}$, which will be shown discrete and a subset of the non-zero pure point spectrum $\sigma_\mathrm{pp}\setminus\{0\}$ of $H$, while the associated eigenfunctions in a special form are proved to have finite regularity and finite decay. In Section \ref{sec8}, we construct distorted Fourier transforms to prove Theorem \ref{thm14}, where the discreteness of $\Lambda$ proved in Section \ref{sec7} is crucial for the construction, and we will see after the proof of Theorem \ref{thm14} that actually $\Lambda=\sigma_\mathrm{pp}\setminus\{0\}$ holds. In Section \ref{sec9}, we prove Theorem \ref{thm13} by applying the Fredholm theory in an appropriate space to claim that all eigenfunctions have the special form considered in Section \ref{sec7}.

The notations that we shall frequently use are the following. Let $\mathscr{F}f(\xi)=\hat{f}(\xi)=\int_{\mathbb{R}^n}e^{-i\xi\cdot x}f(x)dx$ denote the Fourier transform of $f$. We use $\chi_j$ to denote the characteristic function of $X_j$ defined in \eqref{equ12}. Let
\begin{equation*}
\tilde{X}_j=\overline{\bigcup_{|k-j|\leq1}X_k},
\end{equation*}
we use $\tilde{\chi}_j$ to denote the characteristic function of the interior of $\tilde{X}_j$. For any $\tau\in\mathbb{R}$, let $\langle\cdot\rangle^\frac \tau 2=(1+|\cdot|^2)^\frac \tau 2$ and $J_\tau=(I-\Delta)^\frac \tau 2=\mathscr{F}^{-1}(\langle\cdot\rangle^\tau)\ast$ acting on $\mathscr{S}'$, thus $\|v\|_{B_s^*}=\|J_sv\|_{B^*}$ if $s>0$ as mentioned before. We use
\begin{equation}\label{e115}
\langle f,g \rangle=\int_{\mathbb{R}^n}f\bar{g}dx
\end{equation}
to denote the duality pairing whenever $f\in B^*$, $g\in B$ or $f,g\in L^2$. Let $\mathbb{R}_+,\mathbb{N}_+$ be the positive real numbers and positive integers, $\mathbb{C}^+=\{z\in\mathbb{C};~\mathrm{Im}\,z\geq0\}$ and similarly define $\mathbb{C}^-$. We also use $B(r)$ to denote an open ball with radius $r$ and center $0$.

\section{Multiplication Operators from $B_s^*$ to $B$}\label{sec3}

In this section, we explore the equivalent conditions for $V$ to define a bounded and a compact multiplication operator respectively from $B_s^*$ to $B$, by analyzing its local behavior on each annulus $X_j$, and the main results are Theorem \ref{thm43} and Theorem \ref{thm44}. First recall that when $s>0$, the kernel $G_s=\mathscr{F}^{-1}(\langle\cdot\rangle^{-s})$ of the Bessel potential operator $J_{-s}$ satisfies (see e.g. \cite{gra2}) for some $a, C>0$ that
\begin{equation}\label{eq41}
\begin{cases}
G_s(x)\geq C^{-1},\quad&|x|\leq1,\\
0\leq G_s(x)\leq Ce^{-a|x|},&|x|>1.
\end{cases}
\end{equation}

\begin{lemma}\label{lm41}
Suppose $s>0$, $V\in L_\mathrm{loc}^2$ defines a bounded multiplication operator from $B_s^*$ to $B$, and denoted by $T_{j,k}=\chi_jVJ_{-s}\chi_k$. Then there exist $C,c>0$ such that
\begin{equation}\label{eq42}
\|T_{j,k}\|_{L^1(X_k)\rightarrow L^2(X_j)}\leq Ce^{-cR_{\max\{j,k\}}},\quad|j-k|\geq2.
\end{equation}
\end{lemma}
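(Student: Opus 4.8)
The plan is to reduce the operator norm estimate to a pointwise estimate on the kernel of $T_{j,k}=\chi_jVJ_{-s}\chi_k$, and then to exploit the exponential decay tail of the Bessel kernel $G_s$ in \eqref{eq41}. The kernel of $T_{j,k}$ is $(x,y)\mapsto \chi_j(x)V(x)G_s(x-y)\chi_k(y)$. When $|j-k|\geq 2$, the supports $X_j$ and $X_k$ are separated: every $x\in X_j$ and $y\in X_k$ satisfy $|x-y|\geq c\,R_{\max\{j,k\}}$ for a fixed $c>0$ (this follows from the dyadic definition \eqref{equ12} of the annuli $X_j$, since the inner radius of the farther annulus exceeds twice the outer radius of the nearer one up to a constant). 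In particular $|x-y|>1$ on the relevant region once $\max\{j,k\}$ is large, so the second line of \eqref{eq41} applies and gives $0\le G_s(x-y)\le Ce^{-a|x-y|}\le Ce^{-c'R_{\max\{j,k\}}}e^{-(a/2)|x-y|}$ after splitting the exponent; the finitely many small values of $\max\{j,k\}$ are absorbed into the constant.

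Next I would estimate the $L^1(X_k)\to L^2(X_j)$ norm. For $g$ supported in $X_k$,
\begin{equation*}
\|T_{j,k}g\|_{L^2(X_j)}^2=\int_{X_j}|V(x)|^2\Bigl|\int_{X_k}G_s(x-y)g(y)\wrt y\Bigr|^2\wrt x\le \|g\|_{L^1(X_k)}^2\int_{X_j}|V(x)|^2\sup_{y\in X_k}|G_s(x-y)|^2\wrt x,
\end{equation*}
so by the exponential bound just obtained this is controlled by $e^{-c'R_{\max\{j,k\}}}\|g\|_{L^1(X_k)}^2\int_{X_j}|V(x)|^2 e^{-a|x|}\wrt x$ up to a constant (using $|x-y|\gtrsim|x|$ when $j$ is the farther index, and a symmetric bound when $k$ is farther, where one instead keeps $e^{-a|y|}$ and uses boundedness of $V$ locally). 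The remaining point is that $\int_{\mathbb{R}^n}|V(x)|^2 e^{-a|x|}\wrt x<\infty$, which needs to be extracted from the hypothesis that $V:B_s^*\to B$ is bounded; this is the step requiring a small argument rather than a routine estimate.

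To get that finiteness, I would test the boundedness of $V$ against a suitable family of functions in $B_s^*$. Concretely, localizing to a unit ball $B(z)$ centered at $z\in X_j$, one has $\|V\|_{L^2(B(z))}\lesssim \|V\|_{B_s^*\to B}\cdot R_j^{-1/2}$ by applying $V$ to a bump function $\varphi_z$ concentrated near $z$: indeed $\|\varphi_z\|_{B_s^*}$ is uniformly bounded (a translate of a fixed Schwartz function has $J_s\varphi_z\in B^*$ with norm $\lesssim R_j^{-1/2}$ on the annulus where it lives, up to the Bessel tail), while $\|V\varphi_z\|_B\gtrsim R_j^{1/2}\|V\|_{L^2(B(z))}$ since $B(z)\subset X_j$ contributes the weight $R_j^{1/2}$. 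Summing these local $L^2$ bounds over a bounded-overlap cover of $X_j$ gives $\|V\|_{L^2(X_j)}\lesssim R_j^{(n-1)/2}$ (polynomial growth), whence $\int|V|^2e^{-a|x|}\lesssim\sum_j R_j^{n-1}e^{-aR_{j-1}}<\infty$. Combining with the previous paragraph yields $\|T_{j,k}\|_{L^1(X_k)\to L^2(X_j)}\le C e^{-cR_{\max\{j,k\}}}$ after relabeling constants, which is \eqref{eq42}.

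The main obstacle I anticipate is the bookkeeping in the asymmetric case $k$ farther than $j$ versus $j$ farther than $k$: in one case the Gaussian-type decay in $|x-y|$ converts directly into decay in $R_{\max\{j,k\}}$ together with an integrable weight against $|V|^2$, but in the other case one must be careful that the $L^1\to L^2$ normalization still produces the weight on the correct variable and that the polynomial growth of $\|V\|_{L^2(X_j)}$ (rather than an a priori unknown growth) is genuinely available from the boundedness hypothesis alone. Everything else is a routine splitting of the exponential and summation of a convergent series.
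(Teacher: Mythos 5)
Your proposal is correct and follows essentially the paper's argument: the separation of the annuli for $|j-k|\ge 2$ converts the exponential tail of the Bessel kernel $G_s$ into the factor $e^{-cR_{\max\{j,k\}}}$, and the required polynomial bound on $\|V\|_{L^2(X_j)}$ is extracted from the $B_s^*\to B$ boundedness by testing $V$ against compactly supported bump functions whose $B_s^*$ norms are controlled. The only cosmetic difference is that the paper tests against a single dilated cutoff $\phi(\cdot/R_j)$, getting $\|V\|_{L^2(X_j)}\lesssim R_j^{n/2}$ in one step, whereas you use unit-scale bumps summed over a bounded-overlap cover; both give polynomial growth, which the exponential factor absorbs.
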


\begin{proof}
If $k-j\geq2$, $x\in X_j$ and $y\in X_k$, then we have $x-y\in\tilde{X}_k$, thus by \eqref{eq41} that
\begin{equation*}
0\leq G_s(x-y)\leq Ce^{-aR_{k-2}}=Ce^{-\frac a4R_k}.
\end{equation*}
Therefore if $f\in L^1(X_k)$, then
\begin{equation}\label{eq44}
|T_{j,k}f(x)|=\left|V(x)\int_{X_k}G_s(x-y)f(y)dy\right|\leq Ce^{-\frac a4R_k}|V(x)|\|f\|_{L^1(X_k)},\quad x\in X_j.
\end{equation}
Take $\phi\in C_c^\infty$ with $\phi(x)\equiv1$ when $|x|<1$, and set $\phi_j(\cdot)=\phi(\frac{\cdot}{R_j})$, we have $\phi_j\equiv1$ on $X_j$, and by \eqref{eq44} that
\begin{equation}\label{eq45}
\begin{split}
\|T_{j,k}f\|_{L^2(X_j)}\leq&Ce^{-\frac a4R_k}\|V\phi_j\|_{L^2}\|f\|_{L^1(X_k)}\\
\leq&Ce^{-\frac a4R_k}\|V\phi_j\|_B\|f\|_{L^1(X_k)}\\
\leq&Ce^{-\frac a4R_k}\|V\|_{B_s^*\rightarrow B}\|J_s\phi_j\|_{B^*}\|f\|_{L^1(X_k)}.
\end{split}
\end{equation} 
Notice that
\begin{equation*}
\begin{split}
\|J_s\phi_j\|_{B^*}\leq\|J_s\phi_j\|_{L^2}\leq CR_j^\frac n2\leq CR_k^\frac n2,
\end{split}
\end{equation*}
\eqref{eq42} then holds when $k-j\geq2$, and the other case is parallel.
\end{proof}

\begin{corollary}\label{cor42}
If $s>0$, then $J_{-s}$ is bounded in $B^*$. Consequently, we have $B_{s'}^*\hookrightarrow B_s^*$ if $s'>s\geq0$.
\end{corollary}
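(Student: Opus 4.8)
The plan is to exploit the convolution representation $J_{-s}v=G_s\ast v$ together with the kernel bounds \eqref{eq41}, and to run a one-sided Schur test adapted to the weighted norm of $B^*$. First I would record the elementary facts that $G_s\geq0$ and $\|G_s\|_{L^1}=1$ (since $\int G_s=\mathscr{F}G_s(0)=1$), and that, thanks to the exponential tail in \eqref{eq41}, the tempered distribution $G_s\ast v$ in fact lies in $L_\mathrm{loc}^2$ for every $v\in B^*$. Then, writing $\chi_jJ_{-s}v=\sum_{k>0}\chi_j\bigl(G_s\ast(\chi_kv)\bigr)$, I would estimate the $L^2\to L^2$ operator norm $a_{jk}$ of the map $u\mapsto\chi_j\bigl(G_s\ast(\chi_ku)\bigr)$. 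For $|j-k|\leq1$, Young's inequality gives $a_{jk}\leq\|G_s\|_{L^1}=1$. For $|j-k|\geq2$ one has $|x-y|>\tfrac14R_{\max\{j,k\}}\geq1$ whenever $x\in X_j$ and $y\in X_k$ (precisely the separation used in the proof of Lemma \ref{lm41}); combining it with the tail bound $G_s(x-y)\leq Ce^{-a|x-y|}$, Cauchy--Schwarz in the convolution variable, and $\|G_s\|_{L^1}=1$, one gets $a_{jk}\leq Ce^{-cR_{\max\{j,k\}}}$ with uniform constants $C,c>0$.

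With these bounds, the definition of $B^*$ gives $\|\chi_kv\|_{L^2}\leq R_k^{1/2}\|v\|_{B^*}$, hence $\|\chi_jJ_{-s}v\|_{L^2}\leq\sum_{k>0}a_{jk}\|\chi_kv\|_{L^2}\leq\bigl(\sum_{k>0}a_{jk}R_k^{1/2}\bigr)\|v\|_{B^*}$, and it remains to prove $\sup_{j>0}R_j^{-1/2}\sum_{k>0}a_{jk}R_k^{1/2}<\infty$. I would split this sum according to $|k-j|\leq1$ (bounded, since $R_{j\pm1}$ is comparable to $R_j$), $k\leq j-2$ (where $a_{jk}\leq Ce^{-cR_j}$ and $\sum_{k\leq j-2}R_k^{1/2}\leq CR_j^{1/2}$ by geometric summation, so the contribution is $\leq Ce^{-cR_j}\leq C$), and $k\geq j+2$ (where $a_{jk}\leq Ce^{-cR_k}$ and $\sum_{k\geq1}e^{-cR_k}R_k^{1/2}<\infty$ since $R_k=2^{k-1}$, so the contribution is $\leq CR_j^{-1/2}\leq C$). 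This yields $\|J_{-s}v\|_{B^*}\leq C\|v\|_{B^*}$, i.e.\ $J_{-s}$ is bounded on $B^*$.

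For the consequence $B_{s'}^*\hookrightarrow B_s^*$ with $s'>s\geq0$, I would use the semigroup identity $J_aJ_b=J_{a+b}$ for Bessel potentials, valid on $\mathscr{S}'$. If $f\in B_{s'}^*$, i.e.\ $J_{s'}f\in B^*$, then $J_sf=J_{-(s'-s)}\bigl(J_{s'}f\bigr)$ in $\mathscr{S}'$; since $s'-s>0$, the first part shows $J_{-(s'-s)}$ is bounded on $B^*$, so $J_sf\in B^*\subset L_\mathrm{loc}^2$ and $\|f\|_{B_s^*}=\|J_sf\|_{B^*}\leq C\|J_{s'}f\|_{B^*}=C\|f\|_{B_{s'}^*}$; the case $s=0$ (with $J_0=I$) also recovers $B_{s'}^*\hookrightarrow B^*$.

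The one genuinely delicate point I expect is the weighted Schur summation, namely verifying that the off-diagonal factor $e^{-cR_{\max\{j,k\}}}$ dominates the weight ratio $(R_k/R_j)^{1/2}$ uniformly in $j$ as $k\to\infty$; this works precisely because $R_k=2^{k-1}$ grows exponentially, so that $\sum_ke^{-cR_k}R_k^{1/2}$ converges. No analytic input beyond the kernel estimate \eqref{eq41} (already used in the proof of Lemma \ref{lm41}) and the normalization $\|G_s\|_{L^1}=1$ is needed.
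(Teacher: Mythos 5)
Your proof is correct and is essentially the same as the paper's: both decompose $v=\sum_k\chi_kv$ over the annuli, handle the near-diagonal blocks by the $L^1$-normalized kernel (Young's inequality), handle the off-diagonal blocks via the exponential tail of $G_s$ in \eqref{eq41} beating the polynomial weights $R_k^{1/2}$, and then deduce the embedding $B_{s'}^*\hookrightarrow B_s^*$ from the semigroup identity $J_s=J_{-(s'-s)}J_{s'}$. The only cosmetic difference is that you phrase the off-diagonal decay as $L^2\to L^2$ block norms $a_{jk}$ in a weighted Schur summation, while the paper reuses the $L^1(X_k)\to L^2(X_j)$ bounds of Lemma \ref{lm41} with $V=\chi_j$.
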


\begin{proof}
The fact that that $J_{-s}$ maps $B^*$ to $L_\mathrm{loc}^2$ follows from the proof. First notice that \eqref{eq42} is still true if $V=\chi_j$, because $\|V\phi_j\|_{L^2}\leq CR_k^\frac n2$ holds in \eqref{eq45} obviously. Now it suffices to consider the proof of
\begin{equation}\label{eq47}
R_j^{-\frac12}\|J_{-s}u\|_{L^2(X_j)}\leq C\|u\|_{B^*},\quad j\geq1.
\end{equation}
By \eqref{eq42}, 
\begin{equation}
\begin{split}
\|J_{-s}u\|_{L^2(X_j)}\leq&\sum_{|k-j|\geq2}\|J_{-s}\chi_ku\|_{L^2(X_j)}+\sum_{|k-j|\leq1}\|J_{-s}\chi_ku\|_{L^2(X_j)}\\
\leq&C\sum_{|k-j|\geq2}e^{-cR_{\max\{j,k\}}}\|u\|_{L^1(X_k)}+\sum_{|k-j|\leq1}\|J_{-s}\chi_ku\|_{L^2(X_j)}.
\end{split}
\end{equation}
It is easy to deduce
\begin{equation*}
\sum_{|k-j|\geq2}e^{-cR_{\max\{j,k\}}}\|u\|_{L^1(X_k)}\leq\sum_{k\geq1}e^{-c'R_k}\|u\|_{B^*}\leq C\|u\|_{B^*}.
\end{equation*}
On the other hand,
\begin{equation*}
R_j^{-\frac12}\sum_{|k-j|\leq1}\|J_{-s}\chi_ku\|_{L^2(X_j)}\leq R_j^{-\frac12}\sum_{|k-j|\leq1}\|\chi_ku\|_{L^2(\mathbb{R}^n)}\leq C\|u\|_{B^*}.
\end{equation*}
Thus \eqref{eq47} holds.
\end{proof}

We now characterize the boundedness of multiplication operators from $B_s^*$ to $B$.

\begin{theorem}\label{thm43}
Let $s>0$, $V\in L_\mathrm{loc}^2$. Then $V$ defines a bounded multiplication operator from $B_s^*$ to $B$, if and only if there exist $c_1,c_2>0$ and $M_j\geq0$ such that for $j\geq1$ we have
\begin{equation}\label{eq411}
\|VJ_{-s}f\|_{L^2(X_j)}\leq M_j\left(\|f\|_{L^2(\tilde{X}_j)}+e^{-c_2R_j}\|e^{-c_1|\cdot|}f\|_{L^1(\tilde{X}_j^c)}\right),\quad f\in B^*,
\end{equation}
and
\begin{equation}\label{eq412}
\sum_{j>0}M_jR_j<\infty.
\end{equation}
\end{theorem}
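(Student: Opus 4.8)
The plan is to prove the two implications separately, and in each direction to exploit the dyadic decomposition $B^* = \bigcup X_j$ together with the exponential-tail estimate \eqref{eq42} from Lemma \ref{lm41} to control the off-diagonal pieces $\chi_j V J_{-s}\chi_k$ with $|j-k|\geq 2$.

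\emph{Sufficiency.} Assume \eqref{eq411} and \eqref{eq412} hold. Given $f\in B_s^*$, write $u=J_s f$, so $\|u\|_{B^*}=\|f\|_{B_s^*}$, and we must bound $\|Vf\|_B = \|V J_{-s}u\|_B = \sum_{j>0} R_j^{1/2}\|V J_{-s}u\|_{L^2(X_j)}$. I would apply \eqref{eq411} to each term, splitting the right-hand side into the ``local'' part $M_j\|u\|_{L^2(\tilde X_j)}$ and the ``tail'' part $M_j e^{-c_2 R_j}\|e^{-c_1|\cdot|}u\|_{L^1(\tilde X_j^c)}$. For the local part, $\sum_j R_j^{1/2} M_j \|u\|_{L^2(\tilde X_j)}$: since $\|u\|_{L^2(\tilde X_j)}\le \sum_{|k-j|\le1}\|u\|_{L^2(X_k)} \le C R_j^{1/2}\|u\|_{B^*}$ by definition of the $B^*$ norm, this is bounded by $C\big(\sum_j M_j R_j\big)\|u\|_{B^*}$, which is finite by \eqref{eq412}. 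For the tail part, $\|e^{-c_1|\cdot|}u\|_{L^1(\tilde X_j^c)} \le \sum_{k\ge1}\|e^{-c_1|\cdot|}u\|_{L^1(X_k)} \le C\sum_{k\ge1} R_k^{n} e^{-c_1 R_{k-1}}\|u\|_{L^2(X_k)} \le C'\|u\|_{B^*}$ (using Cauchy--Schwarz on each annulus and summing the convergent series $\sum_k R_k^{n+1/2}e^{-c_1 R_{k-1}}$), so $\sum_j R_j^{1/2}M_j e^{-c_2 R_j}\|e^{-c_1|\cdot|}u\|_{L^1(\tilde X_j^c)} \le C\big(\sum_j M_j R_j^{1/2} e^{-c_2 R_j}\big)\|u\|_{B^*}$, and the series $\sum_j M_j R_j^{1/2}e^{-c_2 R_j}$ converges because $M_j R_j$ is summable (hence bounded) by \eqref{eq412} and $R_j^{-1/2}e^{-c_2 R_j}$ is summable. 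This gives $\|Vf\|_B \le C\|f\|_{B_s^*}$.

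\emph{Necessity.} Assume $V$ is bounded from $B_s^*$ to $B$. Define $M_j$ to be the operator norm of the map $f\mapsto \chi_j V J_{-s}(\tilde\chi_j f)$ from $L^2(\tilde X_j)$ to $L^2(X_j)$, which is finite (indeed $M_j \le \|V\|_{B_s^*\to B}\cdot\|J_s\|_{L^2(\tilde X_j)\to B^*}\cdot R_j^{-1/2}$-type bounds, but we just need finiteness and the summability of $M_j R_j$). To establish \eqref{eq411}, split $f = \tilde\chi_j f + \tilde\chi_j^c f$; the first piece is handled by the definition of $M_j$, giving $M_j\|f\|_{L^2(\tilde X_j)}$, and the second piece $\chi_j V J_{-s}(\tilde\chi_j^c f)$ is handled by Lemma \ref{lm41}: writing $\tilde\chi_j^c f = \sum_{|k-j|\ge 2}\chi_k f$ and applying \eqref{eq42}, $\|\chi_j V J_{-s}\chi_k f\|_{L^2(X_j)}\le C e^{-c R_{\max\{j,k\}}}\|f\|_{L^1(X_k)}$; summing over $k$ and absorbing a factor $e^{-c_2 R_j}$ together with the exponential weight $e^{-c_1|x|}$ needed to make $\|e^{-c_1|\cdot|}f\|_{L^1(\tilde X_j^c)}$ appear yields the second term of \eqref{eq411} (after possibly enlarging $M_j$ by an $R_j$-independent constant, or more carefully by noting $e^{-cR_{\max}} \le e^{-(c/2)R_j}e^{-(c/2)R_k} \le e^{-c_2 R_j}e^{-c_1|x|}$ on $X_k$ for suitable $c_1,c_2$). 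For \eqref{eq412}, I would test $V$ against well-chosen functions: for each $j$, pick $g_j\in L^2(\tilde X_j)$ with $\|g_j\|_{L^2}=1$ nearly attaining $M_j$, set $f_j = \tilde\chi_j J_s(\ldots)$ appropriately normalized so that $\|f_j\|_{B_s^*}\approx R_j^{1/2}$ (using that $\tilde X_j$ has measure $\sim R_j^n$ and the local equivalence of $B^*$-mass on $\tilde X_j$ with $R_j^{1/2}\times L^2$-mass), observe $\|V f_j\|_B \ge R_j^{1/2}\|\chi_j V J_{-s}(\ldots)\|_{L^2(X_j)}\approx R_j^{1/2}\cdot R_j^{1/2}M_j\cdot(\text{normalization})$, and then sum a sparse subsequence: taking a finitely supported linear combination $f=\sum_{j\in F} a_j f_j$ with disjoint-ish supports, boundedness of $V$ forces $\sum_{j\in F} M_j R_j \le C\|V\|_{B_s^*\to B}^2$ uniformly in $F$, hence \eqref{eq412}. (The mild overlap of the $\tilde X_j$ across three consecutive $j$ is harmless: split into residue classes $j\bmod 3$.)

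\emph{Main obstacle.} I expect the delicate point to be the necessity of \eqref{eq412}, specifically the bookkeeping in the duality/testing argument: one must choose the test functions $f_j$ so that their $B_s^*$-norms are controlled \emph{and} so that $Vf_j$ genuinely concentrates on $X_j$ with the right weight $R_j^{1/2}$, while handling the fact that $J_{-s}$ is nonlocal (so $V J_{-s} f_j$ bleeds slightly outside $X_j$) and that the $B$-norm involves a supremum-free sum which does not decouple as cleanly as the $B^*$-sup would. The exponential tail estimate \eqref{eq42} is precisely what makes this bleeding negligible, so the argument should go through, but it requires care to make the constants uniform in $j$. The choice of $M_j$ as a local operator norm (rather than something larger) is what makes \eqref{eq412} a genuine constraint equivalent to boundedness rather than merely necessary.
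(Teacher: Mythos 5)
Your proposal is correct and follows essentially the same route as the paper: sufficiency by summing \eqref{eq411} over $j$ and controlling the tail via Cauchy--Schwarz on each annulus, and necessity by splitting $f=\tilde{\chi}_jf+(1-\tilde{\chi}_j)f$ with the exponential kernel bound of Lemma \ref{lm41} for the off-diagonal part, followed by a sparse (mod $3$) superposition of near-extremizers whose $B^*$-sup stays bounded while the $B$-sum accumulates $\sum M_jR_j$. The only point to tighten is your remark about ``enlarging $M_j$ by an $R_j$-independent constant'': adding a constant would destroy \eqref{eq412}, so one must instead absorb the uniform tail constant into the exponential, e.g.\ take the final $M_j=\max\bigl(M_j^{\mathrm{loc}},Ce^{-c_2R_j/2}\bigr)$ with $c_2$ halved, exactly as your ``more carefully'' clause suggests.
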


\begin{remark}
Note that in \eqref{eq411}, $f\in B^*$ implies $J_{-s}f\in L_\mathrm{loc}^2$ by Corollary \ref{cor42}, and the right hand side is obviously finite.
\end{remark}

\begin{proof}[Proof of Theorem \ref{thm43}]
Let $T_j=\chi_jVJ_{-s}$ for notation convention.
	
For the necessity, we first show the existence of $c_1,c_2$ and $M_j$ for \eqref{eq411}. Write
\begin{equation*}
T_jf=T_j\tilde{\chi}_jf+\sum_{|k-j|\geq2}T_{j,k}\chi_kf.
\end{equation*}
We first have
\begin{equation*}
\|T_j\tilde{\chi}_jf\|_{L^2(X_j)}\leq R_j^{-\frac12}\|VJ_{-s}\tilde{\chi}_jf\|_B\leq R_j^{-\frac12}\|V\|_{B_s^*\rightarrow B}\|\tilde{\chi}_jf\|_{B^*}\leq CR_j^{-1}\|f\|_{L^2(\tilde{X}_j)}.
\end{equation*}
By Lemma \ref{lm41}, we deduce
\begin{equation*}
\begin{split}
\sum_{|k-j|\geq2}\|T_{j,k}\chi_kf\|_{L^2(X_j)}\leq&C\sum_{|k-j|\geq2}e^{-cR_{\max\{j,k\}}}\|f\|_{L^1(X_k)}\\
\leq&C\sum_{|k-j|\geq2}e^{-cR_{\max\{j,k\}}+\frac c2R_k}\|e^{-\frac c2|\cdot|}f\|_{L^1(X_k)}\\
\leq&Ce^{-\frac c4R_j}\|e^{-\frac c2|\cdot|}f\|_{L^1(\tilde{X}_j^c)}.
\end{split}
\end{equation*}
Thus $c_1,c_2,M_j$ exist. Now we fix $c_1=\frac c2$ and $c_2=\frac c4$ above, and let $M_j$ be the least possible number for \eqref{eq411} to hold. Take $f_j\in B^*$ such that
\begin{equation}\label{eq416}
\begin{cases}
\|f_j\|_{L^2(\tilde{X}_j)}+e^{-\frac c4R_j}\|e^{-\frac c2|\cdot|}f\|_{L^1(\tilde{X}_j^c)}=1,\\
\|T_jf_j\|_{L^2(X_j)}\geq\frac12M_j.
\end{cases}
\end{equation}
Take $F=\sum\limits_{j\geq1}R_{3j}^\frac12\tilde{\chi}_{3j}f_{3j}$, we have
\begin{equation*}
\begin{split}
\|F\|_{B^*}=&\sup_{k\geq1}R_k^{-\frac12}\left\|\sum_{j\geq1}R_{3j}^\frac12\tilde{\chi}_{3j}f_{3j}\right\|_{L^2(X_k)}\\
=&\sup_{k\geq1}\max\left\{R_{3k-1}^{-\frac12}R_{3k}^\frac12\|f_{3k}\|_{L^2(X_{3k-1})},R_{3k}^{-\frac12}R_{3k}^\frac12\|f_{3k}\|_{L^2(X_{3k})},R_{3k+1}^{-\frac12}R_{3k}^\frac12\|f_{3k}\|_{L^2(X_{3k+1})}\right\}\\
\leq&\sqrt{2},
\end{split}
\end{equation*}
and therefore
\begin{equation}\label{eq418}
\sum_{k\geq1}R_k^\frac12\|T_kF\|_{L^2(X_k)}=\|VJ_{-s}F\|_B\leq\sqrt{2}\|V\|_{B_s^*\rightarrow B}<\infty.
\end{equation}
Consider a part of the left hand side of \eqref{eq418},
\begin{equation}\label{eq419}
\begin{split}
\infty>&\sum_{k\geq1}R_{3k}^\frac12\|T_{3k}F\|_{L^2(X_{3k})}=\sum_{k\geq1}R_{3k}^\frac12\left\|T_{3k}\sum_{j\geq1}R_{3j}^\frac12\tilde{\chi}_{3j}f_{3j}\right\|_{L^2(X_{3k})}\\
\geq&\sum_{k\geq1}R_{3k}\|T_{3k}\tilde{\chi}_{3k}f_{3k}\|_{L^2(X_{3k})}-\sum_{k\geq1}R_{3k}^\frac12\sum_{j\neq k}R_{3j}^\frac12\|T_{3k}\tilde{\chi}_{3j}f_{3j}\|_{L^2(X_{3k})}\\
\geq&\sum_{k\geq1}R_{3k}\|T_{3k}f_{3k}\|_{L^2(X_{3k})}-\sum_{k\geq1}R_{3k}\|T_{3k}(1-\tilde{\chi}_{3k})f_{3k}\|_{L^2(X_{3k})}\\
&\quad-\sum_{k\geq1}R_{3k}^\frac12\sum_{j\neq k}R_{3j}^\frac12\|T_{3k}\tilde{\chi}_{3j}f_{3j}\|_{L^2(X_{3k})}\\
\geq&\frac12\sum_{k\geq1}M_{3k}R_{3k}-\sum_{k\geq1}R_{3k}\|T_{3k}(1-\tilde{\chi}_{3k})f_{3k}\|_{L^2(X_{3k})}-\sum_{k\geq1}R_{3k}^\frac12\sum_{j\neq k}R_{3j}^\frac12\|T_{3k}\tilde{\chi}_{3j}f_{3j}\|_{L^2(X_{3k})},
\end{split}
\end{equation}
where we have used \eqref{eq416} in the last inequality. Now by Lemma \ref{lm41} and \eqref{eq416}, we have
\begin{equation}\label{eq420}
\begin{split}
\sum_{k\geq1}R_{3k}\|T_{3k}(1-\tilde{\chi}_{3k})f_{3k}\|_{L^2(X_{3k})}\leq&\sum_{k\geq1}R_{3k}\sum_{|j-3k|\geq2}\|T_{3k,j}f_{3k}\|_{L^2(X_{3k})}\\
\leq&C\sum_{k\geq1}R_{3k}\sum_{|j-3k|\geq2}e^{-cR_{\max\{3k,j\}}}\|f_{3k}\|_{L^1(X_j)}\\
\leq&C\sum_{k\geq1}R_{3k}\sum_{|j-3k|\geq2}e^{-cR_{\max\{3k,j\}}+\frac c2R_j}\|e^{-\frac c2|\cdot|}f_{3k}\|_{L^1(\tilde{X}_{3k}^c)}\\
\leq&C\sum_{k\geq1}R_{3k}\sum_{|j-3k|\geq2}e^{-cR_{\max\{3k,j\}}+\frac{3c}{4}R_j}\\
<&\infty.
\end{split}
\end{equation}
We also have by \eqref{eq416} that
\begin{equation}\label{eq421}
\begin{split}
\sum_{k\geq1}R_{3k}^\frac12\sum_{j\neq k}R_{3j}^\frac12\|T_{3k}\tilde{\chi}_{3j}f_{3j}\|_{L^2(X_{3k})}\leq&\sum_{k\geq1}R_{3k}^\frac12\sum_{j\neq k}R_{3j}^\frac12\sum_{l=-1}^{1}\|T_{3k,3j+l}f_{3j}\|_{L^2(X_{3k})}\\
\leq&C\sum_{k\geq1}R_{3k}^\frac12\sum_{j\neq k}R_{3j}^\frac12\sum_{l=-1}^{1}e^{-cR_{\max\{3k,3j+l\}}}\|f_{3j}\|_{L^1(X_{3j+l})}\\
\leq&C\sum_{k\geq1}R_{3k}^\frac12\sum_{j\neq k}R_{3j}^\frac{1+n}{2}\sum_{l=-1}^{1}e^{-cR_{\max\{3k,3j+l\}}}\|f_{3j}\|_{L^2(\tilde{X}_{3j})}\\
<&\infty.
\end{split}
\end{equation}
Combining \eqref{eq419}, \eqref{eq420} and \eqref{eq421}, we know that
\begin{equation*}
\sum_{k\geq1}M_{3k}R_{3k}<\infty.
\end{equation*}
Similarly consider $F=\sum_{j\geq0}R_{3j+l}^\frac12\tilde{\chi}_{3j+l}f_{3j+l}$, $l=1,2$, we obtain \eqref{eq412}.

For the sufficiency, if $u\in B_s^*$, we set $f=J_su\in B^*$, then by \eqref{eq411},
\begin{equation}\label{eq423}
\begin{split}
\|Vu\|_B=&\|VJ_{-s}f\|_B=\sum_{j\geq1}R_j^\frac12\|T_jf\|_{L^2(X_j)}\\
\leq&\sum_{j\geq1}M_jR_j^\frac12\left(\|f\|_{L^2(\tilde{X}_j)}+e^{-c_2R_j}\|e^{-c_1|\cdot|}f\|_{L^1(\tilde{X}_j^c)}\right)\\
\leq&3\sqrt{2}\sum_{j\geq1}M_jR_j\|f\|_{B^*}+\sum_{j\geq1}M_jR_j^\frac12e^{-c_2R_j}\|e^{-c_1|\cdot|}f\|_{L^1(\tilde{X}_j^c)}.
\end{split}
\end{equation}
Further, by Cauchy-Schwarz inequality it is easy to see that
\begin{equation}\label{eq424}
\begin{split}
\|e^{-c_1|\cdot|}f\|_{L^1(\tilde{X}_j^c)}\leq&\sqrt{2}\sum_{k\geq1}R_k^{-\frac12}\left\||\cdot|^\frac12e^{-c_1|\cdot|}f\right\|_{L^1(X_k)}\\
\leq&\sqrt{2}\sum_{k\geq1}\left\||\cdot|^\frac12 e^{-c_1|\cdot|}\right\|_{L^2(X_k)}\|f\|_{B^*}\\
\leq&C\|f\|_{B^*}.
\end{split}
\end{equation}
Then by \eqref{eq412}, we know from \eqref{eq423} and \eqref{eq424} that $V$ defines a bounded multiplication operator from $B_s^*$ to $B$.
\end{proof}

Next, we give a characterization of the compactness for multiplication operators from $B_s^*$ to $B$.

\begin{theorem}\label{thm44}
Suppose $s>0$, and $V$ defines a bounded multiplication operator from $B_s^*$ to $B$. Then the operator $V$ is compact, if and only if for any compact $K\subset\mathbb{R}^n$, the set
\begin{equation}\label{eq425}
\left\{Vu;~u\in L^2,~\mathrm{supp}\,u\subset K,~\|J_su\|_{L^2}<1\right\}
\end{equation}
is precompact in $L^2$.
\end{theorem}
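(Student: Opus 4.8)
The plan is to prove the two directions separately, with the forward (necessity) direction being essentially trivial and the reverse (sufficiency) direction being the substantive part. For necessity, suppose $V:B_s^*\to B$ is compact. Fix a compact set $K$ and let $\{u_m\}$ be any sequence with $\operatorname{supp}u_m\subset K$ and $\|J_su_m\|_{L^2}<1$. Since $\operatorname{supp}u_m\subset K$, the $B_s^*$ norm of $u_m$ is controlled by $\|J_su_m\|_{L^2}$ up to a constant depending only on $K$ (because on a fixed compact set the $B^*$ norm is dominated by the $L^2$ norm); hence $\{u_m\}$ is bounded in $B_s^*$, and compactness of $V$ gives a subsequence along which $Vu_m$ converges in $B$, a fortiori in $L^2$ since $B\hookrightarrow L^2$. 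So the set in \eqref{eq425} is precompact in $L^2$.

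For sufficiency, the idea is to approximate $V$ in operator norm $B_s^*\to B$ by operators that are manifestly compact, exploiting the tail estimate \eqref{eq411} already established in Theorem \ref{thm43}. First I would record that, since $V$ is bounded, the numbers $M_j$ in \eqref{eq411}--\eqref{eq412} satisfy $\sum_j M_jR_j<\infty$; in particular the tail $\sum_{j>N}M_jR_j\to0$. For a truncation parameter $N$, decompose $Vu=\sum_{j\ge1}\chi_j Vu$ into the "inner" piece $\sum_{j\le N}\chi_j Vu$ and the "outer" piece $\sum_{j>N}\chi_j Vu$. Using the sufficiency computation \eqref{eq423}--\eqref{eq424} restricted to $j>N$, the outer piece has $B$-norm bounded by $C\big(\sum_{j>N}M_jR_j\big)\|u\|_{B_s^*}$, which is small uniformly in $u$. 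So it suffices to show each inner piece, i.e. $V_N u\triangleq\sum_{j\le N}\chi_j Vu$, is a compact operator from $B_s^*$ to $B$. Here I further split $V_N u$ using \eqref{eq411}: writing $u=J_{-s}f$ with $f=J_su\in B^*$, the contribution of $f$ restricted to $\tilde X_j^c$ to $\|\chi_j VJ_{-s}f\|_{L^2}$ is $\le M_j e^{-c_2R_j}\|e^{-c_1|\cdot|}f\|_{L^1(\tilde X_j^c)}\le CM_je^{-c_2R_j}\|f\|_{B^*}$; summing over the finitely many $j\le N$ and pulling out the exponentially small factor, this "far-field" correction is small when $N$ is suitably handled — more precisely it is a fixed bounded operator of small norm compared with... actually the cleaner route is to note it is a finite sum of operators $u\mapsto M_j e^{-c_2 R_j}\chi_j V J_{-s}\big((1-\tilde\chi_j)f\big)$, each of which, by Lemma \ref{lm41}, is an integral operator with exponentially decaying kernel from $B^*$ to $L^2(X_j)$ and hence compact (kernel in $L^2$ of the product space, or approximable by finite-rank cutoffs). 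That leaves the genuinely local part: a finite sum over $j\le N$ of operators $u\mapsto \chi_j V J_{-s}(\tilde\chi_j J_s u)=\chi_j V (J_{-s}\tilde\chi_j J_s) u$. Now $\tilde\chi_j$ is multiplication by the characteristic function of a fixed compact set $\tilde X_j$, and $J_{-s}\tilde\chi_j J_s$ maps $B_s^*$ into functions supported... it is not compactly supported, but its image under $\tilde\chi_j J_s$ lands in $L^2$ with support in $\tilde X_j$; and then we are reduced exactly to applying the hypothesis \eqref{eq425} with $K=\tilde X_j$: the set $\{V w: \operatorname{supp}w\subset\tilde X_j, \|J_s w\|_{L^2}<1\}$ is precompact in $L^2$, hence in $B$ on that bounded region (on a fixed compact set the $B$ and $L^2$ norms are comparable). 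Thus $u\mapsto\chi_j V(\tilde\chi_j u)$ is compact $B_s^*\to B$ for each $j$, and a finite sum of compacts is compact.

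The main obstacle I anticipate is bookkeeping the decomposition cleanly so that every piece is honestly of one of two types: (i) a tail with small operator norm, uniformly in the truncation-free variable, or (ii) a finite sum of operators each reduced either to Lemma \ref{lm41} (exponential-kernel integral operators, hence compact) or directly to the hypothesis \eqref{eq425} (compactly supported inputs). The subtlety is that $J_{-s}$ is nonlocal, so "$u$ supported in $K$" does not survive applying $J_{-s}$; the fix is to apply the hypothesis to $w=\tilde\chi_j u$-type functions \emph{before} the nonlocal operator acts, absorbing the nonlocal tail into the exponential-decay error handled by Lemma \ref{lm41}. Once the decomposition is organized this way, compactness of $V$ follows by the standard fact that a uniform operator-norm limit of compact operators is compact: let $N\to\infty$.
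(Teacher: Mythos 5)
Your overall architecture for the sufficiency direction (small tail in operator norm via $\sum_{j>N}M_jR_j\to0$, plus finitely many pieces handled either by the exponential off-diagonal decay of Lemma \ref{lm41} or by the hypothesis \eqref{eq425}, then pass to the limit in operator norm) is the same as the paper's, and your necessity argument and tail estimate are fine. But there is a genuine gap in the ``local part.'' The hypothesis \eqref{eq425} gives compactness of $w\mapsto Vw$ only for $w$ that are \emph{both} compactly supported \emph{and} have $\|J_sw\|_{L^2}$ under control. Your candidate $w=J_{-s}(\tilde{\chi}_jJ_su)$ satisfies the second condition but not the first: $J_{-s}$ is nonlocal, so $w$ is not supported in $\tilde{X}_j$, and \eqref{eq425} with $K=\tilde{X}_j$ simply does not apply to it. Your closing sentence then silently switches to the different operator $u\mapsto\chi_jV(\tilde{\chi}_ju)$, whose argument $\tilde{\chi}_ju$ is compactly supported but now fails the other condition: $\tilde{\chi}_j$ is a sharp cutoff, which is not a multiplier on $H^s$ for $s\geq\frac12$, so $\|J_s(\tilde{\chi}_ju)\|_{L^2}$ is not controlled by $\|u\|_{B_s^*}$. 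Either way the hypothesis cannot be invoked, and your proposed ``fix'' (apply the hypothesis before the nonlocal operator acts) does not resolve this, since $V$ multiplies $J_{-s}f$, not $f$.

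The repair is exactly the paper's three-term decomposition \eqref{eq426}: with \emph{smooth} cutoffs $\phi_k\in C_c^\infty$, write $Vu=(1-\phi_k)Vu+\phi_kVJ_{-s}(1-\phi_{k+2})J_su+V\phi_kJ_{-s}\phi_{k+2}J_su$. The outer smooth cutoff is placed \emph{after} $J_{-s}$, so $U_{k}=\phi_kJ_{-s}\phi_{k+2}J_su$ is genuinely supported in $B(R_k)$, while smoothness of $\phi_k$ gives $\|J_s\phi_kJ_{-s}\|_{L^2\to L^2}\leq C_k$ and hence $\|J_sU_k\|_{L^2}\leq C_k'\|u\|_{B_s^*}$; only then does \eqref{eq425} apply. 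The price is the mismatch term $\phi_kVJ_{-s}(1-\phi_{k+2})J_su$, which is shown to be $O(e^{-c_2R_k})$ in operator norm using \eqref{eq411}, so it joins the tail rather than needing to be compact. A smaller point: your claim that $f\mapsto\chi_jVJ_{-s}((1-\tilde{\chi}_j)f)$ is compact from $B^*$ to $L^2(X_j)$ is true, but not because the kernel is in $L^2$ of the product space (the domain is $B^*$, not $L^2$); one needs, e.g., an equicontinuity/Arzel\`a--Ascoli argument exploiting that $|x-y|$ is bounded below on $X_j\times\tilde{X}_j^c$ together with \eqref{eq424} — or one can avoid this entirely by making that piece part of the small tail, as the paper does.
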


\begin{proof}
The necessity is obvious. For the sufficiency, let $\{u_l\}_{l\geq1}\subset B_s^*$ and $\|u_l\|_{B_s^*}\leq1$ for all $l$. Take $\phi\in C_c^\infty(B(1))$ with $\phi\equiv1$ in $B(\frac23)$, and set $\phi_k(\cdot)=\phi(\frac{\cdot}{R_k})$, then we may write for any $k$ that
\begin{equation}\label{eq426}
Vu_l=(1-\phi_k)Vu_l+\phi_kVJ_{-s}(1-\phi_{k+2})J_su_l+V\phi_kJ_{-s}\phi_{k+2}J_su_l.
\end{equation}
First by \eqref{eq411}, \eqref{eq424} and that fact that $\mathrm{supp}\,(1-\phi_k)\subset\cup_{j\geq k}\overline{X}_j$,
\begin{equation}
\|(1-\phi_k)Vu_l\|_B\leq\sum_{j\geq k}CM_j(R_j+R_j^\frac12e^{-c_2R_j})\|u_l\|_{B_s^*}\leq C\sum_{j\geq k}M_jR_j,
\end{equation}
we then know by \eqref{eq412} that
\begin{equation}\label{eq428}
\lim_{k\rightarrow\infty}\sup_{l>0}\|(1-\phi_k)Vu_l\|_B=0.
\end{equation}
For the second term on the right hand side of \eqref{eq426}, notice that $j\leq k$ implies $\tilde{X}_j\cap\mathrm{supp}\,(1-\phi_{k+2})=\emptyset$, we then use \eqref{eq411} again to obtain
\begin{equation*}
\begin{split}
\|\phi_kVJ_{-s}(1-\phi_{k+2})J_su_l\|_B\leq&\sum_{j\leq k}M_jR_j^\frac12e^{-c_2R_j}\|e^{-c_2|\cdot|}(1-\phi_{k+2})J_su_l\|_{L^1(\tilde{X}_j^c)}\\
\leq&C\sum_{j\leq k}M_jR_j^\frac12e^{-c_2R_j}\sum_{\nu\geq k+2}e^{-\frac{c_2}{2}R_\nu}R_\nu^\frac{n+1}{2}R_\nu^{-\frac12}\|J_su_l\|_{L^2(X_\nu)}\\
\leq&C\sum_{j\leq k}M_jR_j^\frac12e^{-c_2R_j}e^{-c_2R_k}\sum_{\nu\geq1}e^{-\frac{c_2}{4}R_\nu}R_\nu^\frac{n+1}{2}\|u_l\|_{B_s^*}\\
\leq&Ce^{-c_2R_k},
\end{split}
\end{equation*}
and thus
\begin{equation}\label{eq430}
\lim_{k\rightarrow\infty}\sup_{l\geq1}\|\phi_kVJ_{-s}(1-\phi_{k+2})J_su_l\|_B=0.
\end{equation}
For the last term $VU_{k,l}$ on the right hand side of \eqref{eq426} where $U_{k,l}=\phi_kJ_{-s}\phi_{k+2}J_su_l$, we have $\mathrm{supp}\,U_{k,l}\subset B(R_k)$, and
\begin{equation*}
\|J_sU_{k,l}\|_{L^2}\leq C_k\|\phi_{k+2}J_su_l\|_{L^2}\leq C_k'\|u_l\|_{B_s^*}\leq C_k',\quad l\geq1,
\end{equation*}
where we have used the facts that $\phi_k\in C_c^\infty$ and $\|J_s\phi_kJ_{-s}\|_{L^2\rightarrow L^2}\leq C_k$. By the assumption on \eqref{eq425}, $\{VU_{k,l}\}_{l\geq1}$ has an $L^2$ convergent subsequence for any fixed $k$. We may just assume $\{VU_{k,l}\}_{l\geq1}$ is $L^2$ convergent for every $k$ by taking the diagonal, and since $VU_{k,l}$ is supported in $B(R_k)$ for all $l$, $\{VU_{k,l}\}_{l\geq1}$ is also $B$ convergent for every $k$. Now $\{Vu_l\}_{l\geq1}$ is a Cauchy sequence in $B$ by the uniform smallness \eqref{eq428} and \eqref{eq430}, which completes the proof.
\end{proof}

Now we can verify the sufficient condition \eqref{eq13} mentioned in the Introduction.

\begin{proposition}\label{prop25}
Suppose $s>0$ and $V\in L_\mathrm{loc}^p$ satisfies \eqref{eq13}. Then $V$ defines a compact multiplication operator from $B_s^*$ to $B$.
\end{proposition}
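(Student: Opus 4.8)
The plan is to combine the two characterizations just established: Theorem~\ref{thm43} reduces boundedness to the local estimate \eqref{eq411} together with the summability \eqref{eq412}, and Theorem~\ref{thm44} reduces compactness, once boundedness is known, to the precompactness in $L^2$ of the sets in \eqref{eq425}. Throughout write $v_j=\sup_{y\in X_j}\|V(\cdot+y)\|_{L^p(B(1))}$, so that the second part of \eqref{eq13} reads $\sum_{j>0}R_jv_j<\infty$; note that in every regime of \eqref{eq13} we have $p\geq2$ (hence $V\in L^2_\mathrm{loc}$ and both theorems apply) and $\tfrac1p\leq\tfrac sn$, with equality exactly when $0<s<\tfrac n2$.

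First, boundedness. I will show that \eqref{eq411} holds with $M_j=Cv_j$, so that \eqref{eq412} is precisely the hypothesis $\sum R_jv_j<\infty$. Fix $f\in B^*$ and $j\geq1$, and split $f=\tilde\chi_jf+(1-\tilde\chi_j)f$. For the far piece, if $x\in X_j$ and $y\in\tilde X_j^c$ then $|x-y|\geq\tfrac14R_j$ (and $|x-y|>1$), and a short case analysis depending on whether $|y|\leq2R_j$ turns the exponential bound in \eqref{eq41} into $G_s(x-y)\leq Ce^{-c_2R_j}e^{-c_1|y|}$; combined with $\|V\|_{L^2(X_j)}\leq CR_j^{n/2}v_j$ (cover $X_j$ by unit balls with centres in $X_j$ and use H\"older, since $p\geq2$) the exponential factor absorbs the power of $R_j$ and yields $\|VJ_{-s}[(1-\tilde\chi_j)f]\|_{L^2(X_j)}\leq Cv_je^{-c_2R_j}\|e^{-c_1|\cdot|}f\|_{L^1(\tilde X_j^c)}$. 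For the near piece, set $g=\tilde\chi_jf$, cover $X_j$ and $\tilde X_j$ by bounded-overlap unit cubes $\{Q_\nu\}$ (centres in $X_j$) and $\{Q_\mu\}$, and write, with $\tfrac1r=\tfrac12-\tfrac1p$,
\[
\|VJ_{-s}g\|_{L^2(X_j)}^2\leq\sum_\nu\|V\|_{L^p(Q_\nu)}^2\,\|G_s*g\|_{L^r(Q_\nu)}^2\leq v_j^2\sum_\nu\Big(\sum_\mu\|G_s*(g\mathbf 1_{Q_\mu})\|_{L^r(Q_\nu)}\Big)^2.
\]
For a pair of cubes with nearby centres I bound $\|G_s*(g\mathbf 1_{Q_\mu})\|_{L^r}\leq C\|g\|_{L^2(Q_\mu)}$ using boundedness of convolution with $G_s$ from $L^2$ to $L^r$ --- by Young's inequality when $p>\tfrac ns$, and, since $G_s(z)\leq C|z|^{s-n}$ so that $|G_s*g|\leq C\,|\cdot|^{s-n}*|g|$, by the Hardy--Littlewood--Sobolev inequality at the endpoint $\tfrac1r=\tfrac12-\tfrac sn$ when $p=\tfrac ns$ (necessarily $0<s<\tfrac n2$); for a pair with distant centres the exponential tail of $G_s$ in \eqref{eq41} gives $\|G_s*(g\mathbf 1_{Q_\mu})\|_{L^r(Q_\nu)}\leq Ce^{-c\,\mathrm{dist}(Q_\nu,Q_\mu)}\|g\|_{L^2(Q_\mu)}$. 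A Schur test then yields $\sum_\nu\|G_s*g\|_{L^r(Q_\nu)}^2\leq C\|g\|_{L^2}^2=C\|f\|_{L^2(\tilde X_j)}^2$, hence $\|VJ_{-s}[\tilde\chi_jf]\|_{L^2(X_j)}\leq Cv_j\|f\|_{L^2(\tilde X_j)}$. Adding the two pieces gives \eqref{eq411} with $M_j=Cv_j$, and Theorem~\ref{thm43} gives boundedness.

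Next, compactness. By Theorem~\ref{thm44} it is enough to prove that for every compact $K\subset\mathbb{R}^n$ the set $\{Vu:u\in L^2,\ \mathrm{supp}\,u\subset K,\ \|J_su\|_{L^2}<1\}$ is precompact in $L^2$; since this set is the image of the unit ball of $H^s_K:=\{u\in H^s:\mathrm{supp}\,u\subset K\}$ under multiplication by $V$, it suffices to show that multiplication by $V$ is a compact operator from $H^s_K$ to $L^2$. As $K$ is compact we have $V\in L^p(K)$; choose $V_\varepsilon\in L^\infty(K)$ with $\|V-V_\varepsilon\|_{L^p(K)}\to0$ (e.g.\ truncations of $V$). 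By H\"older together with the Sobolev embedding $H^s\hookrightarrow L^r$, $\tfrac1r=\tfrac12-\tfrac1p$ (valid precisely because $\tfrac1p\leq\tfrac sn$), one has $\|(V-V_\varepsilon)u\|_{L^2}\leq C\|V-V_\varepsilon\|_{L^p(K)}\|u\|_{H^s}$, so multiplication by $V$ is the operator-norm limit of multiplication by $V_\varepsilon$; and multiplication by the bounded function $V_\varepsilon$ is compact from $H^s_K$ to $L^2$, being the composition of multiplication by $V_\varepsilon$ with the compact inclusion $H^s_K\hookrightarrow L^2$ (Rellich's theorem). Hence multiplication by $V$ is compact, and Theorem~\ref{thm44} finishes the proof.

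The step I expect to be the main obstacle is the critical regime $0<s<\tfrac n2$, $p=\tfrac ns$. In the boundedness estimate this is where Young's inequality fails and one must invoke the Hardy--Littlewood--Sobolev endpoint; moreover one has to localise \emph{both} the potential and the kernel $G_s$ to unit scale --- with the off-diagonal contributions handled by the exponential tail in \eqref{eq41} via a Schur test --- so as to obtain the sharp bound $M_j\lesssim v_j$ with no spurious power of $R_j$ (replacing $v_j$ by the cruder $\|V\|_{L^p(X_j)}$ would cost a factor $R_j^{n/p}$ and break \eqref{eq412}). In the compactness part the same exponent makes the Sobolev embedding $H^s\hookrightarrow L^{2n/(n-2s)}$ non-compact; this is circumvented by approximating $V$ in $L^p$, so that the critical embedding is only ever used against a factor of vanishing operator norm, the genuine compactness coming from the subcritical embedding $H^s_K\hookrightarrow L^2$.
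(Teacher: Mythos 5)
Your proposal is correct and follows essentially the same route as the paper: boundedness is obtained by verifying \eqref{eq411} with $M_j\lesssim\sup_{y\in X_j}\|V(\cdot+y)\|_{L^p(B(1))}$ via the same unit-cube decomposition, H\"older with the dual exponent $q=2p/(p-2)$, the Sobolev/Bessel-potential bound $J_{-s}\in\mathscr{B}(L^2,L^q)$ for nearby cubes, exponential tails of $G_s$ for distant ones, and a Schur-type summation; compactness is reduced to Theorem \ref{thm44} and handled by truncating $V$ to a bounded function and using compactness of $H^s_K\hookrightarrow L^2$. The only (cosmetic) difference is that you package the compactness step as an operator-norm limit of compact operators, whereas the paper argues directly on a sequence via Kolmogorov--Riesz plus a height truncation $V=V_N+V_N'$; the two are interchangeable.
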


\begin{proof}
Let $M_j=\sup_{y\in X_j}\|V(\cdot+y)\|_{L^p(B(1))}$. We first prove the boundedness of $V$ in the view of Theorem \ref{thm43}, by showing the existence of $C,c_1,c_2>0$ such that
\begin{equation}\label{eq433}
\|VJ_{-s}f\|_{L^2(X_j)}\leq CM_j\left(\|f\|_{L^2(\tilde{X}_j)}+e^{-c_2R_j}\|e^{-c_1|\cdot|}f\|_{L^1(\tilde{X}_j^c)}\right),\quad f\in B^*.
\end{equation}
In the sequel, we always assume $k,k'\in\mathbb{Z}^n/\sqrt{n}$, and take $\varphi_k$ to be the characteristic function of $Q_k=\{x\in\mathbb{R}^n;~\sup_{1\leq i\leq n}|x_i-k_i|\leq\frac{1}{2\sqrt{n}}\}$. Write
\begin{equation}\label{eq434}
\|VJ_{-s}f\|_{L^2(X_j)}\leq\|VJ_{-s}\tilde{\chi}_jf\|_{L^2(X_j)}+\|VJ_{-s}(1-\tilde{\chi}_j)f\|_{L^2(X_j)}.
\end{equation}

For the first right hand side term of \eqref{eq434}, denoted by $q=\frac{2p}{p-2}$, ($q=\infty$ if $p=2$,) we have by H\"{o}lder's inequality that
\begin{equation*}
\begin{split}
\|VJ_{-s}\tilde{\chi}_jf\|_{L^2(X_j)}^2\leq&\sum_k\|V\|_{L^p(Q_k\cap X_j)}^2\|J_{-s}\tilde{\chi}_jf\|_{L^q(Q_k\cap X_j)}^2\\
\leq&M_j^2\sum_k\left(\sum_{k'}\|J_{-s}\varphi_{k'}\tilde{\chi}_jf\|_{L^q(Q_k\cap X_j)}\right)^2.
\end{split}
\end{equation*}
Notice that
\begin{equation*}
\|J_{-s}\varphi_{k'}\tilde{\chi}_jf\|_{L^q(Q_k\cap X_j)}\leq\left\|\int_{Q_{k'}\cap\tilde{X}_j} G_s(x-y)|\varphi_{k'}(y)\tilde{\chi}_j(y)f(y)|dy\right\|_{L_x^q(Q_k\cap X_j)}.
\end{equation*}
If $|k-k'|\geq2$, $x\in Q_k$ and $y\in Q_{k'}$, then $|x-y|\geq1$ and $G_s(x-y)\leq Ce^{-a|k-k'|}$ holds by \eqref{eq41}, thus
\begin{equation}\label{eq437}
\|J_{-s}\varphi_{k'}\tilde{\chi}_jf\|_{L^q(Q_k\cap X_j)}\leq Ce^{-a|k-k'|}\|f\|_{L^1(Q_{k'}\cap\tilde{X}_j)}\leq Ce^{-a|k-k'|}\|f\|_{L^2(Q_{k'}\cap\tilde{X}_j)}.
\end{equation}
If $|k-k'|<2$, notice that
\begin{equation*}
q\begin{cases}
=\frac{2n}{n-2s},\quad&\text{if}~0<s<\frac n2,\\
>2,\quad&\text{if}~s=\frac n2,\\
=\infty,\quad&\text{if}~s>\frac n2,
\end{cases}
\end{equation*}
the Sobolev embedding $J_{-s}\in\mathscr{B}(L^2,L^q)$ shows that the last inequality of \eqref{eq437} is still true. Therefore
\begin{equation}\label{eq439}
\begin{split}
\|VJ_{-s}\tilde{\chi}_jf\|_{L^2(X_j)}\leq&M_j\left(\sum_k\left(\sum_{k'}Ce^{-a|k-k'|}\|f\|_{L^2(Q_{k'}\cap\tilde{X}_j)}\right)^2\right)^\frac12\\
\leq&CM_j\|f\|_{L^2(\tilde{X}_j)}
\end{split}
\end{equation}
holds for the inclusion $l^1\ast l^2\subset l^2$ is bilinearly continuous.

For the second right hand side term of \eqref{eq434}, 
\begin{equation}\label{eq440}
\begin{split}
\|VJ_{-s}(1-\tilde{\chi}_j)f\|_{L^2(X_j)}\leq&\sum_{|l-j|\geq2}\|VJ_{-s}\chi_lf\|_{L^2(X_j)}\\
\leq&C\sum_{|l-j|\geq2}\|V\|_{L^2(X_j)}e^{-cR_{\max\{j,l\}}}\|f\|_{L^1(X_l)}\\
\leq&CM_jR_j^\frac n2e^{-\frac c2R_j}\|e^{-\frac c2|\cdot|}f\|_{L^1(\tilde{X}_j^c)}\leq CM_je^{-\frac c4R_j}\|e^{-\frac c2|\cdot|}f\|_{L^1(\tilde{X}_j^c)},
\end{split}
\end{equation}
where in the second line we have used \eqref{eq42} as explained in the proof of Corollary \ref{cor42}, and in the last line we have used the facts $p\geq2$ and that $X_j$ can be covered by $O(R_j^\frac n2)$ many unit balls.

Now \eqref{eq433} is checked by \eqref{eq439} and \eqref{eq440}, thus $V$ is bounded from $B_s^*$ to $B$. For the compactness, it suffices to check Theorem \ref{thm44}. For any compact $K\subset\mathbb{R}^n$, consider
\begin{equation*}
\{u_l\}_{l\geq1}\subset\left\{u\in L^2;~\mathrm{supp}\,u\subset K,~\|J_su\|_{L^2}<1\right\}.
\end{equation*}
Of course $\{u_l\}_{l\geq1}$ is bounded in $L^2$, thus by Kolmogorov-Riesz theorem (see e.g. \cite{P}), we may assume $\{u_l\}_{l\geq1}$ is $L^2$ convergent up to subsequence. For any $N>0$, let $V=V_N+V_N'$, where
\begin{equation*}
V_N=\begin{cases}
V,\quad|V|\leq N,\\
0,\quad|V|>N.
\end{cases}
\end{equation*}
$\{V_Nu_l\}_{l\geq1}$ is also $L^2$ convergent for any fixed $N$. We assume $K\subset\overline{X_1\cup\cdots\cup X_J}$, then an argument similar to the boundedness part above shows that
\begin{equation*}
\begin{split}
\|V_N'u_l\|_{L^2}\leq&\sum_{j=1}^{J}\|V_N'J_{-s}J_su_l\|_{L^2(X_j)}\\
\leq&C\sum_{j=1}^J\sup_{y\in X_j}\|V_N'(\cdot+y)\|_{L^p(B(1))}\|J_su_l\|_{L^2}\\
\leq&C_J\|V_N'\|_{L^p(X_1\cup\cdots\cup X_{J+1})},
\end{split}
\end{equation*}
and therefore
\begin{equation*}
\lim_{N\rightarrow\infty}\sup_{l\geq1}\|V_N'u_l\|_{L^2}=0,
\end{equation*}
for $V\in L_\mathrm{loc}^p$ is assumed. Now clearly $\{Vu_l\}_{l\geq1}$ is an $L^2$ Cauchy sequence, which completes the proof.
\end{proof}

We end this section by a compact mapping property which will be used in Section \ref{sec7}.

\begin{lemma}\label{lm76}
Let $V$ define a compact multiplication operator from $B_s^*$ to $B$, $\{u_j\}_{j\geq1}$ be a bounded sequence in $B_s^*$, and $u_j\rightarrow u$ in $\mathscr{S}'$. Then $u\in B_s^*$ and $Vu_j\rightarrow Vu$ in $B$.
\end{lemma}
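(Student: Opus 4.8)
The plan is to combine weak-compactness in a reflexive replacement with the genuine compactness of $V\colon B_s^*\to B$. First I would note that $B_s^*$ is continuously embedded in a Hilbert space on which bounded sequences have weakly convergent subsequences: concretely, since $\|u_j\|_{B_s^*}=\|J_su_j\|_{B^*}\leq C$ and $B^*$ is the dual of the separable Banach space $B$, the sequence $\{J_su_j\}$ has a weak-$*$ convergent subsequence in $B^*$, say $J_su_{j_k}\rightharpoonup g$ in $B^*$. Testing against $\mathscr{F}^{-1}C_c^\infty\subset B$ (which is dense in $B$) and using that $u_j\to u$ in $\mathscr S'$ forces $g=J_su$, so in particular $J_su\in B^*$, i.e. $u\in B_s^*$, and $J_su_{j_k}\rightharpoonup J_su$ weak-$*$ in $B^*$; one checks the limit is independent of the subsequence, hence the full sequence converges weak-$*$. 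This disposes of the first assertion.

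Next I would upgrade the weak-$*$ convergence to convergence of $Vu_j$ in $B$ using the compactness hypothesis. The subtlety is that Definition \ref{def1} only gives compactness as a map sending \emph{norm-bounded} sets to \emph{precompact} sets, so from $\{u_j\}$ bounded in $B_s^*$ we get that $\{Vu_j\}$ has a subsequence converging in $B$ to some $v$. To identify $v=Vu$, I would use that convergence in $B$ implies convergence in $\mathscr S'$ (since $B\hookrightarrow L^2\hookrightarrow \mathscr S'$), while on the other hand $Vu_j\to Vu$ in $\mathscr S'$: this last point follows because $u_j\to u$ in $\mathscr S'$ and $V\in L^2_{\mathrm{loc}}$, so for any test function $\varphi$, $\langle Vu_j,\varphi\rangle=\langle u_j, V\varphi\rangle$ — here one must be slightly careful that $V\varphi$ is an admissible object; since $V\in L^2_{\mathrm{loc}}$ and $\varphi\in C_c^\infty$, $V\varphi\in L^2$ with compact support, and $u_j\in B_s^*\hookrightarrow B^*\hookrightarrow\mathscr S'$ are $L^2_{\mathrm{loc}}$ functions, so the pairing $\int u_j \overline{V\varphi}$ makes sense and converges to $\int u\,\overline{V\varphi}$ by the $\mathscr S'$ (indeed weak-$*$ in $B^*$) convergence. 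Hence every $B$-convergent subsequence of $\{Vu_j\}$ has the same limit $Vu$, and since $\{Vu_j\}$ is precompact in $B$, the whole sequence converges: $Vu_j\to Vu$ in $B$.

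I expect the main obstacle to be the first paragraph rather than the second: one must be careful about which weak-type compactness is available, because $B^*$ is a dual space but $B$ is not reflexive and $\mathscr S$ is not dense in $B^*$ (as remarked in the Introduction), so the topology one extracts a limit in is the weak-$*$ topology $\sigma(B^*,B)$, and the identification of the limit with $J_su$ must be done by testing against elements of $B$ — for which it is convenient to use the stated density of $\mathscr{F}^{-1}C_c^\infty$ in $B$, matched against the $\mathscr S'$-convergence of $u_j$. Once $u\in B_s^*$ and the weak-$*$ convergence are in hand, the passage to norm convergence of $Vu_j$ is a soft precompactness-plus-uniqueness-of-limit argument and should be routine.
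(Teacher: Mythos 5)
Your proposal is correct, and both halves reach the paper's conclusions, but the second half is organized differently from the paper's proof. For the statement $u\in B_s^*$, the paper simply tests $J_su$ against $\phi\in C_c^\infty(X_k)$ and passes the bound $|\langle J_su_j,\phi\rangle|\leq R_k^{1/2}\|\phi\|_{L^2(X_k)}$ to the limit; your weak-$*$ extraction in $B^*=B'$ plus identification of the limit is the same content in slightly heavier clothing. For the convergence $Vu_j\to Vu$ in $B$, the paper reduces to $u=0$ and argues annulus by annulus: it splits $\|Vu_j\|_B$ into finitely many shells plus a tail made uniformly small via the quantitative characterization \eqref{eq411}--\eqref{eq412}, and proves $Vu_j\to0$ in each $L^2(X_k)$ by a compactness--contradiction argument. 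You instead run a single global argument: compactness of $V$ gives precompactness of $\{Vu_j\}$ in $B$, every subsequential limit is identified with $Vu$ by duality, hence the whole sequence converges. Your route is softer and avoids re-invoking Theorem \ref{thm43}; the paper's route makes the tail smallness explicit and only needs weak convergence of $u_j$ in each $L^2(X_k)$. One detail you should tighten: the convergence $\int u_j\overline{V\varphi}\to\int u\,\overline{V\varphi}$ does \emph{not} follow from $\mathscr{S}'$-convergence alone, since $V\varphi\in B$ need not lie in $\mathscr{S}$; you genuinely need weak-$*$ convergence of $u_j$ itself (not only of $J_su_j$) in $B^*$. This is available — either repeat your subsequence/density argument for $u_j$, which is bounded in $B^*$ by Corollary \ref{cor42}, or note that $J_{-s}$ is bounded on $B$ by Lemma \ref{lm21}, so its adjoint is weak-$*$ continuous on $B^*$ and carries $J_su_j\rightharpoonup J_su$ to $u_j\rightharpoonup u$ — but it must be said.
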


\begin{proof}	We may just assume $\|J_su_j\|_{B^*}\leq1$ for all $j$. For any $k\geq1$ and $\phi\in C_c^\infty(X_k)$, we have $\langle J_su_j,\phi\rangle\rightarrow\langle J_su,\phi\rangle$ since $J_s\phi\in\mathscr{S}$, and consequently
\begin{equation}\label{eq714}
|\langle J_su,\phi\rangle|\leq R_k^\frac12\|\phi\|_{L^2(X_k)}.
\end{equation}
Thus $u\in B_s^*$ with $\|u\|_{B_s^*}\leq1$. For the second statement, we may assume $u=0$. By \eqref{eq411} and $\|J_su_j\|_{B^*}\leq1$, it is easy to see for any $K>0$ that
\begin{equation}\label{eq715}
\|Vu_j\|_B\leq\sum_{k\leq K}R_k^\frac12\|Vu_j\|_{L^2(X_k)}+C\sum_{k>K}M_kR_k.
\end{equation}
Also notice that $Vu_j\rightarrow0$ in $L^2(X_k)$ for any fixed $k$, because if it does not hold, we can take a subsequence such that $\|Vu_{j_l}\|_{L^2(X_k)}\geq\epsilon$ for some $\epsilon>0$, and that $Vu_{j_l}\rightarrow g$ in $B$ by the compactness of $V$. However for any $\phi\in C_c^\infty(X_k)$, we have $V\phi\in L^2(X_k)$ for $V\in L_\mathrm{loc}^2$; we also know that $u_j\rightarrow0$ weakly in $L^2(X_k)$ by the assumption. Thus 
\begin{equation*}
\langle g,\phi\rangle=\lim_{l\rightarrow\infty}\int Vu_{j_l}\bar{\phi}dx=0,
\end{equation*}
i.e. $g=0$ in $X_k$, which is a contradiction. Now in the view of \eqref{eq412}, we know from \eqref{eq715} that $\|Vu_j\|_B\rightarrow0$, and the proof is complete.
\end{proof}

\section{Existence and Non-Existence of Wave Operators}\label{sec4}

The following proof of Theorem \ref{thm51} inherits the approach of \cite[Theorem 14.4.6]{Hor2}. Also see \cite{Hor76}.

\begin{proof}[Proof of Theorem \ref{thm51}]
The existence, isometry and the intertwining property follow if we can show the existence in a dense subset of $L^2$. By Cook's method (see e.g. \cite[p. 20]{RS3}), notice that $\mathscr{F}^{-1}C_c^\infty(\mathbb{R}^n\setminus\{0\})$ is a dense subset of $H^s$, it suffices to show for any $u\in\mathscr{F}^{-1}C_c^\infty(\mathbb{R}^n\setminus\{0\})$ that
\begin{equation}\label{eq53}
\int_{|t|>1}\|Ve^{-itH_0}u\|_{L^2}dt<\infty.
\end{equation}
First notice that
\begin{equation*}
0<r<\left|\nabla|\xi|^s\right|<R,\quad \xi\in\mathrm{supp}\,\hat{u},
\end{equation*}
for some $r,R>0$, we have
\begin{equation}\label{eq55}
|\partial_x^\alpha e^{-itH_0}u(x)|\leq C_{N,\alpha}(|x|+|t|)^{-N},\quad N\in\mathbb{N}_+,\,\alpha\in\mathbb{N}_0^n;\,|x|<\frac{r|t|}{2}~\text{or}~|x|>2R|t|.
\end{equation}
This is because of
\begin{equation*}
\partial_x^\alpha e^{-itH_0}u(x)=(2\pi)^{-n}\int e^{i(x\cdot\xi-t|\xi|^s)}\hat{u}(\xi)(i\xi)^\alpha d\xi,
\end{equation*}
and integration by part using the facts that $\xi\mapsto\frac{x\cdot\xi-t|\xi|^s}{|x|+|t|}$ has uniformly bounded derivatives in $\mathrm{supp}\,\hat{u}$, and has gradient uniformly bounded from below there.

Now take $\varphi\in C_c^\infty$ such that $\varphi(x)=1$ when $\frac r2<|x|<2R$ and $\varphi(x)=0$ when $|x|<\frac r4$ or $|x|>4R$. Let
\begin{equation*}
u_t^1(x)=(1-\varphi(x/t))e^{-itH_0}u(x),\quad u_t^2(x)=\varphi(x/t)e^{-itH_0}u(x).
\end{equation*}
For $u_t^1$, we have
\begin{equation*}
\|Vu_t^1\|_{L^2}\leq\|Vu_t^1\|_B\leq\|V\|_{B_s^*\rightarrow B}\|J_su_t^1\|_{B^*}\leq\|V\|_{B_s^*\rightarrow B}\|J_su_t^1\|_{L^2}.
\end{equation*}
We may take integer $m\geq\frac s2$, and use \eqref{eq55} together with Leibniz' formula to deduce
\begin{equation*}
\|J_su_t^1\|_{L^2}\leq\|(I-\Delta)^m(1-\varphi(\cdot/t))e^{-itH_0}u\|_{L^2}\leq C|t|^{-2}.
\end{equation*}
Therefore
\begin{equation*}
\int_{|t|>1}\|Vu_t^1\|_{L^2}dt\leq C\int_{|t|>1}|t|^{-2}dt<\infty.
\end{equation*}
For $u_t^2$, we have $r|t|/4<|x|<4R|t|$ in $\mathrm{supp}\,u_t^2$, then in the view of \eqref{eq411}, we have
\begin{equation*}
\begin{split}
\|Vu_t^2\|_{L^2}\leq&\sum_{r|t|/4<R_j<8R|t|}\|VJ_{-s}J_su_t^2\|_{L^2(X_j)}\\
\leq&\sum_{r|t|/4<R_j<8R|t|}M_j\left(\|J_su_t^2\|_{L^2(\tilde{X}_j)}+e^{-c_2R_j}\|e^{-c_1|\cdot|}J_su_t^2\|_{L^1(\tilde{X}_j^c)}\right)\\
\leq&C\sum_{r|t|/4<R_j<8R|t|}M_j\|J_su_t^2\|_{L^2(\mathbb{R}^n)}.
\end{split}
\end{equation*}
By taking integer $m\geq\frac s2$ and using Leibniz' formula, we also know from $\hat{u}\in C_c^\infty$ that
\begin{equation*}
\|J_su_t^2\|_{L^2}\leq\|(I-\Delta)^m\varphi(\cdot/t)e^{-itH_0}u\|_{L^2}\leq C,
\end{equation*}
and therefore
\begin{equation*}
\|Vu_t^2\|_{L^2}\leq C\sum_{r|t|/4<R_j<8R|t|}M_j\leq \frac{C}{r|t|}\sum_{r|t|/4<R_j<8R|t|}M_jR_j.
\end{equation*}
Finally by \eqref{eq412},
\begin{equation*}
\begin{split}
\int_{|t|>1}\|Vu_t^2\|_{L^2}dt\leq&\frac Cr\int_{|t|>1}\sum_{r|t|/4<R_j<8R|t|}\frac{M_jR_j}{|t|}dt\\
\leq&\frac Cr\sum_{j\geq1}\int_{R_j/8R<|t|<4R_j/r}\frac{M_jR_j}{|t|}dt\\
\leq&\frac Cr\log\frac{8R}{r}\sum_{j\geq1}M_jR_j\\
<&\infty.
\end{split}
\end{equation*}
We have now proved \eqref{eq53}, and the proof is complete.
\end{proof}

\begin{remark}\label{rk31}
In the proof of Theorem \ref{thm51}, we never use the assumption that $V$ is compact. In fact, if a real-valued $V$ defines a bounded multiplication operator from $B_s^*$ to $B$, and $H_0+V$ with domain $H^s$ is essentially self-adjoint, then Theorem \ref{thm51} holds for the closure $H$ of $H_0+V$ by exactly the same proof. For example, notice that for any $\epsilon>0$, we have for some $C_\epsilon>0$ that
\begin{equation*}
\|Vu\|_{L^2}\leq\|V\|_{B_s^*\rightarrow B}\|J_su\|_{L^2}\leq\|V\|_{B_s^*\rightarrow B}\left((1+\epsilon)\|H_0u\|_{L^2}+C_\epsilon\|u\|_{L^2}\right),\quad u\in H^s.
\end{equation*}
If $\|V\|_{B_s^*\rightarrow B}<1$, it follows by the Kato-Rellich theorem (see Reed-Simon \cite{RS2}) that $H=H_0+V$ is self-adjoint in $L^2$ with domain $H^s$, and consequently the wave operators exist.
\end{remark}

We next prove Proposition \ref{prop15}.

\begin{proof}[Proof of Proposition \ref{prop15}]
Assume that $W_+=s-\lim_{t\rightarrow+\infty}e^{itH}e^{-itH_0}$ exists. We first take $u\in\mathscr{F}^{-1}C_c^\infty(\mathbb{R}^n\setminus\{0\})$ with $\|u\|_{L^2}=1$ such that
\begin{equation*}
\frac56<\left|\nabla|\xi|^s\right|<\frac67,\quad\xi\in\mathrm{supp}\,\hat{u}.
\end{equation*}
Similar to \eqref{eq55}, we have
\begin{equation}\label{eq3.4}
|e^{-itH_0}u(x)|\leq C_N(|x|+|t|)^{-N},\quad N\in\mathbb{N}_+;\,|x|<\frac45|t|~\text{or}~|x|>|t|.
\end{equation}

We first claim that there exist $C_1,\,C_2>0$ such that
\begin{equation}\label{eq35}
\left|\int_{\mathbb{R}^n}V|e^{-itH_0}u|^2dx\right|\geq C_1R_j^{-1-\epsilon_j}-C_2|t|^{-2},\quad\frac54R_{j-1}<|t|<\frac45R_j,\,j>J_0.
\end{equation}
Since $V$ does not change sign, we deduce for $j>J_0$ that
\begin{equation*}
\begin{split}
\left|\int_{\mathbb{R}^n}V|e^{-itH_0}u|^2dx\right|&\geq\int_{X_j}|Ve^{-itH_0}u|^2dx\\
&\geq C_1R_j^{-1-\epsilon_j}\int_{X_j}|e^{-itH_0}u|^2dx\\
&\geq C_1R_j^{-1-\epsilon_j}-CR_j^{-1-\epsilon_j}\int_{\mathbb{R}^n\setminus X_j}|e^{-itH_0}u|^2dx\\
&\geq C_1R_j^{-1-\epsilon_j}-CR_j^{-1-\epsilon_j}\int_{\{|x|<\frac45|t|\}\cup\{|x|>\frac54|t|\}}|e^{-itH_0}u|^2dx\\
&\geq C_1R_j^{-1-\epsilon_j}-C_2|t|^{-2},
\end{split}
\end{equation*}
where in the last two steps we have used $\frac54R_{j-1}<|t|<\frac45R_j$, \eqref{eq3.4} and the assumption $\epsilon_j\geq-1$.

We next claim that there exist $C_3,C_4>0$ such that
\begin{equation}\label{eq37}
\left\|Ve^{-itH_0}u\right\|_{L^2}\leq C_3(R_{j-1}^{-1-\epsilon_{j-1}}+R_j^{-1-\epsilon_j})+C_4|t|^{-2},\quad R_{j-1}<|t|<R_j,\,j>J_0.
\end{equation}
To show this, we split the norm into
\begin{equation}\label{eq38}
\left\|Ve^{-itH_0}u\right\|_{L^2}\leq\left\|Ve^{-itH_0}u\right\|_{L^2(\{\frac12|t|<|x|<|t|\})}+\left\|Ve^{-itH_0}u\right\|_{L^2(\{|x|<\frac12|t|\}\cup\{|x|>|t|\})}.
\end{equation}
For the first right hand side term, when $R_{j-1}<|t|<R_j$ and $j>J_0$, we have
\begin{equation*}
\left\|Ve^{-itH_0}u\right\|_{L^2(\{\frac12|t|<|x|<|t|\})}\leq\left\|Ve^{-itH_0}u\right\|_{L^2(X_{j-1}\cup X_j)}\leq C_3(R_{j-1}^{-1-\epsilon_{j-1}}+R_j^{-1-\epsilon_j}).
\end{equation*}
For the second right hand side term of \eqref{eq38}, we again use \eqref{eq3.4} to get
\begin{equation*}
\left\|Ve^{-itH_0}u\right\|_{L^2(\{|x|<\frac12|t|\}\cup\{|x|>|t|\})}\leq C_4|t|^{-2},
\end{equation*}
and then \eqref{eq37} is shown.

Now we take $J_2>J_1>J_0$ such that
\begin{equation}\label{eq311}
\|e^{-itH_0}u-e^{-itH}W_+u\|_{L^2}=\|e^{itH}e^{-itH_0}u-W_+u\|_{L^2}\leq\frac{7C_1}{120C_3},\quad t\geq R_{J_1},
\end{equation}
and then
\begin{equation}\label{eq312}
\begin{split}
2&\geq\left|\langle e^{iR_{J_2}H}e^{-iR_{J_2}H_0}u-e^{iR_{J_1}H}e^{-iR_{J_1}H_0}u,W_+u\rangle\right|\\
&=\left|\int_{R_{J_1}}^{R_{J_2}}\frac{d}{dt}\langle e^{itH}e^{-itH_0}u,W_+u\rangle dt\right|\\
&=\left|\int_{R_{J_1}}^{R_{J_2}}\langle Ve^{-itH_0}u,e^{-itH}W_+u\rangle dt\right|\\
&\geq\left|\int_{R_{J_1}}^{R_{J_2}}\langle Ve^{-itH_0}u,e^{-itH_0}u\rangle dt\right|-\left|\int_{R_{J_1}}^{R_{J_2}}\langle Ve^{-itH_0}u,e^{-itH_0}u-e^{-itH}W_+u\rangle dt\right|\\
&\geq\int_{R_{J_1}}^{R_{J_2}}\left|\langle Ve^{-itH_0}u,e^{-itH_0}u\rangle\right| dt-\frac{7C_1}{120C_3}\int_{R_{J_1}}^{R_{J_2}}\|Ve^{-itH_0}u\|_{L^2}dt,
\end{split}
\end{equation}
where we again use that fact that $V$ does not change sign. Using \eqref{eq35}, we have
\begin{equation}\label{eq313}
\begin{split}
\int_{R_{J_1}}^{R_{J_2}}\left|\langle Ve^{-itH_0}u,e^{-itH_0}u\rangle\right| dt&\geq\sum_{j=J_1+1}^{J_2}\int_{\frac54R_{j-1}}^{\frac45R_j}\left|\langle Ve^{-itH_0}u,e^{-itH_0}u\rangle\right|dt\\
&\geq\sum_{j=J_1+1}^{J_2}\int_{\frac54R_{j-1}}^{\frac45R_j}(C_1R_j^{-1-\epsilon_j}-C_2t^{-2})dt\\
&\geq\frac{7C_1}{40}\sum_{j=J_1+1}^{J_2}R_j^{-\epsilon_j}-C.
\end{split}
\end{equation}
Using \eqref{eq37} and \eqref{eq311}, we have
\begin{equation}\label{eq3.14}
\begin{split}
&\frac{7C_1}{120C_3}\int_{R_{J_1}}^{R_{J_2}}\|Ve^{-itH_0}u\|_{L^2}dt\\
\leq&\frac{7C_1}{120C_3}\sum_{j=J_1+1}^{J_2}\int_{R_{j-1}}^{R_j}\left(C_3(R_{j-1}^{-1-\epsilon_{j-1}}+R_j^{-1-\epsilon_j})+C_4t^{-2}\right)dt\\
\leq&\frac{7C_1}{120}\sum_{j=J_1+1}^{J_2}\left(R_{j-1}^{-\epsilon_{j-1}}+\frac12R_j^{-\epsilon_j}\right)+C'\\
\leq&\frac{7C_1}{80}\sum_{j=J_1+1}^{J_2}R_j^{-\epsilon_j}+\frac{7C_1}{120}R_{J_1}^{-\epsilon_{J_1}}+C'.
\end{split}
\end{equation}
Combining \eqref{eq312}, \eqref{eq313} and \eqref{eq3.14}, we get
\begin{equation}
2\geq\frac{7C_1}{80}\sum_{j=J_1+1}^{J_2}R_j^{-\epsilon_j}-\frac{7C_1}{120}R_{J_1}^{-\epsilon_{J_1}}-C'',
\end{equation}
and then sending $J_2$ to $\infty$ yields a contradiction to \eqref{eq114}. Thus $W_+$ does not exist, and neither does $W_-$ by the same proof.
\end{proof}

\section{Free Resolvent Estimates: I}\label{sec5}

For $z\in\mathbb{C}\setminus\mathbb{R}$, let $R_0(z)=(H_0-z)^{-1}$ denote the usual $L^2$ resolvent of $H_0$, which is also defined via the Fourier multiplier $(|\xi|^s-z)^{-1}$. This section studies the behavior of $R_0(z)$ when $\mathrm{Im}\,z\rightarrow\pm 0$, and the main result is Theorem \ref{thm34}. Before getting start, we recall two gadgets that will be used in this and the next section. The first one is a Fourier multiplier theorem in $B$.

\begin{lemma}[{\cite[Corollary 14.1.5]{Hor2}}]\label{lm21}
	Let $r\in C^1(\mathbb{R}^n)$ with $|r|$ and $|\nabla r|$ bounded. Then
	\begin{equation}
	\|r(D)u\|_B\leq C(\sup_{\mathbb{R}^n}|r|+\sup_{\mathbb{R}^n}|\nabla r|)\|u\|_B,\quad u\in B,
	\end{equation}
	where $r(D)u=\mathscr{F}^{-1}(r\hat{u})$.
\end{lemma}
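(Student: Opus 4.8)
My plan is to prove the estimate on $B$ directly; since the $B$-adjoint of $r(D)$ is $\bar r(D)$ and $\bar r$ satisfies the same hypotheses, the analogous bound on $B^{*}$ follows by duality. By density it suffices to take $u$ in a dense subclass, e.g. $u\in\mathscr S$. A convenient tool is the elementary transfer estimate $\|u(\cdot-y)\|_{B}\le C(1+|y|)^{1/2}\|u\|_{B}$, obtained by comparing the weights $R_{j}^{1/2}$ carried by the annuli $X_{j}$ with those carried by the translates $X_{j}+y$; it implies at once that convolution by a finite measure $\mu$ is bounded on $B$ with norm $\lesssim\int_{\mathbb R^{n}}(1+|y|)^{1/2}\,d|\mu|(y)$. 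The whole problem is therefore to realize $r(D)=\mathscr F^{-1}r\,*\,(\cdot)$ as a controlled superposition of such convolutions, with the control expressed only through $\sup|r|$ and $\sup|\nabla r|$.

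Concretely I would decompose $u=\sum_{k>0}\chi_{k}u$ and bound
\[
\|r(D)u\|_{B}\le\sum_{j>0}R_{j}^{1/2}\sum_{k>0}\|\chi_{j}r(D)\chi_{k}u\|_{L^{2}},
\]
reducing everything to a Schur-type estimate for the matrix $a_{jk}=\|\chi_{j}r(D)\chi_{k}\|_{L^{2}\to L^{2}}$, namely $\sum_{j>0}R_{j}^{1/2}a_{jk}\le C(\sup|r|+\sup|\nabla r|)R_{k}^{1/2}$ for each $k$. The near-diagonal terms $|j-k|\le2$ are handled by $a_{jk}\le\|r(D)\|_{L^{2}\to L^{2}}=\sup|r|$, which (there being $O(1)$ such pairs, with $R_{j}\simeq R_{k}$) yields the $\sup|r|$ part of the constant. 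For the off-diagonal terms one needs $a_{jk}\le C(\sup|r|+\sup|\nabla r|)\,\omega_{|j-k|}$ with $\sum_{m}2^{m/2}\omega_{m}<\infty$, so as to beat the worst weight ratio $R_{j}^{1/2}/R_{k}^{1/2}=2^{(j-k)/2}$. Since only one derivative of $r$ is available, a direct integration by parts against the kernel $\mathscr F^{-1}r$ is impossible; instead I would fix a partition of unity $1=\sum_{\nu\in\mathbb Z^{n}}\phi_{\nu}$ with $\phi_{\nu}=\phi(\cdot-\nu)$, $\phi\in C_{c}^{\infty}$, $\supp\phi\subset[-1,1]^{n}$, split $r(D)=\sum_{\nu}(r\phi_{\nu})(D)$, and use the modulation identity $(r\phi_{\nu})(D)v=e^{i\nu\cdot x}\bigl(g_{\nu}*(e^{-i\nu\cdot}v)\bigr)$ together with the invariance of the $B$-norm under $v\mapsto e^{\pm i\nu\cdot x}v$. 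Each $r\phi_{\nu}$ is Lipschitz with support in a fixed-size cube, so the smoothness needed to make $g_{\nu}=\mathscr F^{-1}\bigl((r\phi_{\nu})(\cdot+\nu)\bigr)$ decay is borrowed from the smooth factor $\phi_{\nu}$, giving bounds on $g_{\nu}$ uniform in $\nu$ and controlled by $\sup|r|+\sup|\nabla r|$; summing in $\nu$ and then in the annular indices should produce the required $\omega_{|j-k|}$.

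The crux — and the only place where real work is needed — is this off-diagonal estimate at the minimal ($C^{1}$) regularity: one is forced through the frequency partition of unity and the translation/modulation structure of $B$ precisely because $\mathscr F^{-1}r$ is too rough to integrate against directly, and one has to argue carefully that the per-piece constants stay summable in $\nu$. This is the substance of Hörmander's argument in \cite[Sec.~14.1]{Hor2}; the two reductions above, the Schur summation, and the duality passage to $B^{*}$ are essentially bookkeeping, and one closes by checking that the accumulated constant has the asserted form $C(n)\,(\sup|r|+\sup|\nabla r|)$.
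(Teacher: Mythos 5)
Your framework is fine up to the crux: the reduction to a dense subclass, the translation bound $\|u(\cdot-y)\|_B\le C(1+|y|)^{1/2}\|u\|_B$, the block decomposition $u=\sum_k\chi_k u$ with the Schur condition $\sum_{j}R_j^{1/2}a_{jk}\le C(\sup|r|+\sup|\nabla r|)R_k^{1/2}$, and the treatment of the terms with $j\le k+2$ by $\|r(D)\|_{L^2\to L^2}=\sup|r|$ are all correct. But the step you yourself identify as the crux does not work as described, for two reasons. First, multiplying by $\phi_\nu$ cannot lend smoothness: $r\phi_\nu$ is still only $C^1$, so all that Plancherel yields is $\|g_\nu\|_{L^2}+\|\,|x|\,g_\nu\|_{L^2}\le C(\sup|r|+\sup|\nabla r|)$ uniformly in $\nu$. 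One power of $L^2$-averaged decay does not place $g_\nu$ in $L^1$, let alone in $L^1((1+|x|)^{1/2}dx)$, in any dimension (already $\int_{|x|>1}|x|^{-1}dx=\infty$), so your convolution-by-finite-measure lemma never applies to the pieces. Second, the relevant norms of the pieces $(r\phi_\nu)(D)$ do not decay in $\nu$, so the triangle inequality over $\nu\in\mathbb{Z}^n$ diverges; recombining frequency blocks would require square-function almost-orthogonality (as in Lemma \ref{lm22}), which is an $\ell^2$ statement and does not mesh with the $\ell^1$ Schur sum you have set up.

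The off-diagonal estimate is in fact elementary and needs no frequency localization at all: the one available derivative of $r$ should be spent on a \emph{physical-space weight}, not on kernel decay. For $v=\chi_ku$, Plancherel and the product rule give $\|x_l\,r(D)v\|_{L^2}=(2\pi)^{-n/2}\|\partial_{\xi_l}(r\hat v)\|_{L^2}\le\sup|\nabla r|\,\|v\|_{L^2}+\sup|r|\,\|x_lv\|_{L^2}$, hence $\|\,|x|\,r(D)v\|_{L^2}\le\sqrt n\,(\sup|\nabla r|+R_k\sup|r|)\|v\|_{L^2}$. By Chebyshev, for $j\ge k+3$ (so that $|x|>R_{j-1}=R_j/2$ on $X_j$) one gets $\|r(D)v\|_{L^2(X_j)}\le 2R_j^{-1}\sqrt n\,(\sup|\nabla r|+R_k\sup|r|)\|v\|_{L^2}$, i.e. exactly your desired $a_{jk}\le C(\sup|r|+\sup|\nabla r|)2^{-(j-k)}$; since $\sum_{j\ge k+3}R_j^{-1/2}\le CR_k^{-1/2}$ and $R_k\ge1$, these terms sum to $C(\sup|r|+\sup|\nabla r|)R_k^{1/2}\|v\|_{L^2}$. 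The remaining terms are handled by Cauchy--Schwarz and plain $L^2$-boundedness, $\sum_{j\le k+2}R_j^{1/2}\|r(D)v\|_{L^2(X_j)}\le(\sum_{j\le k+2}R_j)^{1/2}\sup|r|\,\|v\|_{L^2}\le CR_k^{1/2}\sup|r|\,\|v\|_{L^2}$. This is the substance of H\"ormander's Corollary 14.1.5; note also that the duality passage to $B^*$ in your first paragraph is not needed for the statement as given.
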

\begin{remark}\label{rrk42}
As mentioned in the Introduction, now an easy duality discussion shows that, if $m$ is a positive integer and $P(D)=(-\Delta)^m$, then $\|f\|_{B_{2m}^*}\approx\|(I+(-\Delta)^m)f\|_{B^*}\approx\|f\|_{B_P^*}$ where $B_P^*$ is defined by \eqref{eq115}.
\end{remark}

The second one shall be used to piece up localized estimates of the free resolvent.

\begin{lemma}[{\cite[Theorem 14.1.7]{Hor2}}]\label{lm22}
	Given $\phi\in C_c^\infty$, and set $\phi(D-\eta)u=\mathscr{F}^{-1}(\phi(\cdot-\eta)\hat{u})$ for $\eta\in\mathbb{R}^n$ and $u\in\mathscr{S}'$. Then
	\begin{equation}\label{eq22}
	\int\|\phi(D-\eta)u\|_B^2d\eta\leq C\|u\|_B^2,\quad u\in B.
	\end{equation}
	Moreover, if $\|\phi\|_{L^2}>0$, we have
	\begin{equation}\label{eq23}
	\|u\|_{B^*}^2\leq C\int\|\phi(D-\eta)u\|_{B^*}^2d\eta,\quad u\in L_{\mathrm{loc}}^2\cap\mathscr{S}'.
	\end{equation}
\end{lemma}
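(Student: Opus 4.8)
The plan is to reduce both inequalities to two elementary ingredients: a Plancherel identity in the frequency-shift variable $\eta$, and the rapid decay of $\check\phi=\mathscr{F}^{-1}\phi$, which is Schwartz since $\phi\in C_c^\infty$. I would begin by recording the pointwise identity $\phi(D-\eta)u(x)=e^{ix\cdot\eta}\bigl(\check\phi\ast(e^{-i\eta\cdot}u)\bigr)(x)$, obtained from the substitution $\xi=\eta+\zeta$ in the defining Fourier integral. Since the prefactor is unimodular, $|\phi(D-\eta)u(x)|=|g_\eta(x)|$ where $g_\eta(x)=\int\check\phi(x-y)u(y)e^{-i\eta\cdot y}\,dy$, and for fixed $x$ this is precisely $\widehat{\check\phi(x-\cdot)u}\,(\eta)$.

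To prove \eqref{eq22} I would take $u\in B\subset L^2$ and apply Plancherel in $\eta$ to get $\int|\phi(D-\eta)u(x)|^2\,d\eta=(2\pi)^n\int|\check\phi(x-y)|^2|u(y)|^2\,dy$ for each $x$. Integrating over $x\in X_j$ and splitting the $y$-integral over the annuli $X_k$ gives $\int\|\phi(D-\eta)u\|_{L^2(X_j)}^2\,d\eta\le(2\pi)^n\sum_{k>0}a_{jk}\,\|u\|_{L^2(X_k)}^2$ with $a_{jk}=\sup_{y\in X_k}\int_{X_j}|\check\phi(x-y)|^2\,dx$. The next step is to estimate the $a_{jk}$: trivially $a_{jk}\le\|\check\phi\|_{L^2}^2$ when $|j-k|\le1$, while for $|j-k|\ge2$ one has $\mathrm{dist}(X_j,X_k)\gtrsim R_{\max\{j,k\}}$, so the Schwartz bound $|\check\phi(z)|\le C_N\langle z\rangle^{-N}$ yields $a_{jk}\le C_N R_j^{\,n-N}$ for $j\ge k+2$ and $a_{jk}\le C_N R_j^{\,n}R_k^{-N}$ for $k\ge j+2$. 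A short geometric-series computation then shows that for $N$ large (depending only on $n$) one has $\sup_{k>0}\sum_{j>0}R_j^{1/2}R_k^{-1/2}a_{jk}^{1/2}<\infty$. Combining this with Minkowski's inequality in $L^2_\eta$ applied to $\|\phi(D-\eta)u\|_B=\sum_j R_j^{1/2}\|\phi(D-\eta)u\|_{L^2(X_j)}$ and the elementary bound $\bigl(\sum_k a_{jk}\|u\|_{L^2(X_k)}^2\bigr)^{1/2}\le\sum_k a_{jk}^{1/2}\|u\|_{L^2(X_k)}$ gives $\bigl(\int\|\phi(D-\eta)u\|_B^2\,d\eta\bigr)^{1/2}\le C\sum_{k>0}R_k^{1/2}\|u\|_{L^2(X_k)}=C\|u\|_B$.

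For \eqref{eq23} I would argue by duality against $B$. Setting $\psi=\|\phi\|_{L^2}^{-2}\phi\in C_c^\infty$, one has $\int\overline{\phi(\xi-\eta)}\psi(\xi-\eta)\,d\eta\equiv1$, and Plancherel together with a standard interchange of integrals (valid because for $v\in C_c^\infty$ the function $\hat v$ is Schwartz, $\psi(\cdot-\eta)\hat v$ is $C_c^\infty$ supported in a fixed compact set, and $\int|\phi(\xi-\eta)\overline{\psi(\xi-\eta)}|\,d\eta$ is a finite constant) yields the reproducing formula $\langle u,v\rangle=\int\langle\phi(D-\eta)u,\psi(D-\eta)v\rangle\,d\eta$ for $v\in C_c^\infty$ and $u\in L_\mathrm{loc}^2\cap\mathscr{S}'$. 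Then Cauchy--Schwarz in $\eta$, the $B^*$--$B$ duality, and \eqref{eq22} applied to $\psi$ bound $|\langle u,v\rangle|$ by $C\bigl(\int\|\phi(D-\eta)u\|_{B^*}^2\,d\eta\bigr)^{1/2}\|v\|_B$; taking the supremum over $v\in C_c^\infty$ with $\|v\|_B\le1$, which computes $\|u\|_{B^*}$ since $C_c^\infty$ is dense in $B$, gives \eqref{eq23}.

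I expect the only genuine difficulty to be the off-diagonal bookkeeping in \eqref{eq22}. The $B$-norm is only $\ell^1$-summable in the annulus index and carries the growing weights $R_j^{1/2}$, so one must trade the Schwartz decay of $\check\phi$ against the separation $R_{\max\{j,k\}}$ carefully enough to absorb both $R_j^{1/2}$ and $R_k^{-1/2}$ and still be left with a convergent series. Once the Plancherel-in-$\eta$ identity and the choice of the dual multiplier $\psi$ are in hand, the remaining steps are soft.
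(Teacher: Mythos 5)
The paper does not actually prove this lemma---it is imported verbatim from H\"{o}rmander \cite[Theorem 14.1.7]{Hor2}---so there is no in-paper proof to compare against; your argument is a correct reconstruction of the standard one: Plancherel in the shift variable $\eta$ (giving $\int|\phi(D-\eta)u(x)|^2d\eta=(2\pi)^n\int|\check\phi(x-y)|^2|u(y)|^2dy$) combined with the rapid off-diagonal decay of $\check\phi=\mathscr{F}^{-1}\phi$ against the dyadic weights $R_j^{1/2}$ for \eqref{eq22}, and duality with the reproducing family $\psi(\cdot-\eta)$, $\psi=\|\phi\|_{L^2}^{-2}\phi$, for \eqref{eq23}. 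The only slip is immaterial: $\psi(\cdot-\eta)\hat v$ is supported in $\eta+\mathrm{supp}\,\psi$, which moves with $\eta$, not in a fixed compact set; the interchange of the $\eta$-integral with the pairing against $\hat u\in\mathscr{S}'$ is nevertheless justified because the truncated averages $\bigl(\int_{|\eta|\le M}\overline{\phi(\cdot-\eta)}\psi(\cdot-\eta)\,d\eta\bigr)\hat v$ converge to $\hat v$ in $\mathscr{S}$ and the full $\eta$-integral converges absolutely by the very Cauchy--Schwarz bound you use afterwards.
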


Now we start by a localized resolvent estimate for real valued functions. 

\begin{lemma}[{\cite[Theorem 14.2.2]{Hor2}}]\label{lm31}
	Let $X\subset\mathbb{R}^n$ be bounded and open, $\phi\in C_c^\infty(X)$, $C_1,C_2>0$, and given function $j:X\rightarrow\{1,\cdots,n\}$. Consider all $p\in C^2(X;\mathbb{R})$ such that
	\begin{equation}\label{eq31}
	|\partial_{j(\xi)}p(\xi)|\geq C_1\,\,\text{and}\,\,|\partial^\alpha p(\xi)|\leq C_2,\quad\xi\in\mathrm{supp}\,\phi,\,\,|\alpha|\leq2.
	\end{equation}
	Then for $f\in B$, the maps
	\begin{equation}\label{eq32}
	\mathbb{C}^\pm\ni z\mapsto u_z=\mathscr{F}^{-1}((p-z)^{-1}\phi\hat{f})\in B^*
	\end{equation}
	can be uniquely defined as a weak* continuous function, (i.e. $\mathbb{C}^\pm\ni z\mapsto\langle u_z,v\rangle\in\mathbb{C}$ are continuous for every $v\in B$,) satisfying
	\begin{equation}\label{eq33}
	\|u_z\|_{B^*}\leq C\|f\|_B,\quad f\in B,~z\in\mathbb{C}^\pm,
	\end{equation}
	where $C>0$ is uniform in the above $p$.
\end{lemma}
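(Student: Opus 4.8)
The plan is to reduce \eqref{eq33} to an explicit one-dimensional estimate and to treat the qualitative assertions as soft consequences. For $\mathrm{Im}\,z\neq0$ the multiplier $(p-z)^{-1}\phi$ is bounded, so $u_z\in L^2\hookrightarrow B^*$ and $z\mapsto u_z$ is holomorphic into $L^2$ on $\mathbb{C}\setminus\mathbb{R}$, hence weak${}^*$ continuous there; once \eqref{eq33} is available on $\mathbb{C}^\pm\setminus\mathbb{R}$ it propagates to the real boundary by equicontinuity together with density (if $\widehat f$ vanishes near $\{p=\lambda\}$ then $(p-z)^{-1}\phi\widehat f\to(p-\lambda)^{-1}\phi\widehat f$ in $L^2$, and such $f$ are dense in $B$), while uniqueness of the weak${}^*$ continuous extension is forced by the weak${}^*$ density of $\mathbb{C}^\pm\setminus\mathbb{R}$ in $\mathbb{C}^\pm$. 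So the real content is the uniform bound \eqref{eq33}.

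I would first localize. Since $p\in C^1$ and $|\nabla p|\geq C_1$ on $\mathrm{supp}\,\phi$ while $|\partial^2 p|\leq C_2$, there is a radius $\delta=\delta(C_1,C_2)$ such that on every ball $B(\xi_0,\delta)$ with $\xi_0\in\mathrm{supp}\,\phi$ some fixed coordinate derivative $\partial_m p$ has $|\partial_m p|\geq C_1/2$ and constant sign throughout $B(\xi_0,\delta)$. Covering the compact set $\mathrm{supp}\,\phi$ by finitely many such balls and using a subordinate partition of unity reduces \eqref{eq33} to the case $\phi$ supported in one ball with $\partial_n p\geq C_1/2$ (reindexing, and passing between $\mathbb{C}^+$ and $\mathbb{C}^-$ via $p\mapsto -p$, $z\mapsto-z$ if the sign is negative). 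Composing with a fixed invertible linear map of $\mathbb{R}^n$ — which changes the $B$- and $B^*$-norms only by fixed constants and shifts $z$ by a constant — I may normalize $\xi_0=0$, $p(0)=0$, $\nabla p(0)=e_n$, so that Taylor's formula with the $C^2$ bound gives $p(\xi)=\xi_n+r(\xi)$ on $\mathrm{supp}\,\phi\subset B(0,\delta)$ with $|r|\leq C_2\delta^2$, $|\nabla r|\leq C_2\delta$, $|\partial^\alpha r|\leq C_2$ for $|\alpha|\leq2$, and $\partial_n p\in[\frac12,\frac32]$ for $\delta$ small.

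The model case $r\equiv0$ is done by hand: as $(\xi_n-z)^{-1}$ depends only on $\xi_n$, with $F=\phi(D)f$ (so $\|F\|_B\leq C\|f\|_B$ by Lemma \ref{lm21}) one has $u_z(x',x_n)=\int_\mathbb{R} h(x_n-y_n)F(x',y_n)\,dy_n$ with $h(t)=i\mathbf{1}_{\{t>0\}}e^{izt}$ ($|h|\leq1$ when $\mathrm{Im}\,z>0$), hence $|u_z(x',x_n)|\leq g(x'):=\int_\mathbb{R}|F(x',y_n)|\,dy_n$ pointwise; a Cauchy--Schwarz splitting over the annuli $X_j$ gives $\|g\|_{L^2(\mathbb{R}^{n-1})}\leq\sqrt2\,\|F\|_B$, and since the $x_n$-slice of $X_j$ has length $\leq 2R_j$ one gets $R_j^{-1/2}\|u_z\|_{L^2(X_j)}\leq\sqrt2\,\|g\|_{L^2(\mathbb{R}^{n-1})}$, i.e. \eqref{eq33}; the continuity in $z$ and the case $\mathrm{Im}\,z<0$ are read off the explicit formula.

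For the general $p=\xi_n+r$ the plan is to integrate $\xi_n$ out first: for each fixed $\xi'$ the map $\xi_n\mapsto t=p(\xi',\xi_n)$ is a $C^2$ diffeomorphism of the line with derivative in $[\frac12,\frac32]$, and the substitution recasts the inner integral as a model-type integral carrying an extra phase $e^{ix_n\rho(\xi',t)}$ with $|\partial_t\rho|=O(\delta)$; on the $x_n$-side this is a near-identity operator whose kernel is concentrated near the diagonal with controlled tails, so it perturbs the model estimate by $O(\delta)$ in the leading part (absorbed on the left) plus a Schur-type error. The genuinely delicate piece is the component of $(p-z)^{-1}$ that is singular near $\{p=\mathrm{Re}\,z\}$; there one invokes the Fourier restriction/extension estimate $\mathscr{F}\colon B\to L^2(\{p=\tau\})$ for the level sets — which are $C^2$ graphs over $\{\xi_n=0\}$ of slope $O(\delta)$, valid uniformly in $\tau$ and over the family — so that the surface-measure component of $(p-z)^{-1}$ contributes $\lesssim\|f\|_B$, the complementary principal-value component being controlled via cancellation in $t$ rather than by a triangle inequality. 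Uniformity of $C$ over all admissible $p$ is then clear, since $\delta$, the number of localizing pieces and all the constants depend only on $C_1,C_2$ (equivalently, the admissible $p$'s vary in a precompact family in $C^1_{\mathrm{loc}}$). This last step is the main obstacle: because $p$ — in our application $|\xi|^s$ — is only $C^2$, no smooth pseudodifferential or Fourier-integral calculus is available to straighten the level sets cleanly, and the $B\to B^*$ bound is rigid, not stable under composition with a generic bounded Fourier multiplier since $B^*\not\hookrightarrow B$; one must therefore keep the explicit one-dimensional structure in the distinguished variable throughout and combine it by hand with the restriction estimate for the level hypersurfaces, the delicate point being to close the perturbation with a constant depending only on $C_1,C_2$ and uniform in $z$ and $p$.
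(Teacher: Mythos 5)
First, note that the paper does not prove this lemma at all: it is quoted verbatim from H\"ormander \cite[Theorem 14.2.2]{Hor2}, and the only original content in the paper is the remark following the statement, which explains how $u_z$ is defined on the boundary (via the distributional limits $(p-\lambda\mp i0)^{-1}$ for $f\in\mathscr{S}$, density of $\mathscr{S}$ in $B$, and the uniform bound). Your model case $p=\xi_n$ is correct and complete -- the kernel $h(t)=i\mathbf{1}_{\{t>0\}}e^{izt}$, the pointwise bound by $g(x')$, and the two Cauchy--Schwarz steps over the annuli are exactly the classical one-dimensional argument -- and the partition-of-unity and linear-normalization reductions are sound. But two things go wrong after that.

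The minor one: your route to boundary continuity rests on the claim that $f$ with $\widehat f$ vanishing \emph{near} $\{p=\lambda\}$ are dense in $B$. They are not. The trace map $\mathscr{F}:B\to L^2(M_\lambda)$ of Lemma \ref{lm23} is continuous and surjective; if functions with vanishing trace were dense in $B$, continuity would force every $f\in B$ to have zero trace, contradicting surjectivity. The correct soft argument (the one the paper sketches after the lemma) uses $f\in\mathscr{S}$ and the classical distributional limits $(p-\lambda\mp i0)^{-1}$, which exist because $\lambda$ is not a critical value of $p$ on $\mathrm{supp}\,\phi$.

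The major one: the general case $p=\xi_n+r$ is not proved; it is a plan whose two load-bearing steps are asserted rather than established, and your own closing paragraph concedes this. (i) After the substitution $t=p(\xi',\xi_n)$ the phase $e^{ix_n\theta(\xi',t)}$ is nonlinear in $t$ with only $C^1$ dependence (since $p\in C^2$), and the claim that the resulting operator is "near-identity with kernel concentrated near the diagonal, absorbed on the left up to a Schur-type error" is exactly the kind of composition that the $B\to B^*$ framework does not permit: an $O(\delta)$ perturbation can only be absorbed if it acts boundedly with small norm on the \emph{output} space $B^*$ in a way compatible with re-applying the model estimate, and since $B^*\not\hookrightarrow B$ the model estimate cannot be re-applied -- a difficulty you yourself name in the last paragraph without resolving it. (ii) The treatment of the singular component "via cancellation in $t$" requires a quantitative (H\"older-type) continuity in $t$ of the family of extension operators $L^2(M_t)\to B^*$ in operator norm, uniformly over the admissible class of $p$; this is the actual substance of H\"ormander's Theorems 14.2.1--14.2.2 and is nowhere supplied. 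As it stands the proposal proves the flat model and correctly identifies where the difficulty lives, but does not prove the lemma; for the purposes of this paper the honest course is the one the authors take, namely to cite \cite[Theorem 14.2.2]{Hor2}.
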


Before proceeding, it is necessary to further clarify the definition of $u_z$ in \eqref{eq32}. Recall the fact that if $p\in C^2$ is real-valued and $0$ is not a critical value of $p$, (i.e. $p=0\Rightarrow\nabla p\neq0$,) then
\begin{equation*}
(p\pm i0)^{-1}=\lim_{\epsilon\rightarrow+0}(p\pm i\epsilon)^{-1}
\end{equation*}
exist as first order distributions. Thus by assumption \eqref{eq31}, the maps \eqref{eq32} are well defined for $f\in\mathscr{S}$ when $z\in\mathbb{C}\setminus\mathbb{R}$, which in \cite{Hor2}, through an argument of partition of unity, were shown continuous as $\mathscr{S}'$-valued functions satisfying \eqref{eq33} and can be uniquely extended to $\mathbb{C}^\pm$. Since $\mathscr{S}$ is dense in $B$, such extensions are also well defined for $f\in B$ as $B^*$-valued weak* continuous functions satisfying \eqref{eq33}. 

\begin{lemma}\label{lm32}
	Let $X\subset\mathbb{R}^n$ be bounded and open, $\phi\in C_c^\infty(X)$, and $Q_0\in C(X;\mathbb{R})$. For some $\epsilon>0$, denoted by $K_1=\overline{\{\xi\in X;~|Q_0(\xi)|<4\epsilon\}}\cap\mathrm{supp}\,\phi$, we assume that $Q_0\in C^2(K_1)$ and $\nabla Q_0\neq0$ holds in $K_1$. Then there exist $\delta_0,C>0$, such that when $Q\in C(X;\mathbb{R})\cap C^2(K_1)$ satisfies
	\begin{equation}\label{eq34}
	\|Q-Q_0\|_{C^2(K_1)}+\|Q-Q_0\|_{C(K_2)}<\delta_0
	\end{equation}
	where $K_2=\overline{\{\xi\in X;~|Q_0(\xi)|>3\epsilon\}}\cap\mathrm{supp}\,\phi$, we have
	\begin{equation}
	\|\mathscr{F}^{-1}((Q-\zeta)^{-1}\phi\hat{f})\|_{B^*}\leq C\|f\|_B,\quad f\in B,~\zeta\in\mathbb{C}^\pm,~|\zeta|<\delta_0.
	\end{equation}
\end{lemma}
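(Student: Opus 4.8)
The plan is to reduce Lemma \ref{lm32} to the already-established uniform estimate of Lemma \ref{lm31} by a partition-of-unity argument in the frequency variable, splitting $\mathrm{supp}\,\phi$ into the region where $Q$ is ``elliptic'' (i.e. $|Q|$ is bounded below, so $(Q-\zeta)^{-1}$ is a bounded smooth multiplier) and the region near the characteristic set $\{Q=0\}$ where the genuine resolvent estimate is needed. Concretely, I would pick $\psi\in C^\infty(\mathbb{R})$ with $\psi(t)=1$ for $|t|\le 2\epsilon$ and $\psi(t)=0$ for $|t|\ge 3\epsilon$, and write $\phi = \phi_1 + \phi_2$ with $\phi_1 = \psi(Q_0)\phi$ supported in $K_1$ and $\phi_2 = (1-\psi(Q_0))\phi$ supported in $K_2$. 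Here it matters that $Q_0\in C^2$ precisely on $K_1$, so $\psi(Q_0)$ is a legitimate $C^2$ cutoff there; on the overlap we only use continuity of $Q_0$.

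For the elliptic piece $\phi_2\hat f$: on $\mathrm{supp}\,\phi_2\subset K_2$ we have $|Q_0|\ge 2\epsilon$, hence by \eqref{eq34} with $\delta_0<\epsilon$ we get $|Q|\ge \epsilon$ there, so for $\zeta\in\mathbb{C}^\pm$ with $|\zeta|<\delta_0<\epsilon/2$ the function $(Q-\zeta)^{-1}\phi_2$ is $C^1$ (indeed $C^2$) with $C^1$ norm bounded uniformly in $\zeta$ and in $Q$ satisfying \eqref{eq34} — one differentiates the quotient and uses $|Q-\zeta|\ge\epsilon/2$ together with the $C^2$ bound on $Q$ coming from $\|Q-Q_0\|_{C^2(K_1)}<\delta_0$ and the fixed $Q_0$. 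Then Lemma \ref{lm21} gives $\|\mathscr{F}^{-1}((Q-\zeta)^{-1}\phi_2\hat f)\|_{B}\le C\|f\|_B$, and since $B\hookrightarrow B^*$ this controls the $B^*$ norm as well.

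For the characteristic piece $\phi_1\hat f$: I want to apply Lemma \ref{lm31} with $p=Q$, $\phi$ replaced by $\phi_1$, and $z=\zeta$. The hypotheses \eqref{eq31} are exactly what must be verified on $\mathrm{supp}\,\phi_1\subset K_1$. The $C^2$ upper bound $|\partial^\alpha Q|\le C_2$ for $|\alpha|\le 2$ follows from $\|Q-Q_0\|_{C^2(K_1)}<\delta_0\le 1$ and the fixed $C^2$ bound on $Q_0$ over the compact set $K_1$. For the lower bound: by hypothesis $\nabla Q_0\ne0$ on the compact set $K_1$, so by continuity there is $C_1'>0$ and, for each $\xi\in K_1$, an index $j(\xi)\in\{1,\dots,n\}$ with $|\partial_{j(\xi)}Q_0(\xi)|\ge 2C_1'$; shrinking $\delta_0$ so that $\|Q-Q_0\|_{C^1(K_1)}<C_1'$ forces $|\partial_{j(\xi)}Q(\xi)|\ge C_1'$ for all $Q$ satisfying \eqref{eq34}. (One should note $j$ need only be chosen measurably, or one covers $K_1$ by finitely many pieces on each of which a single index works and sums the corresponding estimates — this is the only mildly delicate bookkeeping point.) Lemma \ref{lm31} then yields $\|\mathscr{F}^{-1}((Q-\zeta)^{-1}\phi_1\hat f)\|_{B^*}\le C\|f\|_B$ with $C$ uniform over the admissible $Q$ and over $\zeta\in\mathbb{C}^\pm$; note the restriction $|\zeta|<\delta_0$ is not even needed for this piece. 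Adding the two estimates finishes the proof.

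The main obstacle is really just making the splitting legitimate: one must choose $\delta_0$ (and the intermediate thresholds $2\epsilon,3\epsilon$ built into $\psi$) so that simultaneously (a) $\phi_1$ is supported in $K_1$ where $Q_0$, and hence $Q$, is $C^2$ and $\psi(Q_0)$ makes sense as a $C^2$ multiplier, (b) on $\mathrm{supp}\,\phi_2$ the denominator $Q-\zeta$ stays bounded away from $0$ uniformly in the allowed $\zeta$ and $Q$, and (c) the perturbation bound \eqref{eq34} is small enough to preserve both the ellipticity lower bound on $\phi_2$'s support and the non-vanishing-gradient lower bound on $K_1$. All three are arranged by taking $\delta_0$ smaller than a fixed multiple of $\epsilon$ and of the lower bound on $|\nabla Q_0|$ over $K_1$; none of the analytic estimates themselves are hard once the geometry is set up.
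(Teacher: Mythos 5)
Your decomposition is exactly the paper's: split $\phi=\phi_1+\phi_2$ by composing a one-dimensional cutoff with $Q_0$, apply Lemma \ref{lm31} to the piece near $\{Q_0=0\}$ after checking that the $C^2(K_1)$-closeness and the non-vanishing of $\nabla Q_0$ transfer the hypotheses \eqref{eq31} to $Q$, and use ellipticity on the remaining piece. The treatment of $\phi_1$ is correct and matches the paper (and your worry about the index $j(\xi)$ is moot: Lemma \ref{lm31} asks only for a \emph{given function} $j:X\to\{1,\dots,n\}$, with no regularity, so the pointwise choice suffices).

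There is, however, a genuine flaw in your handling of the elliptic piece $\phi_2$. You invoke Lemma \ref{lm21}, which requires $(Q-\zeta)^{-1}\phi_2$ to be $C^1$ with bounded gradient. But the hypothesis is only $Q\in C(X;\mathbb{R})\cap C^2(K_1)$, and likewise $Q_0$ is only continuous outside $K_1$; on $\mathrm{supp}\,\phi_2\setminus K_1$ neither $Q$ nor $Q_0$ need be differentiable at all, so the ``$C^2$ bound on $Q$ coming from $\|Q-Q_0\|_{C^2(K_1)}<\delta_0$'' simply does not exist where you need it, and Lemma \ref{lm21} cannot be applied. The repair is to aim lower: the target is only a $B^*$ bound, and since $\|v\|_{B^*}\le\|v\|_{L^2}$, Plancherel together with $|Q-\zeta|\ge\epsilon$ on $\mathrm{supp}\,\phi_2$ gives
\begin{equation*}
\|\mathscr{F}^{-1}((Q-\zeta)^{-1}\phi_2\hat f)\|_{B^*}\le\|\mathscr{F}^{-1}((Q-\zeta)^{-1}\phi_2\hat f)\|_{L^2}\le C\epsilon^{-1}\|f\|_{L^2}\le C\epsilon^{-1}\|f\|_B,
\end{equation*}
which is what the paper does and uses no smoothness of $Q$ whatsoever. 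A second, smaller slip: with your thresholds ($\psi=1$ for $|t|\le2\epsilon$, $\psi=0$ for $|t|\ge3\epsilon$) the support of $\phi_2$ is only contained in $\{|Q_0|\ge2\epsilon\}$, which is \emph{not} a subset of $K_2=\overline{\{|Q_0|>3\epsilon\}}\cap\mathrm{supp}\,\phi$, so \eqref{eq34} does not directly control $|Q-Q_0|$ there; you must either note that the transition band $\{2\epsilon\le|Q_0|\le3\epsilon\}$ lies in $K_1$ (where \eqref{eq34} also gives a sup bound) or shift the cutoff to the paper's choice ($h=1$ for $|t|<3\epsilon$, $h=0$ for $|t|>4\epsilon$) so that $\mathrm{supp}\,\phi_2\subset K_2$ outright.
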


\begin{proof}
	By the assumption, we first note that $0$ is not a critical value of $Q_0$ in $\mathrm{supp}\,\phi$. This will lead to the fact that $0$ is not a critical value of $Q-\zeta$ in $\mathrm{supp}\,\phi$ as long as $|\zeta|$ is small and \eqref{eq34} holds, as we will show in the following.
	
	Take $h\in C_c^\infty(\mathbb{R})$ such that $h(s)=1$ when $|s|<3\epsilon$ and $h(s)=0$ when $s>4\epsilon$, we may write $\phi=\phi_1+\phi_2$ where
	\begin{equation}\label{eq36}
	\phi_1(\xi)=h(Q_0(\xi))\phi(\xi),\quad\phi_2(\xi)=(1-h(Q_0(\xi)))\phi(\xi).
	\end{equation}
	Obviously $\mathrm{supp}\,\phi_1\subset K_1$, and we may assume
	\begin{equation*}
	|\nabla Q_0(\xi)|\geq C_1\,\,\text{and}\,\,|\partial^\alpha Q_0(\xi)|\leq C_2,\quad\xi\in K_1,~|\alpha|\leq2.
	\end{equation*}
	Then for $\xi\in K_1$, there exists $j(\xi)\in\{1,\cdots,n\}$ such that $|\partial_{j(\xi)}Q_0(\xi)|\geq\frac{C_1}{\sqrt{n}}$, thus when
	\begin{equation*}
	\|Q-Q_0\|_{C^2(K_1)}<\delta_0\leq\mbox{$\frac{C_1}{2\sqrt{n}}$},
	\end{equation*}
	we have
	\begin{equation*}
	|\partial_{j(\xi)}Q(\xi)|\geq\mbox{$\frac{C_1}{2\sqrt{n}}$}\,\,\text{and}\,\,|\partial^\alpha Q(\xi)|\leq C_2+\mbox{$\frac{C_1}{2\sqrt{n}}$},\quad\xi\in K_1,~|\alpha|\leq2.
	\end{equation*}
	By Lemma \ref{lm31}, we obtain
	\begin{equation*}
	\|\mathscr{F}^{-1}((Q-\zeta)^{-1}\phi_1\hat{f})\|_{B^*}\leq C\|f\|_B,\quad f\in B,~\zeta\in\mathbb{C}^\pm.
	\end{equation*}
	
	On the other hand, when $\xi\in\mathrm{supp}\,\phi_2\subset K_2$, we have $|Q_0(\xi)|\geq3\epsilon$. Thus when $|\zeta|<\delta_0\leq\epsilon$ and
	\begin{equation*}
	\|Q-Q_0\|_{C(K_2)}<\delta_0,
	\end{equation*}
	we have $|Q-\zeta|\geq\epsilon$ in $\mathrm{supp}\,\phi_2$, and consequently
	\begin{equation*}
	\|\mathscr{F}^{-1}((Q-\zeta)^{-1}\phi_2\hat{f})\|_{B^*}\leq\|\mathscr{F}^{-1}((Q-\zeta)^{-1}\phi_2\hat{f})\|_{L^2}\leq\epsilon^{-1}\|f\|_{L^2}\leq\epsilon^{-1}\|f\|_B,
	\end{equation*}
	which completes the proof.
\end{proof}

\begin{lemma}\label{lm33}
	Given $z_0\in\mathbb{R}\setminus\{0\}$ and $\phi\in C_c^\infty(B(1))$. For $\eta\in\mathbb{R}^n$, we define
	\begin{equation}
	Q_\eta(\xi)=\frac{|\xi+\eta|^s-z_0}{\langle\eta\rangle^s},\quad\xi\in\mathbb{R}^n.
	\end{equation}
	Then there exist $\delta_0,C>0$ such that
	\begin{equation}\label{eq314}
	\|\mathscr{F}^{-1}((Q_\eta-\zeta)^{-1}\phi\hat{f})\|_{B^*}\leq C\|f\|_B,\quad f\in B,~\zeta\in\mathbb{C}^\pm,~|\zeta|<\delta_0,~\eta\in\mathbb{R}^n.
	\end{equation}
\end{lemma}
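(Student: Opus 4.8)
The plan is to derive this from Lemma~\ref{lm32} by a compactness argument in the parameter $\eta$, after first splitting off large $|\eta|$, where the estimate is elementary. First I would observe that on $\mathrm{supp}\,\phi\subset B(1)$ one has $Q_\eta(\xi)\to1$ uniformly as $|\eta|\to\infty$ (since $|\xi+\eta|/\langle\eta\rangle\to1$ and $z_0/\langle\eta\rangle^s\to0$), so there is $R_0>0$ with $Q_\eta\geq\tfrac12$ on $\mathrm{supp}\,\phi$ whenever $|\eta|\geq R_0$. For such $\eta$ and $\zeta\in\mathbb{C}^\pm$ with $|\zeta|<\tfrac14$ we have $|Q_\eta(\xi)-\zeta|\geq Q_\eta(\xi)-\operatorname{Re}\zeta\geq\tfrac14$ on $\mathrm{supp}\,\phi$, so $(Q_\eta-\zeta)^{-1}\phi$ is a bounded function and Plancherel together with the embeddings $B\hookrightarrow L^2\hookrightarrow B^*$ gives
\begin{equation*}
\|\mathscr{F}^{-1}((Q_\eta-\zeta)^{-1}\phi\hat f)\|_{B^*}\leq\|\mathscr{F}^{-1}((Q_\eta-\zeta)^{-1}\phi\hat f)\|_{L^2}\leq C\|f\|_{L^2}\leq C\|f\|_B .
\end{equation*}

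For the remaining compact range $|\eta|\leq R_0$ the idea is to regard each $Q_\eta$ as a small $C^2$-perturbation of a nearby reference symbol $Q_{\eta_0}$ and invoke Lemma~\ref{lm32}. Taking $X=B(1)$, I would first fix $\epsilon_0>0$ small, depending only on $z_0$ and $R_0$ (e.g.\ $\epsilon_0=|z_0|/(8\langle R_0\rangle^s)$), so that for every $|\eta_0|\leq R_0$ the set $K_1(\eta_0)=\overline{\{\xi\in X;\,|Q_{\eta_0}(\xi)|<4\epsilon_0\}}\cap\mathrm{supp}\,\phi$ lies in $\{|\xi+\eta_0|\geq c_0\}$ for a fixed $c_0>0$ (the set being empty when $z_0<0$); this is exactly where the hypothesis $z_0\neq0$ is used, since $|Q_{\eta_0}(\xi)|$ small forces $|\xi+\eta_0|^s$ to be close to $z_0$. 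On $K_1(\eta_0)$ the argument $\xi+\eta_0$ then ranges in a compact set away from the origin where $|\cdot|^s$ is $C^\infty$, so $Q_{\eta_0}\in C^2(K_1(\eta_0))$ with $\nabla Q_{\eta_0}=s|\xi+\eta_0|^{s-2}(\xi+\eta_0)\langle\eta_0\rangle^{-s}\neq0$ there, and Lemma~\ref{lm32} applies with $Q_0=Q_{\eta_0}$ and this $\epsilon_0$, producing $\delta_0(\eta_0),C(\eta_0)>0$. Since $(\xi,\eta)\mapsto|\xi+\eta|^s$ is jointly $C^\infty$ on that region and $\eta\mapsto\langle\eta\rangle^s$ is continuous, the map $\eta\mapsto Q_\eta$ is continuous near $\eta_0$ into $C^2(K_1(\eta_0))$ and into $C(K_2(\eta_0))$, so for $|\eta-\eta_0|$ small the perturbation condition $\|Q_\eta-Q_{\eta_0}\|_{C^2(K_1(\eta_0))}+\|Q_\eta-Q_{\eta_0}\|_{C(K_2(\eta_0))}<\delta_0(\eta_0)$ holds, and Lemma~\ref{lm32} gives $\|\mathscr{F}^{-1}((Q_\eta-\zeta)^{-1}\phi\hat f)\|_{B^*}\leq C(\eta_0)\|f\|_B$ for $\zeta\in\mathbb{C}^\pm$, $|\zeta|<\delta_0(\eta_0)$. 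Covering the compact set $\overline{B(R_0)}$ by finitely many such neighborhoods $B(\eta_i,r(\eta_i))$ and taking $\delta_0=\min\{\tfrac14,\min_i\delta_0(\eta_i)\}$ and $C$ the maximum of all constants involved, the two regimes combine to yield \eqref{eq314}; the weak* continuity and uniqueness of the map in \eqref{eq314} is inherited from Lemma~\ref{lm32}.

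I expect the main obstacle to be the interplay, in the second step, between the limited smoothness of $|\xi|^s$ at the origin and the $C^2$-regularity that Lemma~\ref{lm32} demands of the symbol on the ``critical set'' $K_1$: one has to choose $\epsilon_0$ — and then the radii of the covering balls — so that $K_1$ stays uniformly away from the singular point $\xi=-\eta$, which is precisely what $z_0\neq0$ guarantees. Everything else is a routine compactness-in-$\eta$ and continuous-dependence argument, in the spirit of how localized resolvent estimates are pieced together in \cite[Chapter~14]{Hor2}.
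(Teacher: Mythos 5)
Your proposal is correct, and for the bounded range of $\eta$ it coincides with the paper's argument: apply Lemma \ref{lm32} with $Q_0=Q_{\eta_0}$, choosing $\epsilon_0\sim|z_0|/\langle R_0\rangle^s$ so that $K_1(\eta_0)$ stays in $\{|\xi+\eta_0|\geq c_0\}$ (empty if $z_0<0$), where $|\cdot+\eta_0|^s$ is smooth with non-vanishing gradient, and then conclude by continuity of $\eta\mapsto Q_\eta$ in $C^2(K_1(\eta_0))\cap C(K_2(\eta_0))$ plus a finite covering. Where you genuinely depart from the paper is the large-$|\eta|$ regime. The paper treats $\mathcal{A}=\{Q_\eta;\,|\eta|>2\}$ by showing its closure in $C^2(\overline{B(1)})$ is compact (Arzel\`a--Ascoli), verifying via the inequality $\bigl||\xi|^s-z_0\bigr|+s|\xi|^{s-1}+1\geq C_{z_0,s}(\tfrac52+|\xi|^2)^{s/2}$ that every limit point still satisfies $|Q|+|\nabla Q|\geq C>0$ on $\mathrm{supp}\,\phi$, and then covering $\overline{\mathcal{A}}$ by finitely many $C^2$-balls to which Lemma \ref{lm32} applies. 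You instead observe directly that $Q_\eta\to1$ uniformly on $B(1)$, so for $|\eta|\geq R_0$ and $|\zeta|<\tfrac14$ the symbol satisfies $|Q_\eta-\zeta|\geq\tfrac14$ on $\mathrm{supp}\,\phi$ and Plancherel together with $B\hookrightarrow L^2\hookrightarrow B^*$ finishes. This is more elementary and transparent: the paper's compactness argument, once unwound, is really encoding the same fact that the only accumulation point of $Q_\eta$ as $|\eta|\to\infty$ is the constant $1$, which is uniformly non-characteristic. Both routes are valid; the paper's is closer in spirit to the general simply-characteristic machinery of \cite[Chapter XIV]{Hor2}, while yours exploits the specific structure of $|\xi|^s$. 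Your identification of the key point --- that $z_0\neq0$ is what keeps $K_1$ away from the singularity of $|\xi|^s$ at $\xi=-\eta$ --- matches the paper's remark at the start of its proof.
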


\begin{proof}
	As in the proof of Lemma \ref{lm32}, we will show through the following details that when $z_0\neq0$ and $|\zeta|$ is small, $0$ is not a critical value of $Q-\zeta$ in $\mathrm{supp}\,\phi$.
	
	First consider the set of functions $\mathcal{A}=\{Q_\eta;~|\eta|>2\}$. Obviously $\mathcal{A}\subset C^\infty(\overline{B(1)})$. Let $\overline{\mathcal{A}}$ be the closure of $\mathcal{A}$ in $C^2(\overline{B(1)})$ and suppose $Q\in\overline{\mathcal{A}}$. If $Q=Q_\eta$ for some $\eta$ with $|\eta|\geq2$, we have
	\begin{equation*}
	|\nabla Q(\xi)|=\frac{s|\xi+\eta|^{s-1}}{\langle\eta\rangle^s}\geq C_\eta>0,\quad\xi\in\overline{B(1)}.
	\end{equation*}
	If $Q=\lim_{j\rightarrow\infty}Q_{\eta_j}$ in $C^2(\overline{B(1)})$ for some $\eta_j\rightarrow\infty$, by the inequality
	\begin{equation*}
	\left||\xi|^s-z_0\right|+s|\xi|^{s-1}+1\geq C_{z_0,s}\mbox{$(\frac52+|\xi|^2)^\frac s2$},\quad\xi\in\mathbb{R}^n\setminus\{0\},
	\end{equation*}
	we know for all $j$ that
	\begin{equation*}
	\begin{split}
	\frac{\left||\xi+\eta_j|^s-z_0\right|}{\langle\eta_j\rangle^s}+\frac{s|\xi+\eta_j|^{s-1}}{\langle\eta_j\rangle^s}+\frac{1}{\langle\eta_j\rangle^s}\geq&C_{z_0,s}\left(\frac{\frac52+|\xi+\eta_j|^2}{1+|\eta_j|^2}\right)^\frac s2\\
	\geq&2^{-\frac s2}C_{z_0,s},\quad\xi\in\overline{B(1)},
	\end{split}
	\end{equation*}
	and obtain by sending $j$ to $\infty$ that
	\begin{equation*}
	|Q(\xi)|+|\nabla Q(\xi)|\geq C>0,\quad\xi\in\mathrm{supp}\,\phi,
	\end{equation*}
	which also holds in the previous case. Therefore $Q$ satisfies the condition of $Q_0$ in Lemma \ref{lm32}. On the other hand, one checks by the Arzel\`{a}-Ascoli theorem that $\overline{\mathcal{A}}$ is compact in $C^2(\overline{B(1)})$. Now we know \eqref{eq314} is true for $|\eta|>2$ if we employ a finite covering argument for $\overline{\mathcal{A}}$ with respect to the $C^2$ topology by using Lemma \ref{lm32}.
	
	Next consider $\mathcal{B}=\{Q_\eta;~|\eta|\leq2\}$. If $z_0>0$, we take $\epsilon=\frac{z_0}{8\langle\eta\rangle^s}$, then $|Q_\eta(\xi)|<4\epsilon$ implies $|\xi+\eta|^s>\frac{z_0}{2}$. Therefore $Q_\eta\in C^2(K_1)$ where $K_1$ is defined in Lemma \ref{lm32} with $Q_0$ replaced by $Q_\eta$, and
	\begin{equation*}
	|\nabla Q_\eta(\xi)|=\frac{s|\xi+\eta|^{s-1}}{\langle\eta\rangle^s}\geq C_\eta>0,\quad\xi\in\mathrm{supp}\,\phi.
	\end{equation*}
	This implies
	\begin{equation*}
	|Q_\eta(\xi)|+|\nabla Q_\eta(\xi)|\geq C_\eta>0,\quad\xi\in\mathrm{supp}\,\phi,
	\end{equation*}
	which is also true if $z_0>0$ for $Q_\eta$ never vanishes. Now we can apply Lemma \ref{lm32} with $Q_0$ replaced by $Q_\eta$. For $\delta_0$ obtained Lemma \ref{lm32}, by the boundedness of $\xi$ and $\eta$, it is easy to see that there exists $\delta>0$ such that $|\eta'-\eta|<\delta$ implies 
	\begin{equation*}
	\|Q_{\eta'}-Q_\eta\|_{C^2(K_1)}+\|Q_{\eta'}-Q_\eta\|_{C(K_2)}<\delta_0.
	\end{equation*}
	Thus a finite covering argument for $\{\eta\in\mathbb{R}^n;~|\eta|\leq2\}$ shows that \eqref{eq314} is also true for $\mathcal{B}$. Now the proof is complete.
\end{proof}

It is clear that if $\lambda\in\mathbb{R}\setminus\{0\}$, $\lambda$ is not a critical value of $|\xi|^s$ which is $C^\infty$ near $\{|\xi|^s=\lambda\}$, thus by Lemma \ref{lm31}, $R_0(\lambda\pm i0)f$ is well defined at least when $\hat{f}\in C_c^\infty$, and now we are about to prove the main result of this section. 

\begin{theorem}\label{thm34}
	Let $K$ be a closed subset of $\mathbb{C}^+$ or $\mathbb{C}^-$ such that $0\notin K$ and $\mathrm{Re}\,K$ is bounded. Suppose $r\in C^1(\mathbb{R}^n)$ with
	\begin{equation}\label{eq323}
	\sup_{\xi\in\mathbb{R}^n}\left|\frac{r(\xi)}{\langle\xi\rangle^s}\right|+\sup_{\xi\in\mathbb{R}^n}\left|\nabla\frac{r(\xi)}{\langle\xi\rangle^s}\right|<\infty.
	\end{equation}
	Then
	\begin{equation}\label{eq324}
	\|r(D)R_0(z)f\|_{B^*}\leq C_K\|f\|_B,\quad f\in\mathscr{F}^{-1}C_c^\infty,~z\in K.
	\end{equation}
	Moreover, the maps
	\begin{equation}\label{eq325}
	\mathbb{C}^\pm\setminus\{0\}\ni z\mapsto r(D)R_0(z)f\in B^*
	\end{equation}
	can be uniquely extended for $f\in B$ as $B^*$-valued weak* continuous functions. In particular, $R_0(z)$ maps $B$ into $B_s^*$.
\end{theorem}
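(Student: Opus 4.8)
The plan is to deduce everything from the single estimate $\|J_sR_0(z)f\|_{B^*}\le C_K\|f\|_B$ --- that is, from the assertion that $R_0(z):B\to B_s^*$. For the general multiplier $r$ in \eqref{eq324}, note that by \eqref{eq323} the function $r(\xi)/\langle\xi\rangle^s$ is $C^1$ with itself and its gradient bounded, so $r(D)J_{-s}$ is bounded on $B$ by Lemma \ref{lm21}; its adjoint is the Fourier multiplier $\overline{r(\xi)}/\langle\xi\rangle^s$, bounded on $B$ for the same reason, hence $r(D)J_{-s}$ is bounded on $B^*$ by duality. Since $r(D)R_0(z)=\bigl(r(D)J_{-s}\bigr)J_sR_0(z)$, the estimate \eqref{eq324} reduces to the case $r(\xi)=\langle\xi\rangle^s$. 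Set $M_0=\sup_{z\in K}|\operatorname{Re}z|<\infty$, $\delta_1=\operatorname{dist}(0,K)>0$, and fix a small $\delta_0\in(0,\delta_1)$ to be pinned below; the argument then splits according to whether $|\operatorname{Im}z|\ge\delta_0$ or $|\operatorname{Im}z|<\delta_0$.

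In the regime $|\operatorname{Im}z|\ge\delta_0$ one argues by $L^2$-boundedness alone. Here $z\notin\mathbb{R}$, $R_0(z)$ is the usual $L^2$-bounded resolvent, and a direct estimate shows $\langle\xi\rangle^s/(|\xi|^s-z)$ is bounded uniformly over such $z$ (on $\{|\xi|^s\le 2M_0\}$ use $\bigl||\xi|^s-z\bigr|\ge|\operatorname{Im}z|\ge\delta_0$ together with $\langle\xi\rangle^s\le C(M_0,s)$; on $\{|\xi|^s>2M_0\}$ use $\bigl||\xi|^s-z\bigr|\ge\tfrac12|\xi|^s$). Thus $J_sR_0(z)$ is uniformly bounded on $L^2$, and since $B\hookrightarrow L^2\hookrightarrow B^*$ we get $\|J_sR_0(z)f\|_{B^*}\le\|J_sR_0(z)f\|_{L^2}\le C\|f\|_{L^2}\le C\|f\|_B$. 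Here the limited smoothness of $|\xi|^s$ at the origin is irrelevant, since only an $L^\infty$ bound on the symbol is used.

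In the regime $|\operatorname{Im}z|<\delta_0$ we run the Agmon--H\"ormander piecing-up; treat first $z\notin\mathbb{R}$ and arbitrary $f\in B$. Write $z=z_0+it$ with $z_0=\operatorname{Re}z$; then $\delta_1-\delta_0\le|z_0|\le M_0$, so $z_0$ stays in a fixed compact subset $S$ of $\mathbb{R}\setminus\{0\}$. Fix $\phi\in C_c^\infty(B(1))$ with $\|\phi\|_{L^2}>0$, a larger bump $\tilde\phi\in C_c^\infty(B(1))$ with $\tilde\phi\equiv1$ on $\operatorname{supp}\,\phi$, and $\chi\in C_c^\infty(B(1))$ with $\chi\equiv1$ on $\operatorname{supp}\,\phi$, and put $\psi_\eta(\omega)=\phi(\omega)\langle\omega+\eta\rangle^s/\langle\eta\rangle^s$, which is supported in $\operatorname{supp}\,\phi$ and bounded in every $C^k$ uniformly in $\eta\in\mathbb{R}^n$. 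Starting from the identity $\phi(D-\eta)J_sR_0(z)f=\phi(D-\eta)J_sR_0(z)\tilde\phi(D-\eta)f$ (valid for $z\notin\mathbb{R}$ because $\phi(\cdot-\eta)\tilde\phi(\cdot-\eta)=\phi(\cdot-\eta)$ and the operators are $L^2$-bounded Fourier multipliers), and carrying out the frequency translation $\xi\mapsto\xi-\eta$ --- which only multiplies by a unimodular factor in $x$ and hence preserves every $B$- and $B^*$-norm --- one arrives at
\[
\phi(D-\eta)J_sR_0(z)f=e^{i\eta\cdot}\,\psi_\eta(D)\,\mathscr{F}^{-1}\!\left(\chi\,(Q_\eta-\zeta)^{-1}\,\widehat{g_\eta}\right),
\]
where $Q_\eta(\xi)=(|\xi+\eta|^s-z_0)/\langle\eta\rangle^s$ is exactly the symbol of Lemma \ref{lm33}, $\zeta=it/\langle\eta\rangle^s$ satisfies $|\zeta|\le|t|=|\operatorname{Im}z|<\delta_0$, and $g_\eta$ is a unimodular twist of $\tilde\phi(D-\eta)f$. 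By Lemma \ref{lm33} applied with the bump $\chi$ (uniformly for $z_0\in S$, a routine finite covering on top of that lemma), the inner function has $B^*$-norm $\le C\|g_\eta\|_B=C\|\tilde\phi(D-\eta)f\|_B$, while $\psi_\eta(D)$ is bounded on $B^*$ uniformly in $\eta$ by Lemma \ref{lm21} and duality; hence $\|\phi(D-\eta)J_sR_0(z)f\|_{B^*}\le C\|\tilde\phi(D-\eta)f\|_B$. Integrating in $\eta$ and using the two estimates of Lemma \ref{lm22} (the lower estimate for $J_sR_0(z)f$, the upper estimate with $\tilde\phi$ in place of $\phi$ for $f$) yields $\|J_sR_0(z)f\|_{B^*}^2\le C\int\|\tilde\phi(D-\eta)f\|_B^2\,d\eta\le C'\|f\|_B^2$. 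For $z=\lambda\in K\cap\mathbb{R}$ and $\hat f\in C_c^\infty$ one then lets $\varepsilon\downarrow0$: $J_sR_0(\lambda\pm i\varepsilon)f$ is bounded in $B^*$ and converges in $\mathscr{S}'$ to $J_sR_0(\lambda\pm i0)f$ (since $\lambda$ is not a critical value of $|\xi|^s$, as recalled before the theorem), hence converges weak* in $B^*$, and weak* lower semicontinuity of $\|\cdot\|_{B^*}$ preserves the bound. Choosing $\delta_0$ small enough, the two regimes together give \eqref{eq324}; the weak*-continuous extension of $z\mapsto r(D)R_0(z)f$ to $f\in B$ then follows by a standard density argument using that on $\mathscr{F}^{-1}C_c^\infty$ the map is weak*-continuous into $B^*$ (classically off $\mathbb{R}$, and up to $\mathbb{R}$ by the properties of $R_0$ recorded after Lemma \ref{lm31}) together with \eqref{eq324}, whose constant is locally bounded in $z$; and \eqref{eq324} with $r(\xi)=\langle\xi\rangle^s$ is precisely $\|R_0(z)f\|_{B_s^*}\le C_K\|f\|_B$, so $R_0(z)$ maps $B$ into $B_s^*$.

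I expect the main obstacle to lie in the bookkeeping of the uniformity of $C_K$. Lemma \ref{lm33} requires the rescaled spectral parameter $\zeta=i\operatorname{Im}(z)/\langle\eta\rangle^s$ to be small, which is what forces the split on $|\operatorname{Im}z|$; beyond that, the localized estimate must be kept uniform simultaneously in $\eta\in\mathbb{R}^n$ (handled by $\{\psi_\eta\}$ being a bounded family and by Lemmas \ref{lm21} and \ref{lm22}), in $z_0=\operatorname{Re}z$ over a compact subset of $\mathbb{R}\setminus\{0\}$ (a finite covering refining Lemma \ref{lm33}), and in the passage to the real axis (weak* lower semicontinuity). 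By contrast, the limited smoothness of $|\xi|^s$ at $0$ causes no trouble at this stage: in the large-$|\operatorname{Im}z|$ regime only an $L^\infty$ symbol bound is needed, while in the small-$|\operatorname{Im}z|$ regime the set on which $Q_\eta$ has to be $C^2$ stays away from $\xi=-\eta$ precisely because $z_0\ne0$ --- the very mechanism built into Lemma \ref{lm33}.
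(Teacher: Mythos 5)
Your proof is correct and follows essentially the same route as the paper: frequency localization by $\phi(D-\eta)$, translation to the normalized symbol $Q_\eta$ so that Lemma \ref{lm33} applies with a small shifted parameter $\zeta$, uniform control of the auxiliary multipliers via Lemma \ref{lm21}, piecing up with both estimates of Lemma \ref{lm22}, and the trivial $L^2$ bound when $|\operatorname{Im}z|$ is bounded below. The only differences are cosmetic repackagings (reducing to $r=\langle\cdot\rangle^s$ upfront by $B^*$-boundedness of $r(D)J_{-s}$, and reaching the real axis by weak* lower semicontinuity rather than applying the localized lemma directly at real $z$), both of which are sound.
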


\begin{proof}
	Similar to the argument after Lemma \ref{lm31}, the maps in \eqref{eq325} are $\mathscr{S}'$-valued continuous functions if $\hat{f}\in C_c^\infty$, thus the second statement comes from \eqref{eq324} and the fact that $\mathscr{F}^{-1}C_c^\infty$ is dense in $B$.
	
	For any $z_0\in\mathbb{R}\setminus\{0\}$, we first prove that \eqref{eq324} is true for $z\in\mathbb{C}^\pm$ with $|z-z_0|$ sufficiently small. Notice that $|z-z_0|<\epsilon$ implies $\left|\frac{z-z_0}{\langle\eta\rangle^s}\right|<\epsilon$ for $\eta\in\mathbb{R}^n$, thus we take $\phi\in C_c^\infty(B(1))$ with $\phi\equiv1$ in $B(\frac12)$, and apply Lemma \ref{lm33} with $\zeta=\frac{z-z_0}{\langle\eta\rangle^s}$ to have
	\begin{equation*}
	\begin{split}
	\left\|\mathscr{F}^{-1}\left(\left(\mbox{$\frac{|\cdot+\eta|^s-z}{\langle\eta\rangle^s}$}\right)^{-1}\phi\hat{f}\right)\right\|_{B^*}=&\left\|\mathscr{F}^{-1}\left(\left(\mbox{$\frac{|\cdot+\eta|^s-z_0}{\langle\eta\rangle^s}-\frac{z-z_0}{\langle\eta\rangle^s}$}\right)^{-1}\phi\hat{f}\right)\right\|_{B^*}\\
	\leq&C\|f\|_B,\quad f\in B,~\eta\in\mathbb{R}^n,~z\in\mathbb{C}^\pm,~|z-z_0|<\epsilon,
	\end{split}
	\end{equation*}
	for some $\epsilon>0$. If we take $\phi_0\in C_c^\infty(B(\frac12))$ with $\|\phi_0\|_{L^2}>0$, and replace the above $\hat{f}$ with $\phi_0r(\cdot+\eta)\hat{f}(\cdot+\eta)$ where $f\in\mathscr{F}^{-1}C_c^\infty$, we have $\phi_0r(\cdot+\eta)\hat{f}(\cdot+\eta)\in \mathscr{F}B$ by Lemma \ref{lm21} with the assumption $r\in C^1$, and then for $\eta\in\mathbb{R}^n$ that
	\begin{equation*}
	\begin{split}
	\langle\eta\rangle^s\|\phi_0(D-\eta)r(D)R_0(z)f\|_{B^*}\leq&C\|r(D)\phi_0(D-\eta)f\|_B\\
	\leq&C\left\|(I-\Delta)^\frac s2\phi_0(D-\eta)f\right\|_B\\
	=&C\left\|(I-\Delta)^\frac s2\phi(D-\eta)\phi_0(D-\eta)f\right\|_B,
	\end{split}
	\end{equation*}
	where in the second line we have used Lemma \ref{lm21} with the assumption \eqref{eq323}. One checks that
	\begin{equation*}
	\left|\frac{\langle\xi\rangle^s\phi(\xi-\eta)}{\langle\eta\rangle^s}\right|+\left|\nabla_\xi\frac{\langle\xi\rangle^s\phi(\xi-\eta)}{\langle\eta\rangle^s}\right|\leq C,\quad\xi,\eta\in\mathbb{R}^n,
	\end{equation*}
	we thus use Lemma \ref{lm21} again to obtain
	\begin{equation}\label{eq329}
	\|\phi_0(D-\eta)r(D)R_0(z)f\|_{B^*}\leq C\|\phi_0(D-\eta)f\|_B,\quad\eta\in\mathbb{R}^n.
	\end{equation}
	Squaring both sides of \eqref{eq329} and integrating on $\mathbb{R}^n$ with respect to $\eta$, \eqref{eq324} when $z\in\mathbb{C}^\pm$ and $|z-z_0|<\epsilon$ is then a consequence of \eqref{eq22} and \eqref{eq23}.
	
	Now we can apply a finite covering to $K\cap\mathbb{R}$ in a $\mathbb{C}^+$- or $\mathbb{C}^-$-neighborhood, and obtain \eqref{eq324} when $|\mathrm{Im}\,z|<M$ for some $M>0$. The rest when $|\mathrm{Im}\,z|\geq M$ is a result of the trivial estimate
	\begin{equation*}
	\begin{split}
	\|r(D)R_0(z)f\|_{L^2}\leq&C\left\|\frac{\langle\cdot\rangle^s\hat{f}}{|\cdot|^s-\mathrm{Re}\,z-i\mathrm{Im}\,z}\right\|_{L^2}\\
	\leq&C_{\mathrm{Re}\,K,M}\|f\|_{L^2},
	\end{split}
	\end{equation*}
	and embeddings $B\hookrightarrow L^2\hookrightarrow B^*$. The proof of \eqref{eq324} is now complete.
\end{proof}

\section{Free Resolvent Estimates: II}\label{sec6}

The main result of this section is Theorem \ref{thm64}, and a use of Littlewood-Paley theory in Lemma \ref{lm62} is crucial. We shall establish weighted versions of results in Section \ref{sec5} for the free resolvents $R_0(\lambda\pm i0)$ acting on $f\in B$ with $\hat{f}=0$ on $\{p(\xi)=\lambda\}$, where such trace is actually well defined by the following lemma as mentioned in the Introduction.

\begin{lemma}[{\cite[Theorem 14.1.1]{Hor2}}]\label{lm23}
Let $M\subset\mathbb{R}^n$ be a compact $C^1$ hypersurface. Then the map
\begin{equation}
\mathscr{S}\ni u\mapsto\hat{u}|_M\in L^2(M,dS)
\end{equation}
can be extended to a surjection from $B$ to $L^2(M,dS)$ by continuity, where $dS$ is the Euclidean surface measure in $M$.
\end{lemma}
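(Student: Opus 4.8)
Since $\mathscr{S}$ is dense in $B$, the plan is to prove $\|\hat u|_M\|_{L^2(M,dS)}\le C\|u\|_B$ for $u\in\mathscr{S}$ and then extend by continuity. Cover the compact $C^1$ surface $M$ by finitely many open sets in each of which, after permuting coordinates, $M$ is a $C^1$ graph $\xi_n=\psi(\xi')$ over a bounded open $V\subset\R^{n-1}$; as $\overline V$ is compact, $\sqrt{1+|\nabla\psi|^2}$ is bounded there, so it is enough to estimate $\int_V|\hat u(\xi',\psi(\xi'))|^2\,d\xi'$ by $C\|u\|_B^2$ for one such patch. By Fubini, $\hat u(\xi',\psi(\xi'))=\int_{\R}\big(\mathscr{F}_{x'}u(\cdot,x_n)\big)(\xi')\,e^{-ix_n\psi(\xi')}\,dx_n$, where $\mathscr{F}_{x'}$ is the Fourier transform in the first $n-1$ variables; Minkowski's integral inequality, then Plancherel in $\R^{n-1}$ together with $|e^{-ix_n\psi(\xi')}|=1$, give $\|\hat u(\cdot,\psi(\cdot))\|_{L^2(V)}\le C\int_{\R}\|u(\cdot,x_n)\|_{L^2(\R^{n-1})}\,dx_n$. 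Finally the last integral is $\le C\|u\|_B$: split the $x_n$-axis into the dyadic slabs $\{2^{k-1}<|x_n|<2^k\}$ (and $\{|x_n|<1\}$, handled by $B\hookrightarrow L^2$), apply Cauchy--Schwarz on each, observe that such a slab meets only the annuli $X_j$ with $j\gtrsim k$, and sum using $\ell^2\hookrightarrow\ell^1$ and the elementary bound $\sum_{k\le j}R_k^{1/2}\le CR_j^{1/2}$. Summing over the finitely many patches completes this half and produces a bounded $\rho\colon B\to L^2(M,dS)$.

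\textbf{Surjectivity.} The range of $\rho$ is dense: for $G\in C_c^\infty(\R^n)$ the function $u:=\mathscr{F}^{-1}G\in\mathscr{S}\subset B$ has $\hat u|_M=G|_M$, and the restrictions $\{G|_M:G\in C_c^\infty(\R^n)\}$ form an algebra separating points of $M$, containing the constants and closed under conjugation, hence dense in $C(M)$ by Stone--Weierstrass and a fortiori in $L^2(M,dS)$. Consequently $\rho^*$ is injective, and by the closed range theorem it remains only to prove that $\rho^*\colon L^2(M,dS)\to B^*$ is bounded below. A computation with Parseval's formula identifies $\rho^*g$ with a fixed multiple of $\mathscr{F}^{-1}(g\,dS)$ (a bounded, real-analytic function, since $g\,dS$ is a compactly supported finite measure), so the desired estimate is
\begin{equation}\label{eqplanLB}
\big\|\mathscr{F}^{-1}(g\,dS)\big\|_{B^*}\ge c\,\|g\|_{L^2(M,dS)},\qquad g\in L^2(M,dS).
\end{equation}

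\textbf{The lower bound \eqref{eqplanLB}} is, I expect, the main obstacle. In the model case where $M$ is a bounded piece of the hyperplane $\{\xi_n=0\}$ one computes that $\mathscr{F}^{-1}(g\,dS)(x',x_n)$ equals, up to a constant, the $(n-1)$-dimensional inverse Fourier transform $G(x')$ of $g$ (extended by zero), \emph{independent of $x_n$}, with $\|G\|_{L^2(\R^{n-1})}^2=c\|g\|_{L^2}^2$ by Plancherel; since each annulus $X_j$ carries an $x_n$-range of length comparable to $R_j$ over any fixed compact set of $x'$, this yields $\|\mathscr{F}^{-1}(g\,dS)\|_{L^2(X_j)}^2\gtrsim R_j\|g\|_{L^2}^2$ for $j$ large, hence \eqref{eqplanLB}. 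For a general $C^1$ surface I would take a finer partition of unity so that on each patch $M$ is a graph $\xi_n=\psi(\xi')$ with $\nabla\psi$ almost constant, expand
\begin{equation*}
\big\|\mathscr{F}^{-1}(g\,dS)\big\|_{L^2(X_j)}^2=\iint_{M\times M}g(\xi)\,\overline{g(\eta)}\Big(\int_{X_j}e^{ix\cdot(\xi-\eta)}\,dx\Big)\,dS_\xi\,dS_\eta ,
\end{equation*}
and isolate the part with $|\xi-\eta|\lesssim R_j^{-1}$: because $\xi-\eta=(\xi'-\eta',\nabla\psi(\eta')\cdot(\xi'-\eta'))+o(|\xi'-\eta'|)$ by the $C^1$ regularity, this diagonal contribution reproduces, patch by patch, the flat-case main term $\approx R_j\|g\|_{L^2(M)}^2$, while the complementary region and the cross terms between distinct patches are $o(R_j)$ by the oscillation of $x\mapsto e^{ix\cdot(\xi-\eta)}$ over the annulus $X_j$. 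Turning this heuristic into a rigorous estimate --- i.e.\ controlling the off-diagonal and cross-patch error terms quantitatively using only that $M$ is $C^1$ --- is the technical heart of the argument; the boundedness half and the density of the range are comparatively soft.
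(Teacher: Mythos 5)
Your architecture is the standard one (the paper itself gives no proof and simply cites H\"ormander, whose argument has the same two halves), and the boundedness half is complete and correct: graph coordinates, Minkowski plus Plancherel in $x'$, and the dyadic Cauchy--Schwarz bound $\int_{\mathbb{R}}\|u(\cdot,x_n)\|_{L^2(\mathbb{R}^{n-1})}\,dx_n\le C\|u\|_B$ (your ``$\ell^2\hookrightarrow\ell^1$'' should read $\|a\|_{\ell^2}\le\|a\|_{\ell^1}$, but the inequality you use is the right one). The reduction of surjectivity to a lower bound for $\rho^*$ is also the correct move and works in the non-reflexive setting.

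The genuine gap is exactly where you flag it: the lower bound $\|\mathscr{F}^{-1}(g\,dS)\|_{B^*}\ge c\|g\|_{L^2(M,dS)}$ is not proved, and it is the entire content of surjectivity. Since $R^{-1}\int_{|x|<R}|w|^2dx\le 4\|w\|_{B^*}^2$, what you need is precisely the average-decay theorem for Fourier transforms of $L^2$ surface densities (H\"ormander, Vol.~I, Theorem 7.1.26):
\begin{equation*}
\liminf_{R\to\infty}\ R^{-1}\int_{|x|<R}\bigl|\mathscr{F}^{-1}(g\,dS)(x)\bigr|^2dx\ \ge\ c\,\|g\|_{L^2(M,dS)}^2 .
\end{equation*}
Your bilinear heuristic does not close as written, for two concrete reasons. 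First, for $g$ merely in $L^2$ the near-diagonal replacement $g(\xi)\overline{g(\eta)}\approx|g(\xi)|^2$ has no pointwise meaning; one must first use the already-proved upper bound (dualized) to reduce to continuous $g$ supported in a small patch. Second, and more seriously, the off-diagonal and cross-patch terms are \emph{not} $o(R_j)$ by size: the kernel $K_j(\zeta)=\int_{X_j}e^{ix\cdot\zeta}dx$ obeys only $|K_j(\zeta)|\lesssim R_j^{n}(1+R_j|\zeta|)^{-(n+1)/2}$, and integrating this bound against $|g(\xi)|\,|g(\eta)|\,dS_\xi dS_\eta$ over the $(n-1)$-dimensional surface gives a contribution of order $R_j$ when $n=2$, $R_j\log R_j$ when $n=3$, and $R_j^{(n-1)/2}\gg R_j$ when $n\ge4$ --- i.e.\ at least as large as the main term. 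So the error terms can only be beaten by exploiting cancellation in the $(\xi,\eta)$ integration as well, not just the oscillation of $x\mapsto e^{ix\cdot(\xi-\eta)}$ over $X_j$; for a merely $C^1$ surface there is no stationary-phase/curvature decay to invoke. The rigorous route is H\"ormander's: prove the exact limit in the flat case, transfer it to $M$ by a fine partition of unity and comparison with tangent planes, and control the approximation errors with the upper bound. Until that (or a citation of Theorem 7.1.26) is supplied, the surjectivity half is incomplete.
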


First comes the weighted version of Lemma \ref{lm31}.

\begin{lemma}[{\cite[Theorem 14.2.4]{Hor2}}]\label{lm61}
Let $N\in\mathbb{N}_+$, $X\subset\mathbb{R}^n$ be bounded and open, $\phi\in C_c^\infty(X)$, $C_1,C_2>0$, and given function $j:X\rightarrow\{1,\cdots,n\}$. Consider all $p\in C^{2+N}(X;\mathbb{R})$ such that
\begin{equation}
|\partial_{j(\xi)}p(\xi)|\geq C_1\,\,\text{and}\,\,|\partial^\alpha p(\xi)|\leq C_2,\quad\xi\in\mathrm{supp}\,\phi,\,\,|\alpha|\leq2+N.
\end{equation}
If $\lambda\in\mathbb{R}$, $f\in B$ and $\phi\hat{f}=0$ on $\{\xi\in\mathbb{R}^n;~p(\xi)=\lambda\}$, then $u_\lambda=\mathscr{F}^{-1}((p-\lambda\pm i0)^{-1}\phi\hat{f})$ is independent of the sign, and
\begin{equation}
\|\tilde{\mu}u_\lambda\|_{B^*}\leq C\|\tilde{\mu}f\|_B,
\end{equation}
where $\tilde{\mu}(\cdot)=\mu(|\cdot|)$, $\mu$ is any $C^1$ non-decreasing function satisfying
\begin{equation}
(1+t)\mu'(t)\leq N\mu(t),\quad t>0,
\end{equation}
and $C>0$ is independent of $p,\lambda,\mu$ and $f$.	
\end{lemma}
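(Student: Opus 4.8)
The plan is to rerun the proof of the unweighted Lemma \ref{lm31} (which is \cite[Theorem 14.2.2]{Hor2}; the present weighted form is \cite[Theorem 14.2.4]{Hor2}), carrying the weight $\tilde\mu$ along. The guiding principle is that the integer $N$ plays matching roles on the two sides of the hypothesis: $\mu$ non-decreasing with $(1+t)\mu'(t)\le N\mu(t)$ forces $\mu\equiv0$ or $\mu>0$ on $[0,\infty)$, forces $\mu(t)\le C\langle t\rangle^{N}$, and — this is the only structural fact about $\mu$ we shall use at each stage — gives the pointwise bound $|\nabla\tilde\mu(x)|=\mu'(|x|)\le N\,\tilde\mu(x)(1+|x|)^{-1}$. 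Thus $\tilde\mu$ costs one $x$-differentiation against a gain $(1+|x|)^{-1}$, and iterating this through an induction on $N$ matches the $C^{2+N}$ regularity available on $p$.

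\emph{Independence of the sign.} Read ``$\phi\hat f=0$ on $M:=\{p=\lambda\}\cap\supp\phi$'' through the $L^2(M)$ trace map of Lemma \ref{lm23}. By the Plemelj jump relation $(p-\lambda+i0)^{-1}-(p-\lambda-i0)^{-1}=-2\pi i\,\delta(p-\lambda)$, the two candidate values of $u_\lambda$ differ by $\mathscr F^{-1}$ of $-2\pi i\,\delta(p-\lambda)\,\phi\hat f$, which is the surface measure on $M$ carrying the (vanishing) trace of $\phi\hat f$, hence is $0$. This settles the claim for $\hat f\in C_c^\infty$ with vanishing trace, and then — granted the a priori estimate below and the density of $\{f\in\mathscr S:\ \phi\hat f=0\text{ near }M\}$ in the closed subspace $\{f\in B:\ \phi\hat f|_M=0\}$ (routine, using Lemma \ref{lm23}) — for all admissible $f\in B$, which is also where $u_\lambda$ gets defined.

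\emph{The a priori estimate.} It is instructive to look first at $p(\xi)-\lambda=\xi_n$: writing $g=\mathscr F^{-1}(\phi\hat f)$, one has $u_\lambda(x)=-i\int_{x_n}^{\infty}g(x',y_n)\,dy_n=i\int_{-\infty}^{x_n}g(x',y_n)\,dy_n$, the two primitives agreeing precisely because $\int_{\mathbb{R}}g(x',y_n)\,dy_n=0$ is the vanishing on $M$. On the half-line pointing away from the origin one has $|x|\le|(x',y_n)|$, so by monotonicity alone $\mu(|x|)|u_\lambda(x)|\le\int|(\tilde\mu g)(x',y_n)|\,dy_n$ over that half-line, and the $B\to B^*$ bound for the primitive in the $x_n$-direction (which scales correctly and is the model input of the Agmon--H\"ormander estimates, \cite{Hor2}) gives $\|\tilde\mu u_\lambda\|_{B^*}\le C\|\tilde\mu g\|_B\le C'\|\tilde\mu f\|_B$. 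For general $p$ we induct on $N$: the base case $N=0$ forces $\mu$ constant and is Lemma \ref{lm31} together with the sign-independence above. For the step, note that for each $k$ the function $\partial_k u_\lambda=\mathscr F^{-1}((p-\lambda\pm i0)^{-1}(i\xi_k\phi)\,\hat f)$ is again of the form covered by the statement — $i\xi_k\phi\in C_c^\infty(X)$, $(i\xi_k\phi)\hat f$ still vanishing on $M$, same $p\in C^{2+N}(X)\subset C^{2+(N-1)}(X)$ — so applying the case $N-1$ with the non-decreasing weight $\nu(t)=\sup_{0\le\tau\le t}\mu(\tau)(1+\tau)^{-1}$ (regularized to $C^1$ if necessary), which satisfies $(1+t)\nu'(t)\le(N-1)\nu(t)$ and $\mu(t)(1+t)^{-1}\le\nu(t)\le\mu(t)$, yields
\[
\big\|\,\tilde\mu\,(1+|\cdot|)^{-1}\,\partial_k u_\lambda\,\big\|_{B^*}\le C\,\|\tilde\mu f\|_B,\qquad k=1,\dots,n .
\]
Now insert $\tilde\mu$ into the frequency-localization-and-gluing argument (Lemmas \ref{lm21}--\ref{lm22}) underlying Lemma \ref{lm31}: the only new terms are commutators of $p(D)-\lambda$ with $\tilde\mu$, each carrying a factor $\nabla\tilde\mu=O(\tilde\mu(1+|\cdot|)^{-1})$ against one derivative of $u_\lambda$; these are dominated by the displayed gradient bound, while the lowest-order remainder is absorbed by the unweighted estimate $\|u_\lambda\|_{B^*}\le C\|f\|_B$ of Lemma \ref{lm31}. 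The inequality then closes, and summing the annulus estimates gives $\|\tilde\mu u_\lambda\|_{B^*}\le C\|\tilde\mu f\|_B$ with $C$ uniform in $p,\lambda,\mu$.

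\emph{Main obstacle.} The real work is this last step: re-examining H\"ormander's proof of Lemma \ref{lm31} and checking that, with $\tilde\mu$ carried along, every commutator spends at most one order of growth of $\mu$ and one derivative of $p$, so that the cumulative cost never exceeds the budget $N$ — all with constants independent of the merely $C^{2+N}$ symbol $p$, of $\lambda$, and of the weight. Since $p$ is only $C^{2+N}$ and $\mu$ only $C^1$, there is no full symbol calculus to fall back on; after the unit-scale frequency localization the commutator terms have to be estimated by hand, which is exactly where the precise matching of the two roles of $N$ becomes essential. The sign-independence and its density step are routine but genuinely require the trace theorem of Lemma \ref{lm23} even to phrase the hypothesis correctly in the non-reflexive space $B$.
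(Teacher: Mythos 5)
First, note that the paper does not prove this lemma at all: it is quoted verbatim from H\"ormander \cite[Theorem 14.2.4]{Hor2}, so the only meaningful comparison is with H\"ormander's own argument. Your sign-independence step (jump relation \eqref{eq73} plus the trace map of Lemma \ref{lm23}) and your model case $p(\xi)-\lambda=\xi_n$ (the two primitives agreeing because the trace vanishes, and monotonicity of $\mu$ along the half-line pointing away from the origin) are both correct and are exactly the kernel of H\"ormander's proof. But the reduction from general $p$ to the model case --- which you yourself flag as ``the real work'' --- is not carried out, and the mechanism you propose for it does not close as sketched. You want to write $(p(D)-\lambda)(\tilde\mu u_\lambda)=\tilde\mu\,\mathscr{F}^{-1}(\phi\hat f)+[p(D),\tilde\mu]u_\lambda$ and treat the commutator as a new source term, controlled by the inductive bound $\|\tilde\mu(1+|\cdot|)^{-1}\partial_k u_\lambda\|_{B^*}\le C\|\tilde\mu f\|_B$. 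Two things go wrong. (i) To feed a source term back through the resolvent estimate of Lemma \ref{lm31} you need it in $B$, but your inductive hypothesis only gives a $B^*$ bound on (a weighted version of) $\nabla u_\lambda$; since $B\subsetneq B^*$, ``dominated by the displayed gradient bound'' does not produce an admissible right-hand side. (ii) The identification of $[p(D),\tilde\mu]$ with ``$\nabla\tilde\mu$ against one derivative of $u_\lambda$'' is only the formal first-order term of a symbol expansion; with $p\in C^{2+N}$ and $\mu$ merely $C^1$ there is no remainder estimate available, and this unproven step is precisely where the extra $N$ orders of regularity of $p$ would have to be spent --- as set up, your induction never actually uses more than $C^2$ of $p$, which cannot be right. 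There is also a secondary unresolved point: after subtracting the source terms, one must still identify which boundary value $(p-\lambda\pm i0)^{-1}$ the function $\tilde\mu u_\lambda$ corresponds to before Lemma \ref{lm31} applies.

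For the record, H\"ormander's route avoids commutators with $\tilde\mu$ entirely: one takes a finite partition of unity on $\mathrm{supp}\,\phi$ so that on each piece a fixed coordinate, say $\xi_n$, satisfies $|\partial_n p|\ge C_1/\sqrt n$; one factors $p(\xi)-\lambda=q(\xi,\lambda)\bigl(\xi_n-a(\xi',\lambda)\bigr)$ by the implicit function theorem, with $q^{-1}$ and $a$ having about $N+1$ bounded derivatives (this is where $C^{2+N}$ is consumed); one removes $q^{-1}$ and the cutoffs with the \emph{weighted} multiplier theorem (\cite[Theorem 14.1.4]{Hor2}, the $\tilde\mu$-analogue of Lemma \ref{lm21}, whose hypothesis ``$N+1$ bounded derivatives of the multiplier'' is exactly matched to $(1+t)\mu'\le N\mu$ through a kernel estimate and the slow-variation bound $\mu(s)\le C\mu(t)(1+|s-t|)^N$); and one then treats $\bigl(\xi_n-a(\xi')\mp i0\bigr)^{-1}$ by explicit integration in $x_n$ as in your model computation. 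If you want to complete your write-up, the honest path is to import that weighted multiplier theorem and the factorization rather than to pursue the commutator induction.
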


Before introducing the analogue of Lemma \ref{lm32}, we recall the Littlewood-Paley characterization of inhomogeneous Lipschitz space $\Lambda_\gamma(\mathbb{R}^n)$ where $\gamma\in\mathbb{R}_+$ (see \cite[Section 1.4]{gra2}). Let $\Phi\in C_c^\infty$ be non-negative, supported in $\{\xi\in\mathbb{R}^n;~\frac67\leq|\xi|\leq2\}$, and equal to $1$ on $\{\xi\in\mathbb{R}^n;~1\leq|\xi|\leq\frac{12}{7}\}$, satisfying
\begin{equation}
\sum_{j=-\infty}^{+\infty}\Phi(2^{-j}\xi)=1,\quad\xi\in\mathbb{R}^n\setminus\{0\}.
\end{equation}
Also let
\begin{equation}
\tilde{\Phi}(\xi)=\begin{cases}
\sum_{j=-\infty}^{0}\Phi(2^{-j}\xi),\quad&\xi\neq0,\\
1,&\xi=0.
\end{cases}
\end{equation}
We define $\Delta_jf=\mathscr{F}^{-1}(\Phi(2^{-j}\cdot)\hat{f})$ and $\tilde{\Delta}f=\mathscr{F}^{-1}(\tilde{\Phi}\hat{f})$ for $f\in\mathscr{S}'$. Then
\begin{equation}\label{eq66}
\|\tilde{\Delta}f\|_{L^\infty}+\sup_{j\geq1}2^{j\gamma}\|\Delta_jf\|_{L^\infty}\leq C\|f\|_{\Lambda_\gamma},\quad f\in\Lambda_\gamma(\mathbb{R}^n),
\end{equation}
where $\Lambda_\gamma(\mathbb{R}^n)$ consists of all continuous functions $f$ satisfying
\begin{equation}
\|f\|_{\Lambda_\gamma}=\|f\|_{L^\infty}+\sup_{x\in\mathbb{R}^n}\sup_{h\in\mathbb{R}^n\setminus\{0\}}\frac{|D_h^{[\gamma]+1}(f)(x)|}{|h|^\gamma}<\infty,
\end{equation}
and here $D_h$ is the difference operator at length $h$.

\begin{lemma}\label{lm62}
Let $s>0$, $N\in\mathbb{N_+}$, $X\subset\mathbb{R}^n$ be bounded and open, $\phi\in C_c^\infty(X)$, and $Q_0\in C(X;\mathbb{R})$. For some $\epsilon>0$, let $K_1,K_2$ be defined in Lemma \ref{lm32}, and $\phi_2$ be defined in \eqref{eq36}. We assume that $Q_0\in C^{2+N}(K_1)$ with $\nabla Q_0\neq0$ on $K_1$, and $\phi_2Q_0\in\Lambda_s$. Then there exist $\delta_0,C>0$, such that when $Q\in C(X;\mathbb{R})\cap C^{2+N}(K_1)$ satisfies
\begin{equation}
\|Q-Q_0\|_{C^{2+N}(K_1)}+\|Q-Q_0\|_{C(K_2)}<\delta_0,\quad\|\phi_2Q\|_{\Lambda_s}<\delta_0^{-1},
\end{equation}
and when $\lambda\in(-\delta_0,\delta_0)$, $f\in B$ with $\phi\hat{f}=0$ on $\{\xi\in\mathbb{R}^n;~Q(\xi)=\lambda\}$, we have
\begin{equation}
\|\tilde{\mu}\mathscr{F}^{-1}((Q-\lambda)^{-1}\phi\hat{f})\|_{B^*}\leq C\|\tilde{\mu}f\|_B,
\end{equation}
where $\tilde{\mu}(\cdot)=\mu(|\cdot|)$ and $\mu$ is any $C^1$ non-decreasing function satisfying
\begin{equation}\label{eq610}
(1+t)\mu'(t)\leq\min\{\mbox{$s+\frac12,N$}\}\mu(t),\quad t>0.
\end{equation}
\end{lemma}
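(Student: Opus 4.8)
The plan is to follow the proof of Lemma~\ref{lm32} line by line, replacing its two ingredients by their weighted analogues. With $h$ as in \eqref{eq36}, write $\phi=\phi_1+\phi_2$, $\phi_1=h(Q_0)\phi\in C_c^\infty(K_1)$, $\phi_2=(1-h(Q_0))\phi\in C_c^\infty(K_2)$, and split accordingly
$u_\lambda:=\mathscr{F}^{-1}((Q-\lambda)^{-1}\phi\hat f)=u_\lambda^1+u_\lambda^2$ with $u_\lambda^i=\mathscr{F}^{-1}((Q-\lambda)^{-1}\phi_i\hat f)$. Since $\phi\hat f=0$ on $\{Q=\lambda\}$ and $\phi_i$ is a smooth multiple of $\phi$, the trace condition $\phi_1\hat f=0$ on $\{Q=\lambda\}$ follows (it is irrelevant for $\phi_2$, because $Q-\lambda$ will not vanish on $\mathrm{supp}\,\phi_2$). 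The contribution $u_\lambda^1$ will come from the already weighted Lemma~\ref{lm61}; the weighted bound for $u_\lambda^2$ is the genuinely new point, and the one that forces the threshold $s+\tfrac12$ in \eqref{eq610}.

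For $u_\lambda^1$: on $K_1\supset\mathrm{supp}\,\phi_1$ one has $Q_0\in C^{2+N}$ with $\nabla Q_0\neq0$, so for $\delta_0$ small the constraint $\|Q-Q_0\|_{C^{2+N}(K_1)}<\delta_0$ forces $Q\in C^{2+N}(K_1)$ with $|\partial_{j(\xi)}Q|$ bounded below and all $|\partial^\alpha Q|$, $|\alpha|\le 2+N$, bounded above on $\mathrm{supp}\,\phi_1$. Applying Lemma~\ref{lm61} with $p=Q$, $\phi=\phi_1$ then gives that $u_\lambda^1$ is independent of the $\pm i0$ sign and $\|\tilde\mu u_\lambda^1\|_{B^*}\le C\|\tilde\mu f\|_B$ with $C$ uniform; the hypothesis there, $(1+t)\mu'(t)\le N\mu(t)$, is implied by \eqref{eq610}. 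For $u_\lambda^2$ I would first record, as routine $\Lambda_s$ bookkeeping, that $m:=(Q-\lambda)^{-1}\phi_2$ is a legitimate compactly supported symbol with $\|m\|_{\Lambda_s}$ bounded uniformly over the admissible $Q,\lambda$: on $\mathrm{supp}\,\phi_2\subset K_2$ one has $|Q_0|\ge 3\epsilon$, hence $|Q-\lambda|\ge\epsilon$ once $\delta_0<\epsilon$, so on $\mathrm{supp}\,\phi_2$ we may write $m=\phi_2\,F_\lambda(Q)$ for a fixed extension $F_\lambda\in C_b^\infty(\mathbb{R})$ of $t\mapsto(t-\lambda)^{-1}$ away from $t=\lambda$, and the hypotheses on $\phi_2Q$ together with the module/composition rules for $\Lambda_s$ give the claim. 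Thus $u_\lambda^2=m(D)f$ has no singularity, and it remains to prove
\begin{equation*}
\|\tilde\mu\, m(D)f\|_{B^*}\le C\|m\|_{\Lambda_s}\,\|\tilde\mu f\|_B
\end{equation*}
for every compactly supported $m\in\Lambda_s$ and every $\mu$ as in \eqref{eq610}.

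This last inequality is where I expect the real work and where Littlewood--Paley enters. Writing $K=\mathscr{F}^{-1}m$ for the convolution kernel of $m(D)$, the decomposition $m=\tilde\Delta m+\sum_{j\ge 1}\Delta_j m$ splits $K$ into a low-frequency piece (harmless) plus pieces $K^{(j)}$ which, because $\Phi$ is smooth, are smooth truncations of $K$ to the dyadic shell $|x|\sim 2^j$, and which by \eqref{eq66} (with $\gamma=s$) are controlled on that shell by $2^{-js}\|m\|_{\Lambda_s}$. The growth condition $(1+t)\mu'(t)\le(s+\tfrac12)\mu(t)$ gives $\tilde\mu(x)/\tilde\mu(y)\le(1+|x-y|)^{s+1/2}$ while keeping $\tilde\mu$ nondecreasing, so on $\mathrm{supp}\,K^{(j)}(x-y)$ the weight factor in the kernel $\tilde\mu(x)K^{(j)}(x-y)\tilde\mu(y)^{-1}$ of the $j$-th block of $\tilde\mu\,m(D)\tilde\mu^{-1}$ costs at most $2^{j(s+1/2)}$; this is compensated by the $2^{-js}$ decay of $K^{(j)}$ and by the extra half power of $|x|$ built into the norms of $B$ and $B^*$ (equivalently, by $B\hookrightarrow L^{2,1/2}$ and $L^{2,-1/2}\hookrightarrow B^*$), and the resulting series over $j$ closes precisely at $c=s+\tfrac12$, thanks to the slack in those embeddings. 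Adding this to the bound for $u_\lambda^1$ completes the proof. The main obstacle, beyond bookkeeping the summation, is to keep all constants uniform over the admissible $Q$ and $\lambda$ (which is exactly why $m$ was arranged to range over a fixed bounded subset of $\Lambda_s$), and to handle the region where $|y|$ is small and $|x|\sim 2^j$, where $\tilde\mu(x)/\tilde\mu(y)$ is worst; there I would decompose the inner variable dyadically as well and use that $\mu$ is bounded below.
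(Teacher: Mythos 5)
Your proposal follows the paper's proof essentially step for step: the same splitting $\phi=\phi_1+\phi_2$ with Lemma \ref{lm61} applied to the singular piece, the uniform bound $\|(Q-\lambda)^{-1}\phi_2\|_{\Lambda_s}\leq C$, and then a Littlewood--Paley decomposition of the convolution kernel into dyadic shells, each contributing operator norm $\lesssim 2^{-ls}\|m\|_{\Lambda_s}$ by \eqref{eq66}, played against the weight ratio $(1+|x-y|)^{s+\frac12}$ and the half-powers built into $B$ and $B^*$. One caution: the parenthetical reduction to $L^{2,1/2}\to L^{2,-1/2}$ would be logarithmically divergent at this endpoint (each shell contributes $O(1)$ after the weights are accounted for), so the argument must be run, as you indicate at the end and as the paper does in \eqref{eq616}--\eqref{eq619}, with both variables decomposed into annuli --- off the diagonal only the single kernel scale $l\approx\max\{j,k\}$ contributes and on the diagonal the scales $l\le j+2$ sum geometrically, which is exactly what makes the threshold $s+\tfrac12$ attainable.
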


We note that by Lemma \ref{lm61},
\begin{equation*}
\mathscr{F}^{-1}((Q_0-\lambda)^{-1}\phi\hat{f})=\mathscr{F}^{-1}((Q_0-\lambda+ i0)^{-1}\phi\hat{f})=\mathscr{F}^{-1}((Q_0-\lambda- i0)^{-1}\phi\hat{f})\in\mathscr{S}'
\end{equation*}
is well defined, and of course we will show that this is true for $Q$ if $\delta_0$ is small by the same reason.

\begin{proof}[Proof of Lemma \ref{lm62}]
As in the proof of Lemma \ref{lm32}, with $\phi=\phi_1+\phi_2$, we first use Lemma \ref{lm61} instead of Lemma \ref{lm31} to obtain for sufficiently small $\delta_0$ that
\begin{equation*}
\|\tilde{\mu}\mathscr{F}^{-1}((Q-\lambda)^{-1}\phi_1\hat{f})\|_{B^*}\leq C\|\tilde{\mu}f\|_B.
\end{equation*}
On the other hand, when $|\lambda|<\delta_0\leq\epsilon$ and $\xi\in\mathrm{supp}\,\phi_2\subset K_2$, we have $|Q(\xi)-\lambda|\geq\epsilon$, and therefore
\begin{equation}\label{eq612}
\|(Q-\lambda)^{-1}\phi_2\|_{\Lambda_s}\leq C(\delta_0),\quad\text{if}\,\,\|Q-Q_0\|_{C(K_2)}<\delta_0\,\,\text{and}\,\,\|\phi_2Q\|_{\Lambda_s}<\delta_0^{-1}.
\end{equation}
Now we are left to show
\begin{equation}\label{eq613}
\|\tilde{\mu}u\|_{B^*}\leq C\|\tilde{\mu}f\|_B,
\end{equation}
where $u=\mathscr{F}^{-1}((Q-\lambda)^{-1}\phi_2\hat{f})$, by using \eqref{eq612}. 

We first have
\begin{equation}\label{eq614}
\begin{split}
\|\tilde{\mu}u\|_{B^*}\leq&\sup_{j\geq1}R_j^{-\frac12}\mu(R_j)\|u\|_{L^2(X_j)}\\
\leq&\sup_{j\geq1}R_j^{-\frac12}\mu(R_j)\sum_{k\geq1}\left\|\int_{X_k}\mathscr{F}^{-1}((Q-\lambda)^{-1}\phi_2)(x-y)f(y)dy\right\|_{L_x^2(X_j)}.
\end{split}
\end{equation}
If $x\in X_j$ and $y\in X_k$, we have
\begin{equation*}
x-y\in\begin{cases}
\tilde{X}_{\max\{j,k\}},\quad&|j-k|\geq2,\\
\overline{X_0\cup\cdots\cup X_{j+1}},&|j-k|\leq1,
\end{cases}
\end{equation*}
and thus
\begin{equation}\label{eq616}
\begin{split}
\|u\|_{L^2(X_j)}\leq&\sum_{|k-j|\geq2}\left\|\left(\sum_{l=\max\{j,k\}-2}^{\max\{j,k\}+1}\Phi(2^{-l}\cdot)\mathscr{F}^{-1}((Q-\lambda)^{-1}\phi_2)\right)\ast\chi_kf\right\|_{L^2(\mathbb{R}^n)}\\
&\quad+\sum_{|k-j|\leq1}\left\|\left(\sum_{l=-\infty}^{j+2}\Phi(2^{-l}\cdot)\mathscr{F}^{-1}((Q-\lambda)^{-1}\phi_2)\right)\ast\chi_kf\right\|_{L^2(\mathbb{R}^n)}\\
\leq&C\sum_{|k-j|\geq2}\left(\sum_{l=\max\{j,k\}-2}^{\max\{j,k\}+1}\|\Delta_l((Q-\lambda)^{-1}\phi_2)\|_{L^\infty}\right)\|f\|_{L^2(X_k)}\\
&\quad+C\sum_{|k-j|\leq1}\left(\|\tilde{\Delta}((Q-\lambda)^{-1}\phi_2)\|_{L^\infty}+\sum_{l=1}^{j+2}\|\Delta_l((Q-\lambda)^{-1}\phi_2)\|_{L^\infty}\right)\|f\|_{L^2(X_k)}.
\end{split}
\end{equation}

We shall use \eqref{eq66} and \eqref{eq612} to treat \eqref{eq616}, before which we may assume without loss of generality that $\mu(0)=1$, then \eqref{eq610} implies
\begin{equation*}
\mu(R_j)\leq2^{s+\frac12}\mu(R_{j-1})\leq CR_j^{s+\frac12}.
\end{equation*}
First,
\begin{equation}\label{eq618}
\begin{split}
&R_j^{-\frac12}\mu(R_j)\sum_{|k-j|\geq2}\left(\sum_{l=\max\{j,k\}-2}^{\max\{j,k\}+1}\|\Delta_l((Q-\lambda)^{-1}\phi_2)\|_{L^\infty}\right)\|f\|_{L^2(X_k)}\\
\leq&CR_j^{-\frac12}R_j^{s+\frac12}\sum_{|k-j|\geq2}R_{\max\{j,k\}}^{-s}\|f\|_{L^2(X_k)}\\
\leq&C\|f\|_B\\
\leq&C\|\tilde{\mu}f\|_B.
\end{split}
\end{equation}
We also have
\begin{equation}\label{eq619}
\begin{split}
&R_j^{-\frac12}\mu(R_j)\sum_{|k-j|\leq1}\left(\|\tilde{\Delta}((Q-\lambda)^{-1}\phi_2)\|_{L^\infty}+\sum_{l=1}^{j+2}\|\Delta_l((Q-\lambda)^{-1}\phi_2)\|_{L^\infty}\right)\|f\|_{L^2(X_k)}\\
\leq&C\mu(R_j)\sum_{|k-j|\leq1}(1+2^{-s}+\cdots+2^{-(j+2)s})\|f\|_{L^2(X_k)}\\
\leq&C\mu(R_j)\sum_{|k-j|\leq1}\mu(R_{k-1})^{-1}\|\tilde{\mu}f\|_{L^2(X_k)}\\
\leq&C\|\tilde{\mu}f\|_B.
\end{split}
\end{equation}
Now \eqref{eq614}, \eqref{eq616}, \eqref{eq618} and \eqref{eq619} imply \eqref{eq613}, which completes the proof.
\end{proof}

Notice that the function $|\xi|^s$ locally belongs to $\Lambda_s$ for $s>0$, Lemma \ref{lm62} is then applicable in the following lemma.

\begin{lemma}\label{lm63}
Given $s>0$, $\lambda_0\in\mathbb{R}\setminus\{0\}$ and $\phi\in C_c^\infty(B(1))$. For $\eta\in\mathbb{R}^n$, we define
\begin{equation}
Q_\eta(\xi)=\frac{|\xi+\eta|^s-\lambda_0}{\langle\eta\rangle^s},\quad\xi\in\mathbb{R}^n.
\end{equation}
Then there exist $\delta_0,C>0$ such that when $\lambda\in(-\delta_0,\delta_0)$, $\eta\in\mathbb{R}^n$, and $f\in B$ with $\phi\hat{f}=0$ on $\{\xi\in\mathbb{R}^n;~Q_\eta(\xi)=\lambda\}$, we have
\begin{equation}
\|\tilde{\mu}\mathscr{F}^{-1}((Q_\eta-\lambda)^{-1}\phi\hat{f})\|_{B^*}\leq C\|\tilde{\mu}f\|_B,
\end{equation}
where $\tilde{\mu}(\cdot)=\mu(|\cdot|)$ and $\mu$ is any $C^1$ non-decreasing function satisfying
\begin{equation}\label{equ523}
(1+t)\mu'(t)\leq(\mbox{$s+\frac12$})\mu(t),\quad t>0.
\end{equation}
\end{lemma}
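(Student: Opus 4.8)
The plan is to reproduce the proof of Lemma \ref{lm33} almost verbatim, with Lemma \ref{lm62} playing the role of Lemma \ref{lm32}. First I would fix an integer $N\geq s+\frac12$, so that the growth restriction \eqref{equ523} on $\mu$ coincides with the hypothesis \eqref{eq610} of Lemma \ref{lm62} (for then $\min\{s+\frac12,N\}=s+\frac12$). The one genuinely new point is the Littlewood-Paley hypothesis: Lemma \ref{lm62} requires not merely that $\phi_2 Q_0\in\Lambda_s$ but the quantitative bound $\|\phi_2 Q\|_{\Lambda_s}<\delta_0^{-1}$, so everywhere below I must produce \emph{uniform} $\Lambda_s$ estimates for $\phi_2 Q_\eta$, where $\phi_2$ is the cut-off attached to $Q_0=Q_\eta$ via \eqref{eq36}. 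These follow from the fact, recalled before the statement, that $|\cdot|^s$ lies locally in $\Lambda_s$, together with the translation invariance of $\|\cdot\|_{\Lambda_s}$ and the compactness arguments below.

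For $|\eta|>2$ and $\xi\in\overline{B(1)}$ one has $|\xi+\eta|\geq1$, hence $Q_\eta\in C^\infty(\overline{B(1)})$ with $|\partial_\xi^\alpha Q_\eta(\xi)|\leq C_\alpha\langle\eta\rangle^{-|\alpha|}\leq C_\alpha$ for every $\alpha$; so $\mathcal{A}=\{Q_\eta;~|\eta|>2\}$ is precompact in $C^{2+N}(\overline{B(1)})$ by the Arzel\`{a}-Ascoli theorem, and $\sup_{|\eta|>2}\|\phi_2 Q_\eta\|_{\Lambda_s}<\infty$. For every $Q$ in the closure $\overline{\mathcal{A}}$, the elementary lower bound for $\left||\xi|^s-\lambda_0\right|+s|\xi|^{s-1}+1$ used in the proof of Lemma \ref{lm33} gives $|Q(\xi)|+|\nabla Q(\xi)|\geq c>0$ on $\mathrm{supp}\,\phi$, so $Q$ satisfies the requirements imposed on $Q_0$ in Lemma \ref{lm62}. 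Applying Lemma \ref{lm62} to each such $Q$---shrinking the $\delta_0(Q)$ it produces, if necessary, so that $\delta_0(Q)^{-1}$ exceeds the uniform $\Lambda_s$ bound just obtained---and then passing to a finite subcover of $\overline{\mathcal{A}}$ in the $C^{2+N}(K_1)\times C(K_2)$ topology, we obtain the claimed estimate uniformly over $|\eta|>2$ and $|\lambda|<\delta_0$.

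For $|\eta|\leq2$ there are two cases. If $\lambda_0>0$, put $\epsilon=\epsilon(\eta)=\lambda_0/(8\langle\eta\rangle^s)$; then $|Q_\eta(\xi)|<4\epsilon$ forces $|\xi+\eta|^s>\lambda_0/2$, so $Q_\eta\in C^{2+N}(K_1)$ with $\nabla Q_\eta\neq0$ there (here $K_1$ is the set from Lemma \ref{lm62} formed with this $\epsilon$), while $\phi_2 Q_\eta\in\Lambda_s$ with norm bounded uniformly in $\eta$; hence Lemma \ref{lm62} applies with $Q_0=Q_\eta$ for each fixed $\eta$. A finite covering of $\{|\eta|\leq2\}$ then closes this case exactly as in Lemma \ref{lm33}: for $\eta$ fixed, $\eta'\mapsto Q_{\eta'}$ is continuous into $C^{2+N}(K_1)\times C(K_2)$ on a neighbourhood of $\eta$ (on $K_1$ the quantity $|\xi+\eta'|$ stays bounded away from $0$), with $\|\phi_2 Q_{\eta'}\|_{\Lambda_s}$ uniformly bounded, so the perturbation hypotheses of Lemma \ref{lm62} hold for $\eta'$ near $\eta$. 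If $\lambda_0<0$, then $Q_\eta(\xi)\geq|\lambda_0|\langle\eta\rangle^{-s}\geq c>0$ uniformly for $|\eta|\leq2$; choosing $\epsilon$ (and $\delta_0$) smaller than $c/4$ makes $\{|Q_\eta|<4\epsilon\}$ empty, so $K_1=\emptyset$, $\phi_2=\phi$, and Lemma \ref{lm62} applies directly, $\phi Q_\eta\in\Lambda_s$ being uniformly bounded. (Alternatively, $(Q_\eta-\lambda)^{-1}\phi$ is then a symbol supported in $B(1)$ with uniformly bounded derivatives, whose inverse Fourier transform is Schwartz with uniform seminorms, and one concludes via the weight-shift $\mu(|x|)\leq\mu(|y|)(1+|x-y|)^{s+\frac12}$---immediate from \eqref{equ523}---and a Schur-type estimate between $B$ and $B^*$.)

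The main obstacle is not any individual estimate but the uniformity bookkeeping: securing \emph{quantitative} $\Lambda_s$ bounds on $\phi_2 Q_\eta$ so that the hypothesis $\|\phi_2 Q\|_{\Lambda_s}<\delta_0^{-1}$ of Lemma \ref{lm62} can actually be met, and, in the subcase $|\eta|\leq2$ with $\lambda_0>0$, coping with the fact that $\epsilon$---hence the sets $K_1,K_2$---depends on $\eta$, so the closeness demanded in the finite-covering argument is closeness measured on $\eta$-dependent sets. As in Lemma \ref{lm33}, both are handled using the compactness of $\{|\eta|\leq2\}$, the joint continuity of $(\xi,\eta)\mapsto|\xi+\eta|^s$ away from $\xi+\eta=0$, and the translation invariance of the $\Lambda_s$ norm; beyond this the argument is a line-by-line transcription of the proof of Lemma \ref{lm33} with Lemma \ref{lm62} substituted for Lemma \ref{lm32}.
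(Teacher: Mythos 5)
Your argument coincides with the paper's: the paper's proof of Lemma \ref{lm63} is exactly ``repeat the proof of Lemma \ref{lm33} with Lemma \ref{lm62} in place of Lemma \ref{lm32}, checking precompactness of $\mathcal{A}$ in the stronger $C^{2+N}$ topology and uniform bounds on $\|\phi_2Q_\eta\|_{\Lambda_s}$ for $|\eta|\le2$ (with $\phi_2$ built from a fixed $Q_{\eta_0}$)'', which is precisely what you do, and your uniformity bookkeeping (shrinking $\delta_0$, $\eta$-dependent $K_1,K_2$) is sound. One caveat: your parenthetical ``alternative'' for the case $\lambda_0<0$, $|\eta|\le2$ is incorrect as stated --- $(Q_\eta-\lambda)^{-1}\phi$ does \emph{not} have uniformly bounded derivatives when $s\notin2\mathbb{N}$, since $|\xi+\eta|^s$ has limited smoothness at $\xi=-\eta\in\overline{B(2)}$, so its inverse Fourier transform is not Schwartz --- but this aside is dispensable because your primary route (Lemma \ref{lm62} with $K_1=\emptyset$ and $\phi_2=\phi$) already covers that case.
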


\begin{proof}
The proof is almost the same to that of Lemma \ref{lm33}, if we use Lemma \ref{lm62} (with $N=[s+\frac12]+1$) instead of Lemma \ref{lm32} to show that $\mathcal{A}$ is precompact under the $C^{[s+\frac12]+1}$ topology; and $\|\phi Q_\eta\|_{\Lambda_s}$ is bounded uniformly in $\{\eta\in\mathbb{R}^n;~|\eta|\leq2\}$, if 
\begin{equation*}
\phi(\xi)=(1-h(Q_{\eta_0}(\xi)))\phi(\xi)
\end{equation*}
for any fixed $|\eta_0|\leq2$.
\end{proof}

Now we prove the main result of this section.

\begin{theorem}\label{thm64}
Let $s>0$, $I\subset\mathbb{R}\setminus\{0\}$ be compact, and suppose $r\in C^{[s]+2}(\mathbb{R}^n)$ with
\begin{equation}
\sum_{|\alpha|\leq[s]+2}\sup_{\xi\in\mathbb{R}^n}\left|\partial_\xi\frac{r(\xi)}{\langle\xi\rangle^s}\right|<\infty.
\end{equation}
Then there exists $C_I>0$ such that when $\lambda\in I$ and when $f\in B$ with $\hat{f}=0$ on $\{\xi\in\mathbb{R}^n;~|\xi|^s=\lambda\}$, we have
\begin{equation}
\|\tilde{\mu}r(D)R_0(\lambda\pm i0)f\|_{B^*}\leq C_I\|\tilde{\mu}f\|_B,
\end{equation}
where $\tilde{\mu}(\cdot)=\mu(|\cdot|)$ and $\mu$ is any $C^1$ non-decreasing function satisfying
\begin{equation}\label{equ527}
(1+t)\mu'(t)\leq(\mbox{$s+\frac12$})\mu(t),\quad t>0.
\end{equation}
\end{theorem}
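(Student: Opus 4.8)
The plan is to reduce Theorem \ref{thm64} to Lemma \ref{lm63} by the same frequency-localization scheme used in the proof of Theorem \ref{thm34}, now carried out with weights. First I would fix $\lambda_0\in\mathbb{R}\setminus\{0\}$ and prove the estimate for $\lambda$ close to $\lambda_0$; a finite covering of the compact set $I$ then yields the full statement. For such $\lambda$, choose $\phi\in C_c^\infty(B(1))$ with $\phi\equiv1$ on $B(\tfrac12)$ and $\phi_0\in C_c^\infty(B(\tfrac12))$ with $\|\phi_0\|_{L^2}>0$. Given $f\in B$ with $\hat f=0$ on $\{|\xi|^s=\lambda\}$, apply Lemma \ref{lm63} with the input $\phi_0 r(\cdot+\eta)\hat f(\cdot+\eta)$ — note this has vanishing trace on $\{Q_\eta=\lambda'\}$ for the appropriate $\lambda'=(\lambda-\lambda_0)/\langle\eta\rangle^s$ because $|\xi+\eta|^s=\lambda$ there — to get, for each $\eta$,
\begin{equation*}
\langle\eta\rangle^s\|\tilde\mu\,\phi_0(D-\eta)r(D)R_0(\lambda\pm i0)f\|_{B^*}\leq C\|\tilde\mu\, r(D)\phi_0(D-\eta)f\|_B.
\end{equation*}

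Next I would absorb the multiplier $r(D)$ and the factor $\langle\eta\rangle^{-s}$ exactly as in Theorem \ref{thm34}: write $r(D)\phi_0(D-\eta)=r(D)\phi(D-\eta)\phi_0(D-\eta)$, use $r(\xi)=\langle\xi\rangle^s\cdot(r(\xi)/\langle\xi\rangle^s)$ and the symbol bound $\big|\langle\xi\rangle^s\phi(\xi-\eta)\langle\eta\rangle^{-s}\big|+\big|\nabla_\xi(\cdots)\big|\leq C$ uniformly in $\eta$, together with Lemma \ref{lm21}, to conclude
\begin{equation*}
\|\tilde\mu\,\phi_0(D-\eta)r(D)R_0(\lambda\pm i0)f\|_{B^*}\leq C\|\tilde\mu\,\phi_0(D-\eta)f\|_B,\quad\eta\in\mathbb{R}^n.
\end{equation*}
Squaring and integrating in $\eta$, the left side is bounded below by $c\|\tilde\mu\,r(D)R_0(\lambda\pm i0)f\|_{B^*}^2$ via \eqref{eq23}, and the right side is bounded above by $C\|\tilde\mu f\|_B^2$ via \eqref{eq22}; this gives the claimed estimate for $\lambda$ near $\lambda_0$, and the covering argument finishes it.

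The main obstacle is that Lemma \ref{lm21} and Lemma \ref{lm22} are stated for multipliers acting on $B$ and $B^*$, not on the weighted spaces $\tilde\mu B$ and $\tilde\mu B^*$, so I cannot simply commute $\tilde\mu$ past $\phi_0(D-\eta)$ or past $r(D)$. The fix is that $\phi_0(D-\eta)$ and $r(D)\phi(D-\eta)$ have kernels that are Schwartz (uniformly in $\eta$ after rescaling by $\langle\eta\rangle$), so commuting them with the slowly varying weight $\tilde\mu$ costs only lower-order terms: the hypothesis $(1+t)\mu'(t)\leq(s+\tfrac12)\mu(t)$ forces $\mu(2t)\leq 2^{s+1/2}\mu(t)$, i.e. $\tilde\mu$ is essentially constant on each dyadic annulus and doubling, so $\tilde\mu\,\phi_0(D-\eta)g$ and $\phi_0(D-\eta)(\tilde\mu g)$ differ by a term controlled by $C\|\tilde\mu g\|_B$ with $C$ uniform in $\eta$ (and similarly on $B^*$); alternatively one builds the weight into the symbol estimates feeding Lemma \ref{lm63} directly. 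Verifying these uniform-in-$\eta$ weighted commutator bounds — the analogue of the unweighted Lemma \ref{lm21}/Lemma \ref{lm22} machinery adapted to $\tilde\mu$ — is where the real work lies; once it is in place, the rest is the covering argument already present in Theorem \ref{thm34}.
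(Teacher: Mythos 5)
Your proposal is correct and follows essentially the same route as the paper: localize in frequency, apply Lemma \ref{lm63} to $\phi_0 r(\cdot+\eta)\hat f(\cdot+\eta)$ (whose trace vanishes on $\{Q_\eta=(\lambda-\lambda_0)/\langle\eta\rangle^s\}$ exactly as you note), and then piece the estimates together with weighted analogues of Lemma \ref{lm21} and Lemma \ref{lm22} — which is precisely what the paper does, deferring those weighted analogues to Theorem 14.1.7 and Corollary 14.1.8 of H\"{o}rmander. The only imprecision is your claim that the kernel of $r(D)\phi(D-\eta)$ is Schwartz: since $r$ is only $C^{[s]+2}$, that kernel merely has polynomial decay of order about $[s]+2$, which is exactly why the hypothesis demands $[s]+2$ derivatives and is just enough to beat the weight $\mu(t)\leq C(1+t)^{s+\frac12}$ in the commutator bounds you describe.
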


\begin{proof}
As in the proof of Theorem \ref{thm34}, we use Lemma \ref{lm63} instead of Lemma \ref{lm33} to obtain for all $\eta\in\mathbb{R}^n$ that
\begin{equation*}
\langle\eta\rangle^s\|\tilde{\mu}\phi_0(D-\eta)r(D)R_0(\lambda\pm i0)f\|_{B^*}\leq C\|\tilde{\mu}r(D)\phi_0(D-\eta)f\|_B.
\end{equation*}
For the rest, we use analogues of Lemma \ref{lm21} and Lemma \ref{lm22} associated with the weight $\tilde{\mu}$, which requires the smoothness index $[s]+2$ of $r$, to complete the proof. We refer to Theorem 14.1.7 and Corollary 14.1.8 in \cite{Hor2} for such analogues.
\end{proof}

\begin{remark}
When $s=2m$ where $m\in\mathbb{N}_+$, the function $|\xi|^s$ is smooth and thus locally belongs to $\Lambda_N$ for any $N>0$. In such case, $s+\frac12$ in the right hand side of \eqref{equ523} and of \eqref{equ527} can be replaced by any positive number. This could finally result in \eqref{equ113} that we mentioned earlier, but we neglect such discussion.
\end{remark}

\section{Discreteness of Partial Eigenvalues}\label{sec7}

This section is devoted to studying a special part $\Lambda$ of $\sigma_\mathrm{pp}$, whose discreteness determines the proof of asymptotic completeness of $W_\pm$ in the next section, and the main result is Proposition \ref{pro72}. We will end this section by showing
\begin{equation}\label{61}
\Lambda\subset\sigma_\mathrm{pp}\setminus\{0\}.
\end{equation}
However, we will show at the end of Section \ref{sec8} that they are actually equal.

For $\lambda\in\mathbb{R}\setminus\{0\}$, the strategy to study the equation $(H-\lambda)u=0$ in $L^2$ is to first consider a wider sense of solutions:
\begin{equation}\label{equ61}
(-\Delta)^\frac s2u-\lambda u=-Vu,\quad u\in B_s^*.
\end{equation}
Notice that when $s$ is not an even number, $(-\Delta)^\frac s2$ does not act on all temperate distributions for the non-smoothness of its Fourier symbol, thus we actually interpret equation \eqref{equ61} in the sense that
\begin{equation}\label{equ62}
\lim_{\epsilon\rightarrow0}((-\Delta)^\frac s2-\lambda)\phi_\epsilon(D)u=f\in B\quad\text{in}~\mathscr{S}',
\end{equation}
and $f=-Vu$, where $\hat{\phi}_\epsilon$ is a smooth Fourier cutoff function avoiding an $\epsilon$-ball centered at $0$. In order to easier verify such limit, we are interested in the following definition.

\begin{definition}\label{def71}
Let $s>0$ and V be a short range potential. We define $\Lambda$ to be the set of $\lambda\in\mathbb{R}\setminus\{0\}$ such that the equation
\begin{equation}\label{eq72}
(I+VR_0(\lambda+i0))f=0
\end{equation}
has a non-trivial solution $f\in B$.
\end{definition}

\begin{remark}\label{rk62}
If $\lambda\in\Lambda$, $0\neq f\in B$ solves \eqref{eq72}, and set $u=R_0(\lambda+i0)f\in B_s^*$ by Theorem \ref{thm34}, we have $f=-Vu$ and that
\begin{equation}\label{e72}
u=-R_0(\lambda+i0)Vu.
\end{equation}
Therefore, \eqref{equ62} even holds in the weak* topology of $B^*$ by dominated convergence.
\end{remark}

We will first focus on the study of \eqref{e72} rather than \eqref{equ62}. Let $p(\xi)=|\xi|^s$ and $\lambda\in\mathbb{R}\setminus\{0\}$. Recall that the formula
\begin{equation}\label{eq73}
R_0(\lambda+i0)f-R_0(\lambda-i0)f=2\pi i\mathscr{F}^{-1}(\delta(p-\lambda)\hat{f})
\end{equation}
holds in $\mathscr{S}'$ if $\hat{f}\in C_c^\infty$, for $\lambda$ is not a critical point of $p$, which is smooth near $\mathrm{supp}\,\delta(p-\lambda)$. \eqref{eq73} also holds when $f\in B$ while both sides are in $B_s^*$ by Theorem \ref{thm34}, and on the right hand side the fact that $\delta(p-\lambda)\hat{f}$ can be interpreted as the $L^2$ trace on the surface comes from Lemma \ref{lm23} and an obvious duality discussion. We first introduce a lemma for the inhomogeneous equation.

\begin{lemma}\label{lm72}
Suppose $\lambda\in\mathbb{R}\setminus\{0\}$, and $u\in B^*$ satisfies $((-\Delta)^\frac s2-\lambda)u=f\in B$ in the sense of \eqref{equ62}. Then $\mathscr{F}\left(u-R_0(\lambda\pm i0)f\right)$ are supported in $\{\xi\in\mathbb{R}^n;~|\xi|^s=\lambda\}$ as $L^2$ densities. Moreover, if $\mathrm{Im}\langle u,f\rangle=0$, then each of $u=R_0(\lambda+i0)f$ and $u=R_0(\lambda-i0)f$ implies each other.
\end{lemma}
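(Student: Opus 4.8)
The plan is to handle the two assertions separately. Throughout write $p(\xi)=|\xi|^s$ and $M_\lambda=\{\xi\in\mathbb{R}^n;~p(\xi)=\lambda\}$, and set $w_\pm=u-R_0(\lambda\pm i0)f$. Since $f\in B$, Theorem \ref{thm34} gives $R_0(\lambda\pm i0)f\in B_s^*\hookrightarrow B^*$, so $w_\pm\in B^*$; in particular $\widehat{w_\pm}\in\mathscr{S}'$ and $w_\pm\in L^2_{\mathrm{loc}}$.

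First I would observe that $R_0(\lambda\pm i0)f$ itself solves $((-\Delta)^\frac s2-\lambda)(\cdot)=f$ in the sense of \eqref{equ62}: writing $\widehat{R_0(\lambda\pm i0)f}=(p-\lambda\mp i0)^{-1}\hat f$ and using that $(p-\lambda)(p-\lambda\mp i0)^{-1}=1$ away from the origin, one gets $((-\Delta)^\frac s2-\lambda)\phi_\epsilon(D)R_0(\lambda\pm i0)f=\phi_\epsilon(D)f\to f$ in $L^2$. Subtracting this from the hypothesis on $u$ yields $\lim_{\epsilon\to0}((-\Delta)^\frac s2-\lambda)\phi_\epsilon(D)w_\pm=0$ in $\mathscr{S}'$. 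Pairing against $\psi=\mathscr{F}^{-1}g$ with $g\in C_c^\infty(\mathbb{R}^n\setminus\{0\})$ and transferring the real-symbol operators onto $\psi$, the Fourier cutoff acts as the identity once $\epsilon$ is small, so the limit equals $\langle w_\pm,((-\Delta)^\frac s2-\lambda)\psi\rangle$; hence $(p-\lambda)\widehat{w_\pm}=0$ on $\mathbb{R}^n\setminus\{0\}$. Since $\lambda\neq0$ is a regular value of $|\xi|^s$, away from the origin $p$ is $C^\infty$ with $\nabla p\neq0$, while on a punctured ball about $0$ the function $p-\lambda$ is smooth and non-vanishing; dividing in each region shows $\widehat{w_\pm}$ vanishes off $M_\lambda\cup\{0\}$, so $\mathrm{supp}\,\widehat{w_\pm}\subset M_\lambda\cup\{0\}$, and where it is carried by $M_\lambda$ it is a single layer because $(p-\lambda)\widehat{w_\pm}=0$.

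Next I would invoke the structure theory behind Lemma \ref{lm23} --- dual to the surjectivity of the trace map $B\ni v\mapsto\hat v|_{M_\lambda}\in L^2(M_\lambda)$ --- to the effect that a tempered distribution lying in $B^*$ whose Fourier transform is supported on a compact $C^1$ hypersurface with non-degenerate conormal must have that Fourier transform equal to an $L^2$ density on the surface, and this at the same time excludes an origin-supported (hence polynomial) component, no nonzero polynomial belonging to $B^*$. Applied to $w_\pm\in B^*$, this shows $\widehat{w_\pm}$ is an $L^2$ density on $M_\lambda$, which is the first assertion. (Consistently, \eqref{eq73} gives $\widehat{w_+}-\widehat{w_-}=-2\pi i\,\delta(p-\lambda)\hat f$, an $L^2$ density on $M_\lambda$ since $\hat f|_{M_\lambda}\in L^2(M_\lambda,dS)$ by Lemma \ref{lm23}, so once one sign is settled the other follows.) This passage is the crux of the proof: the delicate point is exactly the upgrade from ``$(p-\lambda)\widehat{w_\pm}=0$ away from $0$'' to a genuine $L^2(M_\lambda)$ density, relying on $w_\pm\in B^*$ together with the Agmon--H\"ormander machinery, and simultaneously the disposal of any spurious mass at the origin stemming from the non-smoothness of $|\xi|^s$ there.

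For the second assertion, suppose $\mathrm{Im}\langle u,f\rangle=0$ and $u=R_0(\lambda+i0)f$; I claim $u=R_0(\lambda-i0)f$. Then $\langle u,f\rangle=\langle R_0(\lambda+i0)f,f\rangle$, and approximating $f$ in $B$ by functions with Fourier transform in $C_c^\infty$ --- legitimate by the continuity of $R_0(\lambda+i0):B\to B^*$ from Theorem \ref{thm34} and of $v\mapsto\hat v|_{M_\lambda}$ from Lemma \ref{lm23} --- Parseval and the Sokhotski--Plemelj identity $(p-\lambda-i0)^{-1}=\mathrm{p.v.}\,(p-\lambda)^{-1}+i\pi\delta(p-\lambda)$ give
\begin{equation*}
\mathrm{Im}\langle u,f\rangle=\mathrm{Im}\langle R_0(\lambda+i0)f,f\rangle=(2\pi)^{-n}\pi\int_{M_\lambda}\frac{|\hat f(\xi)|^2}{|\nabla p(\xi)|}\,dS(\xi)\ \ge\ 0 .
\end{equation*}
Hence $\mathrm{Im}\langle u,f\rangle=0$ forces $\hat f|_{M_\lambda}=0$, so $\delta(p-\lambda)\hat f=0$ and \eqref{eq73} gives $R_0(\lambda+i0)f=R_0(\lambda-i0)f=u$. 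Starting instead from $u=R_0(\lambda-i0)f$ the same computation applies with the surface integral carrying a minus sign, again yielding $\hat f|_{M_\lambda}=0$ and the equality of the two boundary values --- this last step being a relatively routine flux computation on $M_\lambda$.
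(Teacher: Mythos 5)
Your argument is essentially the one the paper intends: the paper's own ``proof'' of this lemma is a citation of H\"ormander's Theorem 14.3.8 plus the remark that the polynomial-symbol proof adapts, and what you have written is exactly that adaptation --- subtract the boundary values of the free resolvent, deduce $(|\xi|^s-\lambda)\widehat{w_\pm}=0$ away from the origin, upgrade ``$w_\pm\in B^*$ with Fourier support in a compact $C^1$ hypersurface'' to an $L^2(M_\lambda)$ density via the Agmon--H\"ormander structure theorem dual to Lemma \ref{lm23}, and prove the second assertion by the Sokhotski--Plemelj sign computation. The second assertion and the identification of $\widehat{w_\pm}$ as an $L^2$ density on $M_\lambda$ are correct as written.

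One step deserves a caveat: your disposal of the possible mass of $\widehat{w_\pm}$ at the origin rests on the claim that no nonzero polynomial belongs to $B^*$. For $n\geq2$ this is true (a constant $c$ gives $R_j^{-1/2}\|c\|_{L^2(X_j)}\sim|c|R_j^{(n-1)/2}\to\infty$, and one can confirm the absence of a point mass at $0$ by pairing $\widehat{w_\pm}$ against $g(\cdot/t)$ with $g\in C_c^\infty$ equal to $1$ near $0$, since $\|\mathscr{F}^{-1}g(\cdot/t)\|_B\sim t^{(n-1)/2}\to0$). But for $n=1$ constants \emph{do} lie in $B^*$: there $R_j^{-1/2}\|c\|_{L^2(X_j)}=|c|$ for $j\geq2$. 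Since the cutoff $\phi_\epsilon(D)$ in \eqref{equ62} annihilates constants, in dimension one $u+c$ satisfies the same hypothesis as $u$ while $\mathscr{F}(u+c-R_0(\lambda\pm i0)f)$ acquires a point mass at $0\notin M_\lambda$; so the first assertion cannot be proved in $n=1$ without further input --- indeed the lemma as stated appears to need either $n\geq2$ or a normalization removing the constant. This is arguably a defect of the interpretation \eqref{equ62} rather than of your argument (H\"ormander's polynomial case has no such issue because $P(D)$ acts on all of $\mathscr{S}'$ with no cutoff at the origin), but your blanket claim about polynomials should be qualified accordingly.
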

\begin{proof}
See \cite[Theorem 14.3.8]{Hor2}. We note that the reference is for differential operators, but its proof works well here under Theorem \ref{thm34} and the comment after \eqref{eq73}. 
\end{proof}

\begin{remark}\label{rk74}
If $\lambda\in\Lambda$ and $f\in B$ solves \eqref{eq72}, since $V$ is real-valued, Lemma \ref{lm72} and Remark \ref{rk62} then imply that $f$ also solves
\begin{equation}
(I+VR_0(\lambda-i0))f=0.
\end{equation}
\end{remark}

The main properties of $\Lambda$ are indicated in the following.

\begin{proposition}\label{pro72}
Suppose $\lambda\in\mathbb{R}\setminus\{0\}$, and $V$ is a short range potential. Consider $u\in B_s^*$ solving
\begin{equation}\label{eq718}
u=-R_0(\lambda+i0)Vu.
\end{equation}
Then\\
(i) The space of solutions to \eqref{eq718} is finite dimensional.\\
(ii) There exists $C>0$ independent of $u$ such that
\begin{equation}\label{eq719}
\|\langle\cdot\rangle^{s+\frac12}J_su\|_{B^*}\leq C\|Vu\|_B<\infty,
\end{equation}
where $C$ remains bounded if $\lambda$ stays in a bounded set. Consequently,
\begin{equation}\label{equ69}
\|\langle\cdot\rangle^{s-\epsilon}J_{s'}u\|_{L^2}\leq C_\epsilon\|Vu\|_B<\infty,
\end{equation}
for any $\epsilon>0$ and $s'\leq s$, where $C_\epsilon$ remains bounded if $\lambda$ stays in a bounded set. Thus $u\in H^s$ and $(H-\lambda)u=0$ holds.\\
(iii) The set of $\lambda\in\mathbb{R}\setminus\{0\}$ such that \eqref{eq718} has a non-trivial $B_s^*$ solution is discrete in $\mathbb{R}\setminus\{0\}$.\\
\end{proposition}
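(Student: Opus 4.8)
The plan is to prove the three assertions in the order (i), (ii), (iii), as (ii) carries the main technical content and feeds into (iii). For (i), I would transfer \eqref{eq718} to the space $B$: if $u\in B_s^*$ solves \eqref{eq718}, then $f:=Vu\in B$ (since $V$ maps $B_s^*$ into $B$) and $(I+VR_0(\lambda+i0))f=0$; conversely any $f$ in this kernel yields the solution $u:=-R_0(\lambda+i0)f\in B_s^*$ of \eqref{eq718} with $Vu=f$, by Theorem \ref{thm34}. So $u\mapsto Vu$ is a linear bijection from the solution space of \eqref{eq718} onto $\ker(I+VR_0(\lambda+i0))\subset B$. Since $R_0(\lambda+i0)\colon B\to B_s^*$ is bounded (Theorem \ref{thm34}) and $V\colon B_s^*\to B$ is compact (short range), $VR_0(\lambda+i0)$ is compact on $B$, so Riesz--Schauder theory makes its kernel, hence the solution space, finite dimensional.

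For (ii), put $f=Vu\in B$. I would first pin down the Fourier support of $f$: because $V$ is real-valued, $\langle u,f\rangle=\int V|u|^2\in\mathbb{R}$, so $\mathrm{Im}\langle u,f\rangle=0$; writing \eqref{eq718} as $((-\Delta)^{s/2}-\lambda)u=-Vu$ in the sense of \eqref{equ62} (as in Remark \ref{rk62}), Lemma \ref{lm72} forces $R_0(\lambda+i0)f=R_0(\lambda-i0)f\ (=-u)$, whence by \eqref{eq73} and Lemma \ref{lm23}, $\widehat f=0$ on $\{\xi\in\mathbb{R}^n;\ |\xi|^s=\lambda\}$ as an $L^2$ trace. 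This is exactly the hypothesis of Theorem \ref{thm64}; applying it to $u=-R_0(\lambda+i0)f$ with $r(\xi)=\langle\xi\rangle^s$ (so $r(D)=J_s$, and $r/\langle\cdot\rangle^s\equiv1$ trivially meets the hypotheses) and with the weight $\mu(t)=(1+t)^{s+1/2}$, which saturates \eqref{equ527}, gives $\|\langle\cdot\rangle^{s+1/2}J_su\|_{B^*}\le C_I\|\langle\cdot\rangle^{s+1/2}Vu\|_B$. The main obstacle is then to bound the right-hand side by $C\|Vu\|_B<\infty$.

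I expect to do this by a bootstrap proving $\|\langle\cdot\rangle^\beta J_su\|_{B^*}\le C_\beta\|Vu\|_B$ for $0\le\beta\le s+1/2$, by induction on $\beta$. The base case $\beta=0$ is Theorem \ref{thm34} ($\|J_su\|_{B^*}=\|J_sR_0(\lambda+i0)Vu\|_{B^*}\le C\|Vu\|_B$). For the inductive step one writes $\langle\cdot\rangle^{\beta'}Vu=V(\langle\cdot\rangle^{\beta'}u)$ and controls $\|\langle\cdot\rangle^{\beta'}u\|_{B_s^*}$ by $\|\langle\cdot\rangle^{\beta'}J_su\|_{B^*}$ modulo commutators $[J_s,\langle\cdot\rangle^{\beta'}]u$ of strictly lower order (order $s-1$ in $D$, $\beta'-1$ in $x$), which are already controlled by the norms available at earlier stages; splitting $V=V_1+V_2$ with $\|V_1\|_{B_s^*\to B}$ small (the far tail $V\mathbf 1_{\{|x|>N\}}$, using \eqref{eq412}) and $V_2$ compactly supported (so $\langle\cdot\rangle^{\beta'}V_2u$ is harmlessly bounded via the base case), one applies Theorem \ref{thm64} with weight $(1+t)^{\beta'}$, absorbs the $V_1$-contribution into the left-hand side, and closes the induction with increment $1$ in $\beta$ (fractional increments being identical). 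Once \eqref{eq719} holds, \eqref{equ69} follows by unwinding the $B^*$-norm on the annuli: \eqref{eq719} forces $\|J_su\|_{L^2(X_j)}\le CR_j^{-s}\|Vu\|_B$, so $\|\langle\cdot\rangle^{s-\epsilon}J_{s'}u\|_{L^2}^2\le C\|Vu\|_B^2\sum_jR_j^{-2\epsilon}<\infty$ for every $s'\le s$ (using that $J_{s'-s}$ has nonpositive order and commutes with $\langle\cdot\rangle^{s-\epsilon}$ up to lower order). In particular $u\in H^s$, and since every term of $((-\Delta)^{s/2}-\lambda)u=-Vu$ now lies in $L^2$, $(H-\lambda)u=0$ in $L^2$.

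For (iii) I would argue by contradiction. Suppose $\lambda_j\to\lambda_0\in\mathbb{R}\setminus\{0\}$ with $\lambda_j\in\Lambda\setminus\{\lambda_0\}$, and, dropping finitely many $j$, let $\lambda_j,\lambda_0$ lie in a fixed compact $I\subset\mathbb{R}\setminus\{0\}$. Pick solutions $u_j$ of \eqref{eq718} at $\lambda_j$ normalized by $\|u_j\|_{H^s}=1$; by (ii), with constants uniform over $I$, one has $\|u_j\|_{B_s^*}\le C$, $\|\langle\cdot\rangle^{s-\epsilon}u_j\|_{L^2}\le C\|Vu_j\|_B\le C$, and $\|Vu_j\|_B\ge c>0$ (since $\|J_su_j\|_{L^2}=\|u_j\|_{H^s}=1$ and \eqref{equ69}). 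The uniform $H^s$-bound together with the uniform spatial decay gives, by a weighted Rellich argument, a subsequence with $u_j\to u$ in $L^2$, hence in $\mathscr{S}'$; Lemma \ref{lm76} then gives $Vu_j\to Vu$ in $B$, so $\|Vu\|_B\ge c$ and $u\ne0$. Passing to the limit in $u_j=-R_0(\lambda_j+i0)Vu_j$ — using $\|Vu_j-Vu\|_B\to0$, the uniform bound on $\|R_0(\lambda_j+i0)\|_{B\to B_s^*}$ (Theorem \ref{thm34}), and the weak$^*$ continuity of $z\mapsto R_0(z)(Vu)$ — yields $u=-R_0(\lambda_0+i0)Vu$, so by (ii) $u\in H^s$ is an $L^2$-eigenfunction of $H$ for $\lambda_0$, while each $u_j\in H^s$ is one for $\lambda_j$. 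Self-adjointness of $H$ and $\lambda_j\ne\lambda_0$ give $\langle u_j,u\rangle_{L^2}=0$ for all $j$, contradicting $\langle u_j,u\rangle\to\|u\|_{L^2}^2>0$. Hence $\Lambda$ has no accumulation point in $\mathbb{R}\setminus\{0\}$, which is (iii).
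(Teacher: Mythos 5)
Your parts (i) and (iii) are correct and essentially match the paper's argument ((iii) with a cosmetic variation: you show the limit $u\neq0$ and contradict orthogonality of eigenfunctions, while the paper shows $u=0$ and then contradicts the normalization via Lemma \ref{lm76} and the uniform bound on $R_0(\lambda_j+i0)$; both routes are sound). The genuine problem is in (ii), at the absorption step of your bootstrap.

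You apply Theorem \ref{thm64} with the unbounded weight $\mu(t)=(1+t)^{s+1/2}$ (or $(1+t)^{\beta'}$ at intermediate stages) and then propose to ``absorb the $V_1$-contribution into the left-hand side.'' Absorbing a term $\theta\,\|\langle\cdot\rangle^{\beta'}J_su\|_{B^*}$ with $\theta<1$ into the left-hand side is only legitimate if you already know $\|\langle\cdot\rangle^{\beta'}J_su\|_{B^*}<\infty$ --- and that finiteness is precisely what you are trying to prove at stage $\beta'$. Your induction does not supply it: at stage $\beta'$ you only know finiteness of the strictly weaker norm with exponent $\beta'-1$, so the inequality you derive has the form $A\le C+\theta A$ with $A$ possibly equal to $+\infty$, from which nothing follows. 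This is exactly the subtlety the paper handles by running the entire estimate with the bounded regularized weights $\mu_\epsilon(t)=(1+t)^{s+\frac12}(1+\epsilon t)^{-s-\frac12}$: these still satisfy \eqref{equ527} (so Theorem \ref{thm64} applies with constants uniform in $\epsilon$), they are bounded (so $\|\tilde{\mu}_\epsilon J_su\|_{B^*}<\infty$ a priori and absorption is legitimate), and one then lets $\epsilon\downarrow0$. A secondary issue: your inductive step leans on weighted $B^*$-boundedness of the commutators $[J_s,\langle\cdot\rangle^{\beta'}]$, which is not established anywhere in the paper and is not free for fractional $s$; the paper avoids commutators entirely by inserting the annulus-wise short-range estimate \eqref{eq411} into $\|\tilde{\mu}_\epsilon Vu\|_B\le\sum_j\mu_\epsilon(R_j)R_j^{1/2}\|Vu\|_{L^2(X_j)}$, splitting the sum at a large $J$ chosen via \eqref{eq412}, and absorbing only the tail $j>J$. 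Your proof of (ii) would be repaired by adopting this truncated-weight device; as written it is incomplete.
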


\begin{proof}
(i) is obvious since $R_0(\lambda+i0)V$ is compact in $B_s^*$.

For (ii), let $f=-Vu\in B$. Since $V$ is real-valued, by Lemma \ref{lm72} we have $u=R_0(\lambda+i0)f=R_0(\lambda-i0)f$, and thus that $\hat{f}=0$ on $\{\xi\in\mathbb{R}^n;~|\xi|^s=\lambda\}$ in the view of \eqref{eq73}. Set $\mu_\epsilon(t)=(1+t)^{s+\frac12}(1+\epsilon t)^{-s-\frac12}$ for $\epsilon\in(0,1)$, one checks that $\mu_\epsilon$ is an increasing function satisfying
\begin{equation}\label{eq720}
0<(1+t)\mu_\epsilon'(t)<\mbox{$(s+\frac12)$}\mu_\epsilon(t),\quad t>0.
\end{equation}
Now we apply Theorem \ref{thm64} with $r(D)=J_s$ and such weight function $\mu_\epsilon$ to the above $f$ to obtain
\begin{equation}\label{eq721}
\begin{split}
\|\tilde{\mu}_\epsilon J_su\|_{B^*}=\|\tilde{\mu}_\epsilon J_sR_0(\lambda+i0)f\|_{B^*}\leq C\|\tilde{\mu}_\epsilon Vu\|_B\leq C\sum_{j\geq1}\mu_\epsilon(R_j)R_j^\frac12\|Vu\|_{L^2(X_j)},
\end{split}
\end{equation}
where $\tilde{\mu}_\epsilon(\cdot)=\mu_\epsilon(|\cdot|)$, and $C>0$ remains bounded if $\lambda$ stays in a bounded set. We note that $\mu_\epsilon$ is a bounded function, thus the left hand side of \eqref{eq721} is finite, which makes absorption discussion reasonable. By \eqref{eq411}, 
\begin{equation}\label{eq722}
\mu_\epsilon(R_j)R_j^\frac12\|Vu\|_{L^2(X_j)}\leq CM_jR_j^\frac12\mu_\epsilon(R_j)\left(\|J_su\|_{L^2(\tilde{X}_j)}+e^{-c_2R_j}\|e^{-c_1|\cdot|}J_su\|_{L^1(\tilde{X}_j^c)}\right).
\end{equation}
Notice that \eqref{eq720} and $\mu_\epsilon(0)=1$ imply $\mu_\epsilon(R_j)\leq 2^{s+\frac12}\mu_\epsilon(R_{j-1})\leq CR_j^{s+\frac12}$, we first have
\begin{equation}\label{eq723}
\|J_su\|_{L^2(\tilde{X}_j)}\leq C\mu_\epsilon^{-1}(R_{j-2})\|\tilde{\mu}_\epsilon J_su\|_{L^2(\tilde{X}_j)}\leq C\mu_\epsilon^{-1}(R_j)R_j^\frac12\|\tilde{\mu}_\epsilon J_su\|_{B^*}.
\end{equation}
We also have
\begin{equation}\label{eq724}
\|e^{-c_1|\cdot|}J_su\|_{L^1(\tilde{X}_j^c)}\leq C\sum_{l\geq1}e^{-c_1R_{l-1}}R_l^\frac{n+1}{2}R_l^{-\frac12}\|\tilde{\mu}_\epsilon J_su\|_{L^2(X_l)}\leq C\|\tilde{\mu}_\epsilon J_su\|_{B^*}.
\end{equation}
Combining \eqref{eq720}-\eqref{eq724}, we obtain for any integer $J$ that
\begin{equation}\label{eq725}
\|\tilde{\mu}_\epsilon J_su\|_{B^*}\leq C\sum_{j=1}^J\mu_\epsilon(R_j)R_j^\frac12\|Vu\|_{L^2(X_j)}+C\sum_{j>J}\left(M_jR_j+M_jR_j^{1+s}e^{-c_2R_j}\right)\|\tilde{\mu}_\epsilon J_su\|_{B^*}.
\end{equation}
By \eqref{eq412}, we can fixed large $J$ such that the second summation of \eqref{eq725} is absorbed into the left hand side. We note that such choice of $J$ only depends on $M_j$ and $\lambda$, and the choice is uniform if $\lambda$ stays in a bounded set. We therefore obtain
\begin{equation*}
\|\tilde{\mu}_\epsilon J_su\|_{B^*}\leq C\mu_\epsilon(R_J)\|Vu\|_B.
\end{equation*}
Sending $\epsilon$ to $0$ proves \eqref{eq719}, for $(1+|\cdot|)^{s+\frac12}$ is equivalent to $\langle\cdot\rangle^{s+\frac12}$. It is easy to see that
\begin{equation*}
\|\langle\cdot\rangle^{s-\epsilon}J_su\|_{L^2}\leq C\sum_{j\geq1}R_j^{-\epsilon}\|\langle\cdot\rangle^{s+\frac12}J_su\|_{B^*}\leq C_\epsilon\|\langle\cdot\rangle^{s+\frac12}J_su\|_{B^*},
\end{equation*}
\eqref{equ69} is then proved, if we observe from \eqref{eq41} that $\langle\cdot\rangle^{s-\epsilon}J_{-(s-s')}\langle\cdot\rangle^{-(s-\epsilon)}$ is $L^2$ bounded, for its integral kernel is dominated by the convolution kernel of a Bessel potential operator near the diagonal, and decays exponentially away from the diagonal. By Remark \ref{rk62}, the proof of (ii) is complete.

For (iii), suppose we have $\mathbb{R}\setminus\{0\}\ni\lambda_j\rightarrow\lambda\in\mathbb{R}\setminus\{0\}$ where $\lambda_j$'s are distinct, and $u_j\in B_s^*$ with $\|u_j\|_{B_s^*}=1$ satisfying
\begin{equation}\label{eq727}
u_j=-R_0(\lambda_j+i0)Vu_j.
\end{equation}
Since \eqref{equ69} implies $\sup_j\|u_j\|_{L^2}\leq C\|V\|_{B_s^*\rightarrow B}$, 
\begin{equation*}
\lim_{R\rightarrow\infty}\sup_j\int_{|\xi|>R}|\hat{u}_j|^2d\xi=0,\quad\text{and}\quad\lim_{R\rightarrow\infty}\sup_j\int_{|x|>R}|u_j|^2dx=0,
\end{equation*}
the Kolmogorov-Riesz theorem (see e.g. \cite{P}) indicates that we can assume $u_j\rightarrow u$ in $L^2$ up to subsequence. On the other hand, \eqref{eq727} and the fact that $u_j\in\mathrm{Dom}(H)=H^s$ imply $Hu_j=\lambda_ju_j$, and since $H$ is closed, we also have $Hu=\lambda u$. We must have $u=0$ for it is orthogonal to a bounded sequence of orthogonal eigenfunctions. Now Lemma \ref{lm76} implies $\|Vu_j\|_B\rightarrow0$, we thus conclude by \eqref{eq324} and the assumption $\lambda\in\mathbb{R}\setminus\{0\}$ that
\begin{equation*}
\|u_j\|_{B_s^*}=\|R_0(\lambda_j+i0)Vu_j\|_{B_s^*}\rightarrow0,
\end{equation*}
which is a contradiction. The proof of (iii) is complete.
\end{proof}

\begin{remark}\label{rk65}
Now \eqref{61} apparently holds for $\Lambda$ is a part of the set of $\lambda$ that Proposition \ref{pro72} considers, which turn out to be genuine eigenvalues of $H$.
\end{remark}

\section{Asymptotic Completeness of $W_\pm$}\label{sec8}

Let $\{E_\lambda\}$ be the spectral family of $H$, and
\begin{equation}\label{eq87}
\tilde{\Lambda}=\Lambda\cup\{0\},\quad E^c=\int_{\mathbb{R}\setminus\tilde{\Lambda}}dE_\lambda,\quad E^d=\int_{\tilde{\Lambda}}dE_\lambda.
\end{equation}
The goal of this section is to prove Theorem \ref{thm14}, and the proof is in principle given in the Section 14.6 of H\"{o}rmander \cite{Hor2}. The plan is to consider the distorted Fourier transforms formally given by $F_\pm=\mathscr{F}(W_\pm)^*$, and show that $F_\pm$ are both scaled isometries from $E^cL^2$ to $L^2$. However, unlike many authors who defined the distorted Fourier transforms by constructing generalized eigenfunction expansion, H\"{o}rmander \cite{Hor2} considers $F_\pm$ constructed on lower dimensional hypersurfaces using spectral calculus. In particular, such construction is tricky when considering the trace property Lemma \ref{lm23} in a parameterized way. We shall supplement some details for such construction, especially in the proofs of Lemma \ref{lm82} and Theorem \ref{thm74}. We note that such purpose is based on the discreteness of $\Lambda$ proved in Proposition \ref{pro72}.

If $z\in\mathbb{C}\setminus\mathbb{R}$, let $R(z)=(H-z)^{-1}$ denote the $L^2$ resolvent of $H$. We start by checking its boundary behavior when $\mathrm{Im}\,z\rightarrow\pm 0$ using the claims for $R_0(z)$ that we have established. 

\begin{lemma}\label{lm81}
If $z\in\mathbb{C}^\pm\setminus\tilde{\Lambda}$ and $V$ is a short range potential, then $(I+VR_0(z))^{-1}$ exists and is continuous in $B$. Moreover, the maps
\begin{equation}\label{eq81}
\mathbb{C}^\pm\setminus\tilde{\Lambda}\ni z\mapsto(I+VR_0(z))^{-1}f\in B
\end{equation}
are continuous if $f\in B$.
\end{lemma}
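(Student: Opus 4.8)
The plan is to run the Fredholm alternative in the Banach space $B$, using the mapping properties of the free resolvent from Theorem \ref{thm34} together with the compactness packaged in Lemma \ref{lm76}.

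First I would note that for every $z\in\mathbb{C}^\pm\setminus\{0\}$ the operator $VR_0(z)$ is compact on $B$: by Theorem \ref{thm34} (applied with $r(D)=J_s$) $R_0(z)$ maps $B$ boundedly into $B_s^*$, with $\|R_0(z)\|_{B\to B_s^*}$ bounded uniformly for $z$ in any compact subset of $\mathbb{C}^\pm\setminus\{0\}$, while $V$ maps $B_s^*$ compactly into $B$ because $V$ is short range. Hence $I+VR_0(z)$ is a compact perturbation of the identity on $B$, and the Fredholm alternative reduces its bounded invertibility to injectivity on $B$. Injectivity for $z\in\mathbb{C}^\pm\setminus\tilde{\Lambda}$ I would check in three cases. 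If $\operatorname{Im}z\neq0$ and $(I+VR_0(z))f=0$ with $f\in B$, then $u=R_0(z)f\in H^s$ (the ordinary $L^2$-resolvent), $f=-Vu$, and $(H-z)u=0$; since $H$ is self-adjoint and $z\notin\mathbb{R}$ this forces $u=0$, hence $f=0$. If $z=\lambda+i0$ with $\lambda\notin\tilde{\Lambda}$, injectivity on $B$ is exactly the statement $\lambda\notin\Lambda$ (Definition \ref{def71}). If $z=\lambda-i0$, note that the solution spaces of $(I+VR_0(\lambda+i0))f=0$ and $(I+VR_0(\lambda-i0))f=0$ coincide: by Lemma \ref{lm72} applied to $u=R_0(\lambda\mp i0)f$ (here $\operatorname{Im}\langle u,f\rangle=-\int V|u|^2=0$ because $V$ is real) each equality $u=R_0(\lambda+i0)f$, $u=R_0(\lambda-i0)f$ implies the other, which is the symmetric form of Remark \ref{rk74}; so a nontrivial solution of the "$-$" equation would put $\lambda\in\Lambda$, which is excluded. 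This proves the first assertion.

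For the continuity I would argue by a compactness/contradiction scheme rather than by estimating operator norms directly. Fix $f\in B$ and a sequence $z_j\to z_0$ in $\mathbb{C}^\pm\setminus\tilde{\Lambda}$, and put $h_j=(I+VR_0(z_j))^{-1}f$. I would repeatedly use two facts: (a) $\{R_0(z_j)g_j\}$ stays bounded in $B_s^*$ whenever $\{g_j\}$ is bounded in $B$ (Theorem \ref{thm34}); and (b) if moreover $g_j\to g$ in $\mathscr{S}'$, then $VR_0(z_j)g_j\to VR_0(z_0)g$ in $B$. Fact (b) follows from Lemma \ref{lm76} once one checks $R_0(z_j)g_j\to R_0(z_0)g$ in $\mathscr{S}'$, which in turn splits as $R_0(z_j)(g_j-g)\to0$ in $B_s^*$ (the uniform bound of Theorem \ref{thm34}) plus $R_0(z_j)g\to R_0(z_0)g$ in $\mathscr{S}'$ (the weak$^*$ continuity asserted in Theorem \ref{thm34}). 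Applying (a)--(b) with $g_j=h_j/\|h_j\|_B$ and the identity $g_j=-VR_0(z_j)g_j+f/\|h_j\|_B$, I would first rule out $\|h_j\|_B\to\infty$ along a subsequence: a $B$-limit $\psi$ of such $\{g_j\}$ would be a unit vector with $(I+VR_0(z_0))\psi=0$, contradicting $z_0\notin\tilde{\Lambda}$. Once $\{h_j\}$ is known to be bounded in $B$, the relation $h_j=-VR_0(z_j)h_j+f$ together with (a)--(b) shows that every subsequence of $\{h_j\}$ has a further subsequence converging in $B$ to a solution of $(I+VR_0(z_0))h=f$, necessarily $h_0:=(I+VR_0(z_0))^{-1}f$; hence $h_j\to h_0$ in $B$.

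The genuinely delicate point is fact (b): turning the mere boundedness of $\{R_0(z_j)g_j\}$ in $B_s^*$ together with $\mathscr{S}'$-convergence into $B$-norm convergence after composing with $V$. This is precisely where the compactness of $V\colon B_s^*\to B$ (via Lemma \ref{lm76}) and the fact that the boundary values of $R_0$ are only weak$^*$, not norm, continuous in $z$ must be used in tandem, and it is what forces the argument through sequences and compactness instead of a direct Neumann-series perturbation.
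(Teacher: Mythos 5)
Your Fredholm set-up and the injectivity argument are correct and essentially coincide with the paper's: the paper also reduces to the compactness of $VR_0(z)$ on $B$ plus triviality of the kernel (via $R_0(z)u=R(z)(I+VR_0(z))u$ for $\operatorname{Im}z\neq0$, Definition \ref{def1} of $\Lambda$ for $\lambda+i0$, and Remark \ref{rk74} for $\lambda-i0$), and then cites H\"{o}rmander's Fredholm lemma \cite[Lemma 14.5.3]{Hor2} for both invertibility and the strong continuity of the inverse. You are in effect re-proving that lemma by hand, which is legitimate, but your execution has a genuine gap in "fact (b)".

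The gap: you claim that $\{g_j\}$ bounded in $B$ with $g_j\to g$ in $\mathscr{S}'$ forces $R_0(z_j)(g_j-g)\to0$ in $B_s^*$ "by the uniform bound of Theorem \ref{thm34}". The uniform bound only gives $\|R_0(z_j)(g_j-g)\|_{B_s^*}\leq C\|g_j-g\|_B$, which is useless unless $g_j\to g$ in $B$-\emph{norm}. Nor can you recover the needed $\mathscr{S}'$-convergence of $R_0(z_j)g_j$ by dualizing: pairing against $\phi\in\mathscr{S}$ transfers the resolvent onto $\phi$ and produces an element of $B^*$, and a $B$-bounded sequence tending to $0$ in $\mathscr{S}'$ need \emph{not} pair to $0$ against a fixed element of $B^*$ (take $g_j=R_j^{-(n+1)/2}\chi_{X_j}$ and $v=\sum_k R_k^{(1-n)/2}\chi_{X_k}$; this is exactly the failure of density of $\mathscr{S}$ in $B^*$ noted in the Introduction). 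Since both of your applications of (b) — producing the unit-norm limit $\psi$ when $\|h_j\|_B\to\infty$, and extracting a $B$-convergent subsequence of $\{h_j\}$ — invoke (b) with only $\mathscr{S}'$-convergence in hand, the continuity proof does not close as written. The repair is the one the paper implicitly uses: the set $\{VR_0(z)g;\ \|g\|_B\leq1,\ z\in K\}$ is precompact in $B$, immediately, because it is contained in the image under the compact map $V:B_s^*\to B$ of a ball of radius $C_K$ (Theorem \ref{thm34}). Use this first to extract a $B$-norm convergent subsequence of $VR_0(z_j)g_j$, hence of $g_j$ via the resolvent identity; only then apply your splitting, which is valid for $B$-norm convergent $g_j$ (the remaining term $R_0(z_j)g\to R_0(z_0)g$ weak$^*$ in $B_s^*$ plus Lemma \ref{lm76} identifies the limit). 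With that reordering your argument is complete.
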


\begin{proof}
When $\mathrm{Im}\,z\neq0$, for $u\in B$ we have $R_0(z)u\in H^s$, and it is easy to deduce that
\begin{equation}\label{eq83}
R_0(z)u=R(z)(I+VR_0(z))u,
\end{equation}
where $I+VR_0(z)$ can be interpreted as an operator in $B$. Thus $I+VR_0(z)$ has kernel $\{0\}$, which is also true when $z\in\mathbb{C}^\pm\setminus\tilde{\Lambda}$ by Definition \ref{def71} and Remark \ref{rk74}. By Fredholm theory (see \cite[Lemma 14.5.3]{Hor2}), $(I+VR_0(z))^{-1}$ is bounded in $B$ and strongly continuous in $z\in\mathbb{C}^\pm\setminus\tilde{\Lambda}$, where the proof needs the facts that $\mathbb{C}^\pm\setminus\{0\}\ni z\mapsto VR_0(z)f\in B$ is continuous when $f\in B$ (see Theorem \ref{thm34} and Lemma \ref{lm76}), and $\{VR_0(z)f;~\|f\|_B\leq1,\,z\in K\}$ is precompact in $B$ if $K$ is bounded, which obviously holds.
\end{proof}

If we let $u=(I+VR_0(z))^{-1}f$ in \eqref{eq83}, Lemma \ref{lm81} shows that 
\begin{equation*}
R(z)f=R_0(z)(I+VR_0(z))^{-1}f,\quad f\in B,
\end{equation*}
is well defined for $z\in\mathbb{C}^\pm\setminus\tilde{\Lambda}$, and the maps $\mathbb{C}^\pm\setminus\tilde{\Lambda}\ni z\mapsto R(z)f\in B^*$ are weak* continuous. In particular
\begin{equation*}
R(\lambda\pm i0)f=R_0(\lambda\pm i0)(I+VR_0(\lambda\pm i0))^{-1}f\in B_s^*,\quad f\in B,~\lambda\in\mathbb{R}\setminus\tilde{\Lambda},
\end{equation*}
and we write $f_{\lambda\pm i0}=(I+VR_0(\lambda\pm i0))^{-1}f$ for later convenience.

Let $M_\lambda=\{\xi\in\mathbb{R}^n;~|\xi|^s=\lambda\}$. (Note that $M_\lambda=\emptyset$ if $\lambda<0$.) The spectral theory (see \cite[p. 255-256]{Hor2}) implies that
\begin{equation}\label{eq88}
\begin{split}
\|E^cf\|_{L^2}=&(2\pi)^{-n}\int_{\mathbb{R}\setminus\tilde{\Lambda}}\frac{d\lambda}{s\lambda^\frac{s-1}{s}}\int_{M_\lambda}\left|\hat{f}_{\lambda+i0}(\xi)\right|^2dS\\
=&(2\pi)^{-n}\int_{\mathbb{R}\setminus\tilde{\Lambda}}\frac{d\lambda}{s\lambda^\frac{s-1}{s}}\int_{M_\lambda}\left|\hat{f}_{\lambda-i0}(\xi)\right|^2dS,\quad f\in B.
\end{split}
\end{equation}
Note that $d\xi=d\lambda dS/(s\lambda^\frac{s-1}{s})$ where $\xi\in\mathbb{R}^n$, the above repeated integral is formally the integration of $\hat{f}_{|\xi|^s\pm i0}(\xi)$ in $\mathbb{R}^n$. However by Lemma \ref{lm23}, the $L^2$ trace $\hat{f}_{\lambda\pm i0}(\xi)$ is only defined for almost every $\xi\in M_\lambda$ with respect to the surface measure by extension. We therefore need to choose the value of $\hat{f}_{|\xi|^s\pm i0}(\xi)$ well on each $M_\lambda$ even just to make it measurable in $\mathbb{R}^n$.

\begin{lemma}\label{lm82}
If $f\in B$, then there exist measurable functions $F_\pm f$ which are uniquely defined almost everywhere in $\mathbb{R}^n$ with respect to the Lebesgue measure, such that for every $\lambda\in\mathbb{R}\setminus\tilde{\Lambda}$ we have $F_\pm f(\xi)=\hat{f}_{\lambda\pm i0}(\xi)$ holds for almost every $\xi\in M_\lambda$ with respect to the surface measure.
\end{lemma}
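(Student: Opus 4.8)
I would build $F_\pm f$ shell by shell and then glue. Since $\Lambda$ is discrete in $\mathbb{R}\setminus\{0\}$ by Proposition~\ref{pro72}(iii), the set $(0,\infty)\setminus\Lambda$ is open, hence a countable union of compact intervals $I_m\subset\mathbb{R}\setminus\tilde\Lambda$, and $\{\xi;~|\xi|^s\in\Lambda\}\cup\{0\}$ is Lebesgue-null. On each shell $\Omega_I=\{\xi;~|\xi|^s\in I\}$, $I$ one of the $I_m$, I use the polar-type coordinates $\xi=\lambda^{1/s}\omega$ ($\lambda\in I$, $\omega\in\mathbb{S}^{n-1}$), under which Lebesgue measure becomes $d\lambda\,dS/(s\lambda^{(s-1)/s})$ with $dS$ the surface measure of $M_\lambda$; thus measurability and a.e.-definedness on $\Omega_I$ correspond exactly to those on $I\times\mathbb{S}^{n-1}$, and the problem is to produce a measurable function on $I\times\mathbb{S}^{n-1}$ whose slice over \emph{every} $\lambda\in I$ is the prescribed trace.

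\textbf{Step 1: a continuous $L^2(\mathbb{S}^{n-1})$-valued curve.} For $g\in B$ and $\lambda\in I$ let $\Gamma_\lambda g\in L^2(\mathbb{S}^{n-1})$ be the pullback under $\omega\mapsto\lambda^{1/s}\omega$ of the trace $\hat g|_{M_\lambda}$ from Lemma~\ref{lm23}. Since $M_\lambda=\lambda^{1/s}\mathbb{S}^{n-1}$ is a rescaling of a fixed hypersurface, the trace bound in Lemma~\ref{lm23} gives $\|\Gamma_\lambda g\|_{L^2(\mathbb{S}^{n-1})}\le C_I\|g\|_B$ uniformly for $\lambda\in I$. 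When $g\in\mathscr{F}^{-1}C_c^\infty\subset\mathscr{S}$ the trace is the honest restriction of the smooth function $\hat g$, so $\lambda\mapsto\Gamma_\lambda g$ is continuous into $L^2(\mathbb{S}^{n-1})$; by the uniform bound and density of $\mathscr{F}^{-1}C_c^\infty$ in $B$ this continuity passes to every $g\in B$. Now set $f_{\lambda\pm i0}=(I+VR_0(\lambda\pm i0))^{-1}f\in B$. By Lemma~\ref{lm81} this is continuous in $\lambda\in I$, and by the uniform boundedness principle $\sup_{\lambda\in I}\|(I+VR_0(\lambda\pm i0))^{-1}\|_{B\to B}<\infty$, so $\sup_{\lambda\in I}\|f_{\lambda\pm i0}\|_B<\infty$. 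Writing $\Gamma_\lambda(f_{\lambda\pm i0})-\Gamma_{\lambda_0}(f_{\lambda_0\pm i0})=\Gamma_\lambda(f_{\lambda\pm i0}-f_{\lambda_0\pm i0})+(\Gamma_\lambda-\Gamma_{\lambda_0})(f_{\lambda_0\pm i0})$ and estimating the first term by the uniform bound on $\Gamma_\lambda$ and the second by the continuity just established, I conclude that $\Phi_\pm\colon I\to L^2(\mathbb{S}^{n-1})$, $\Phi_\pm(\lambda)=\Gamma_\lambda(f_{\lambda\pm i0})$, is continuous.

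\textbf{Step 2: realizing the curve as a genuine function.} The claim reduces to: a continuous map $\Phi\colon I\to L^2(\mathbb{S}^{n-1})$ admits a jointly measurable $\tilde F$ on $I\times\mathbb{S}^{n-1}$ with $\tilde F(\lambda,\cdot)=\Phi(\lambda)$ in $L^2(\mathbb{S}^{n-1})$ for \emph{every} $\lambda\in I$, and such $\tilde F$ is unique up to a Lebesgue-null set (the latter being immediate from Fubini). For existence I approximate $\Phi$ uniformly on $I$ by step functions $\Phi_k(\lambda)=\sum_i\mathbf{1}_{E_i^k}(\lambda)v_i^k$ with $v_i^k\in L^2(\mathbb{S}^{n-1})$, $E_i^k$ intervals; passing to a subsequence I may assume $\sup_{\lambda\in I}\|\Phi_{k+1}(\lambda)-\Phi_k(\lambda)\|_{L^2(\mathbb{S}^{n-1})}\le2^{-k}$. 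Fixing representatives of the $v_i^k$ makes $\tilde F_k(\lambda,\omega)=\sum_i\mathbf{1}_{E_i^k}(\lambda)v_i^k(\omega)$ jointly measurable with $\tilde F_k(\lambda,\cdot)=\Phi_k(\lambda)$ for every $\lambda$. Then for each fixed $\lambda$, Minkowski's inequality gives $\big\|\sum_k|\tilde F_{k+1}(\lambda,\cdot)-\tilde F_k(\lambda,\cdot)|\big\|_{L^2(\mathbb{S}^{n-1})}\le\sum_k 2^{-k}<\infty$, so $\tilde F_k(\lambda,\cdot)$ converges both a.e. in $\omega$ and in $L^2(\mathbb{S}^{n-1})$, necessarily to $\Phi(\lambda)$. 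Hence $\tilde F:=\lim_k\tilde F_k$ (set to $0$ on the measurable set where the limit fails) is jointly measurable and satisfies $\tilde F(\lambda,\cdot)=\Phi(\lambda)$ in $L^2(\mathbb{S}^{n-1})$ for every $\lambda\in I$. Transporting $\tilde F$ back through $\xi=\lambda^{1/s}\omega$ defines $F_\pm f$ on $\Omega_I$, measurable and defined a.e., with $F_\pm f|_{M_\lambda}=\hat f_{\lambda\pm i0}|_{M_\lambda}$ in $L^2(M_\lambda,dS)$ for every $\lambda\in I$.

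\textbf{Step 3: gluing and uniqueness.} The functions constructed on $\Omega_{I_m}$ agree a.e. on overlaps by the uniqueness in Step~2, so they patch to a measurable $F_\pm f$ on $\bigcup_m\Omega_{I_m}=\mathbb{R}^n\setminus(\{|\xi|^s\in\Lambda\}\cup\{0\})$, which has full measure; thus $F_\pm f$ is defined a.e. on $\mathbb{R}^n$, with the stated compatibility on every $M_\lambda$. Global uniqueness follows by applying the coarea identity $\int_{\mathbb{R}^n}\cdot\,d\xi=\int_{(0,\infty)\setminus\Lambda}\frac{d\lambda}{s\lambda^{(s-1)/s}}\int_{M_\lambda}\cdot\,dS$ to the indicator of the set where two admissible choices differ. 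The only genuine subtlety is Step~2: passing from the continuous $L^2(\mathbb{S}^{n-1})$-valued curve to an actual function whose slice over \emph{every} $\lambda$ (not merely almost every $\lambda$) equals the prescribed class; a bare appeal to Fubini only yields the slice property for a.e.\ $\lambda$, which is why the telescoping-subsequence construction is needed.
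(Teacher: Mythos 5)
Your proof is correct. It shares the paper's essential inputs --- the discreteness of $\Lambda$ from Proposition \ref{pro72}, the $B$-continuity of $\lambda\mapsto f_{\lambda\pm i0}$ from Lemma \ref{lm81} (plus Banach--Steinhaus for locally uniform bounds), the trace estimate of Lemma \ref{lm23} with a locally uniform constant, and Fubini for uniqueness --- but the construction itself takes a genuinely different route. The paper builds, via a partition of unity in $\lambda$, approximants $g_\lambda^j\in\mathscr{F}^{-1}C_c^\infty$ with $\|f_{\lambda\pm i0}-g_\lambda^j\|_B<2^{-j}$ and $\hat g_\lambda^j$ continuous in $\lambda$; then $\xi\mapsto\hat g_{|\xi|^s}^j(\xi)$ is an honest continuous function on $\mathbb{R}^n$, so measurability of the a.e.\ subsequential limit is automatic, a Fubini argument yields the trace identity for a.e.\ $\lambda$ only, and $F_\pm f$ is finally modified on a null family of spheres to upgrade to every $\lambda$. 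You instead isolate the analytic content in your Step 1 (the trace curve $\lambda\mapsto\Gamma_\lambda(f_{\lambda\pm i0})$ is continuous into $L^2(\mathbb{S}^{n-1})$ on each compact $I\subset\mathbb{R}\setminus\tilde\Lambda$) and then solve an abstract measure-theoretic problem in Step 2 by uniform step-function approximation and a telescoping subsequence, which delivers the slice identity for \emph{every} $\lambda$ directly, with no final modification step. Your route buys a cleaner separation of analysis from measure theory and avoids the last correction on a null set of spheres; the paper's route stays closer to concrete continuous functions on $\mathbb{R}^n$ and needs no Bochner-type selection argument. One remark: the uniformity in $\lambda\in I$ of the constant in Lemma \ref{lm23} is not literally part of that lemma's statement, but the paper itself invokes exactly this local uniformity (in the proof of Theorem \ref{thm74}, citing the remark after Theorem 2.1 of Agmon--H\"ormander), so your appeal to it is legitimate.
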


\begin{proof}
By Proposition \ref{pro72}, first notice that $\cup_{\lambda\in\tilde{\Lambda}}M_\lambda$ has measure $0$, the uniqueness follows. The discreteness of $\Lambda$ guarantees the existence of partition of unity
\begin{equation*}
\sum_{k=1}^{\infty}\psi_k(\lambda)=1,\quad \lambda\in\mathbb{R}\setminus\tilde{\Lambda},
\end{equation*}
where $\psi_k\in C_c^\infty(\mathbb{R})$ and the above summation is locally finite.

We first claim that for every $\lambda\in\mathbb{R}\setminus\tilde{\Lambda}$ and $j>0$, there exists $g_\lambda^j\in\mathscr{F}^{-1}C_c^\infty$ such that $\hat{g}_\lambda^j$ is continuous in $\lambda$ and
\begin{equation*}
\|f_{\lambda\pm i0}-g_\lambda^j\|_B<2^{-j}.
\end{equation*} 
For such purpose, we first fix $k$, and choose a partition of unity $\phi_{k,1},\cdots,\phi_{k,K}$ such that $\sum_{l=1}^K\phi_{k,l}=1$ in $\mathrm{supp}\,\psi_k$ where the summation is locally twice. By Lemma \ref{lm81}, if the partition is sufficiently fine, we have
\begin{equation}\label{eq810}
\|f_{\lambda\pm i0}-f_{\lambda'\pm i0}\|_B<\mbox{$\frac14$}\cdot2^{-k}2^{-j},\quad\lambda,\lambda'\in\mathrm{supp}\,\phi_{k,l},~l=1,\cdots,K.
\end{equation}
Take $\lambda_{k,l}\in\mathrm{supp}\,\phi_{k,l}$, and since $\mathscr{F}^{-1}C_c^\infty$ is dense in $B$, we can find $g_{k,l}^j\in\mathscr{F}^{-1}C_c^\infty$ such that
\begin{equation}\label{eq811}
\|f_{\lambda_{k,l}\pm i0}-g_{k,l}^j\|_B<\mbox{$\frac14$}\cdot2^{-k}2^{-j}.
\end{equation}
Set
\begin{equation*}
g_{k,\lambda}^j=\sum_{l=1}^K\phi_{k,l}(\lambda)g_{k,l}^j,\quad\lambda\in\mathrm{supp}\,\psi_k,
\end{equation*}
we have $\hat{g}_{k,\lambda}^j\in C_c^\infty$ continuous in $\lambda$ when $\lambda\in\mathrm{supp}\,\psi_k$, and 
\begin{equation*}
\|f_{\lambda\pm i0}-g_{k,\lambda}^j\|_B<2^{-k}2^{-j},\quad\lambda\in\mathrm{supp}\,\psi_k,
\end{equation*}
holds by \eqref{eq810} and \eqref{eq811}. Now let $g_\lambda^j=\sum_{k=1}^\infty\psi_k(\lambda)g_{k,\lambda}^j$, then $g_{\lambda}^j$ has the desired properties.

By Lemma \ref{lm23}, if $\lambda\in\mathbb{R}\setminus\tilde{\Lambda}$, we have
\begin{equation}\label{eq814}
\int_{M_\lambda}|\hat{f}_{\lambda\pm i0}-\hat{g}_\lambda^j|^2dS\leq C(\lambda)\|f_{\lambda\pm i0}-g_\lambda^j\|_B\leq C(\lambda)2^{-j},
\end{equation}
where $C(\lambda)$ is locally bounded by \eqref{eq324}. Thus for any compact $K\subset\mathbb{R}^n\setminus\cup_{\lambda\in\tilde{\Lambda}}M_\lambda$ and $k>0$, we have
\begin{equation}\label{eq815}
\int_K|\hat{g}_{|\xi|^s}^{j+k}(\xi)-\hat{g}_{|\xi|^s}^j(\xi)|^2d\xi\leq C_K2^{-j},
\end{equation}
which implies that $\hat{g}_{|\xi|^s}^j(\xi)$ has almost everywhere convergence in $\mathbb{R}^n$ up to subsequence, and we denote the almost everywhere limit to be $F_\pm f(\xi)$. We note that \eqref{eq815} implies $\hat{g}_{|\xi|^s}^j(\xi)\rightarrow F_\pm f(\xi)$ in $L_\mathrm{loc}^2(\mathbb{R}^n\setminus\cup_{\lambda\in\tilde{\Lambda}}M_\lambda)$ at the rate $2^{-j}$. 

By Fubini's theorem, 
\begin{equation*}
h(\lambda)=\int_{M_\lambda}|\hat{f}_{\lambda\pm i0}(\xi)-F_\pm f(\xi)|^2dS
\end{equation*}
exists for almost every $\lambda\in\mathbb{R}\setminus\tilde{\Lambda}$, and it follows from \eqref{eq814} that
\begin{equation*}
h(\lambda)\leq h_j(\lambda)=C(\lambda)2^{-j}+\int_{M_\lambda}|\hat{g}_\lambda^j(\xi)-F_\pm f(\xi)|^2dS.
\end{equation*}
Now we know from \eqref{eq815} for any compact $I\subset\mathbb{R}\setminus\tilde{\Lambda}$ that 
\begin{equation*}
\int_I h_j(\lambda)d\lambda\leq C_I2^{-j},
\end{equation*}
which implies up to subsequence that, $h_j(\lambda)\rightarrow0$ for almost every $\lambda\in\mathbb{R}\setminus\tilde{\Lambda}$, and thus $h(\lambda)=0$ for almost every $\lambda$. Therefore, we can modify $F_\pm f$ on a collection of $M_\lambda$ which form a set of measure $0$ in $\mathbb{R}^n$, and the proof is complete.
\end{proof}

$F_\pm f$ are called the distorted Fourier transforms of $f\in B$. From \eqref{eq88}, we know that $F_\pm$ has an extension supported onto $E^cL^2$.

\begin{theorem}\label{thm73}
If $f\in B$, we have
\begin{equation}\label{eq820}
\|E^cf\|_{L^2}^2=(2\pi)^{-n}\|F_\pm f\|_{L^2}^2.
\end{equation}
Thus the maps $f\mapsto F_\pm f$ can be extended for $f\in L^2$ satisfying \eqref{eq820}, which vanish on $E^dL^2$ and restrict to isometries $E^cL^2\rightarrow L^2(d\xi/(2\pi)^n)$. Moreover, we have the intertwining property
\begin{equation}\label{eq821}
F_\pm e^{itH}f=e^{it|\cdot|^s}F_\pm f,\quad f\in L^2,~t\in\mathbb{R}.
\end{equation}
\end{theorem}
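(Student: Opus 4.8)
The plan is to read off \eqref{eq820} for $f\in B$ directly from Lemma~\ref{lm82} and \eqref{eq88}, extend $F_\pm$ to $L^2$ by density, and then obtain the intertwining property by a functional-calculus argument. For the first identity: by Lemma~\ref{lm82}, $F_\pm f(\xi)=\hat f_{|\xi|^s\pm i0}(\xi)$ for a.e.\ $\xi\in\mathbb R^n$, so, since $\bigcup_{\lambda\in\tilde\Lambda}M_\lambda$ is a Lebesgue null set, rewriting $\int_{\mathbb R^n}|F_\pm f|^2\,d\xi$ in the slice coordinates $d\xi=d\lambda\,dS/(s\lambda^{(s-1)/s})$ noted after \eqref{eq88} turns it into the repeated integral in \eqref{eq88}, i.e.\ into $(2\pi)^n\|E^cf\|_{L^2}^2$; this is \eqref{eq820} on $B$. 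Since $C_c^\infty\subset B$ is dense in $L^2$ and \eqref{eq820} yields $\|F_\pm f\|_{L^2}\le(2\pi)^{n/2}\|f\|_{L^2}$ (as $E^c$ is a projection), $f\mapsto F_\pm f$ extends uniquely to a bounded linear operator on $L^2$ and \eqref{eq820} persists there by continuity. The two corollaries are then formal: $f\in E^dL^2\Rightarrow E^cf=0\Rightarrow F_\pm f=0$; and $f\in E^cL^2\Rightarrow E^cf=f$, so \eqref{eq820} reads $\|f\|_{L^2}^2=\|F_\pm f\|_{L^2(d\xi/(2\pi)^n)}^2$, the claimed isometry. Note that $\mathcal R:=\mathrm{Ran}(F_\pm)=F_\pm(E^cL^2)$ is then a \emph{closed} subspace of $L^2$, being the range of an isometry on a complete space.

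For the intertwining property I would first polarise \eqref{eq820}: as $E^c$ is a self-adjoint projection, $\langle F_\pm u,F_\pm v\rangle_{L^2}=(2\pi)^n\langle E^cu,v\rangle$ for all $u,v\in L^2$. Next I would invoke the finer form of \eqref{eq88}, namely the representation of the absolutely continuous part of the spectral measure of $H$ (cf.\ \cite[pp.~255--256]{Hor2}), which together with the same slice-coordinate change gives
\[
\langle\varphi(H)E^cf,h\rangle=(2\pi)^{-n}\int_{\mathbb R\setminus\tilde\Lambda}\varphi(\lambda)\,\frac{d\lambda}{s\lambda^{(s-1)/s}}\int_{M_\lambda}F_\pm f\,\overline{F_\pm h}\,dS=(2\pi)^{-n}\langle\varphi(|\cdot|^s)F_\pm f,F_\pm h\rangle
\]
for every bounded Borel $\varphi$ and every $f,h\in B$, hence (by density of $B$ and the $L^2$-boundedness of $\varphi(H)$, of $F_\pm$, and of multiplication by $\varphi(|\cdot|^s)$) for all $f,h\in L^2$. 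Applying the polarised identity to $u=\varphi(H)f$, $v=h$ and using $E^c\varphi(H)=\varphi(H)E^c$ gives $\langle F_\pm\varphi(H)f-\varphi(|\cdot|^s)F_\pm f,\,F_\pm h\rangle=0$ for all $h\in L^2$, that is, $F_\pm\varphi(H)f-\varphi(|\cdot|^s)F_\pm f\perp\mathcal R$.

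The remaining step, and the one I expect to be the main obstacle, is to upgrade this orthogonality to an equality without knowing that $F_\pm$ is surjective --- indeed surjectivity of $F_\pm$ \emph{is} the asymptotic completeness proved only in Sections~\ref{sec8}--\ref{sec9}, so orthogonality to $\mathcal R$ alone does not force vanishing. For the specific choice $\varphi(\lambda)=e^{it\lambda}$ this can nonetheless be done, because multiplication by $\varphi(|\cdot|^s)$ is unitary on $L^2$ while $e^{itH}$ is unitary and commutes with $E^c$, so that $\|\varphi(|\cdot|^s)F_\pm f\|_{L^2}=\|F_\pm f\|_{L^2}=(2\pi)^{n/2}\|E^cf\|_{L^2}=(2\pi)^{n/2}\|E^ce^{itH}f\|_{L^2}=\|F_\pm\varphi(H)f\|_{L^2}$. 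Writing $\varphi(|\cdot|^s)F_\pm f=F_\pm\varphi(H)f+r$ with $F_\pm\varphi(H)f\in\mathcal R$ and (by the previous paragraph) $r\perp\mathcal R$, the Pythagorean identity $\|\varphi(|\cdot|^s)F_\pm f\|^2=\|F_\pm\varphi(H)f\|^2+\|r\|^2$ together with the norm equality forces $r=0$, i.e.\ $F_\pm e^{itH}f=e^{it|\cdot|^s}F_\pm f$ for all $f\in L^2$ and $t\in\mathbb R$. Besides this norm trick, the other point that would need care is the precise justification that the spectral representation behind \eqref{eq88} legitimately refines to the functional-calculus identity displayed above for all bounded Borel $\varphi$.
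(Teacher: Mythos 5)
Your proof of \eqref{eq820} and of the extension/isometry statements coincides with the paper's (which simply combines \eqref{eq88} with Lemma \ref{lm82} and the slicing $d\xi=d\lambda\,dS/(s\lambda^{(s-1)/s})$), but your argument for the intertwining property \eqref{eq821} is genuinely different and, as far as I can check, correct. The paper works at the level of the generator: for $f\in\mathscr{F}^{-1}C_c^\infty(\mathbb{R}^n\setminus\{0\})$ it uses the algebraic identity $(H-\lambda)f=(I+VR_0(\lambda\pm i0))(H_0-\lambda)f$ together with the very definition of $F_\pm$ as a trace on $M_\lambda$ to get $(F_\pm(H-\lambda)f)|_{M_\lambda}=(|\xi|^s-\lambda)\hat f|_{M_\lambda}=0$, hence $F_\pm Hf=|\cdot|^sF_\pm f$; this passes to $f\in H^s$ by closedness of $H$, and \eqref{eq821} follows by differentiating $e^{-it|\cdot|^s}F_\pm e^{itH}f$ in $t$. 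You instead stay at the level of the group: you polarize \eqref{eq820}, insert a bounded Borel $\varphi$ into the spectral representation behind \eqref{eq88}, deduce that $F_\pm e^{itH}f-e^{it|\cdot|^s}F_\pm f$ is orthogonal to $\mathrm{Ran}(F_\pm)$, and then kill it by the Pythagoras/norm-equality trick, which legitimately circumvents the (not yet known) surjectivity of $F_\pm$. The trade-off is the point you yourself flag: your route needs the \emph{disintegrated} form of \eqref{eq88}, i.e.\ that $d\langle E_\lambda f,h\rangle$ restricted to $\mathbb{R}\setminus\tilde\Lambda$ has density $(2\pi)^{-n}(s\lambda^{(s-1)/s})^{-1}\int_{M_\lambda}\hat f_{\lambda\pm i0}\overline{\hat h_{\lambda\pm i0}}\,dS$, not merely the integrated quadratic identity \eqref{eq88}. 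This is indeed what the cited spectral calculus in \cite[p.~255--256]{Hor2} provides (it comes from Stone's formula applied to arbitrary intervals, and the sesquilinear version follows by polarization of the quadratic one), so the step is recoverable, but it should be stated explicitly rather than read off from \eqref{eq88} as displayed. The paper's route avoids this entirely and in addition yields the generator identity $F_\pm Hf=|\cdot|^sF_\pm f$ on $H^s$, which is slightly more information than \eqref{eq821} alone; your route has the advantage of never invoking the delicate trace-level manipulations beyond what Lemma \ref{lm82} already packaged.
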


\begin{proof}
\eqref{eq820} follows from \eqref{eq88} and Lemma \ref{lm82}. For \eqref{eq821}, we first consider $f\in\mathscr{F}^{-1}C_c^\infty(\mathbb{R}^n\setminus\{0\})$. For any $\lambda\in\mathbb{R}\setminus\tilde{\Lambda}$, we have $(H_0-\lambda)f\in\mathscr{S}$ and thus $R_0(\lambda\pm i0)(H_0-\lambda)f=f$ by weak* continuity of $R_0$ and dominated convergence. It follows that
\begin{equation*}
(H-\lambda)f=(I+VR_0(\lambda\pm i0))(H_0-\lambda)f,
\end{equation*}
which implies
\begin{equation*}
(F_\pm(H-\lambda)f)(\xi)=(|\xi|^s-\lambda)\hat{f}(\xi)=0,\quad\xi\in M_\lambda,
\end{equation*}
and this just means
\begin{equation}\label{eq824}
F_\pm Hf=|\cdot|^sF_\pm f.
\end{equation}
Since $\mathscr{F}^{-1}C_c^\infty(\mathbb{R}^n\setminus\{0\})$ is dense in $H^s$ and $H$ is closed, \eqref{eq824} holds when $f\in H^s$, which in the meanwhile shows that $F_\pm f\in\mathrm{Dom}(|\cdot|^s)$. Now we can differentiate $e^{-it|\cdot|^s}F_\pm e^{itH}f$ in $t$ when $f\in H^s$, and show that the derivative is $0$, therefore \eqref{eq821} holds for $H^s$ is dense in $L^2$.
\end{proof}

\begin{theorem}\label{thm74}
$F_\pm W_\pm=\mathscr{F}$ holds in $L^2$.
\end{theorem}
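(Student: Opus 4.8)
The plan is to establish the stronger stationary identity $W_\pm=(2\pi)^{-n}F_\pm^*\mathscr F$ on a dense set, from which Theorem~\ref{thm74} follows by soft arguments. First note that both $F_\pm W_\pm$ and $\mathscr F$ are bounded maps $L^2\to L^2$ with the same normalisation $\|F_\pm W_\pm u\|_{L^2}=(2\pi)^{n/2}\|u\|_{L^2}=\|\mathscr F u\|_{L^2}$: by general principles $\mathrm{Ran}(W_\pm)\subset\mathscr H_\mathrm{ac}$, and $\mathscr H_\mathrm{ac}\subset E^cL^2$ because the absolutely continuous part of the spectral measure of $H$ cannot charge the countable set $\tilde\Lambda$ (Proposition~\ref{pro72}(iii)); hence $\|F_\pm W_\pm u\|_{L^2}^2=(2\pi)^n\|E^cW_\pm u\|_{L^2}^2=(2\pi)^n\|W_\pm u\|_{L^2}^2$ by Theorem~\ref{thm73} and Theorem~\ref{thm51}. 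Granting $W_\pm=(2\pi)^{-n}F_\pm^*\mathscr F$ on a dense set, these two bounded operators agree on all of $L^2$; hence $F_\pm W_\pm=(2\pi)^{-n}F_\pm F_\pm^*\mathscr F=P\mathscr F$, where $P$ is the orthogonal projection onto $\overline{\mathrm{Ran}\,F_\pm}$ (as $(2\pi)^{-n/2}F_\pm$ is a partial isometry with initial space $E^cL^2$, by Theorem~\ref{thm73}). Since $P\mathscr F$ is then an isometry up to the factor $(2\pi)^{n/2}$ and $\mathscr F$ is surjective, $P=I$, so $F_\pm W_\pm=\mathscr F$.

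It therefore suffices to fix $u\in\mathscr F^{-1}C_c^\infty(\mathbb R^n\setminus\{0\})$; shrinking $\mathrm{supp}\,\hat u$ we may assume it is disjoint from the locally finite (Proposition~\ref{pro72}(iii)) union $\bigcup_{\lambda\in\tilde\Lambda}M_\lambda$, and such $u$ are dense in $L^2$. As in the proof of Theorem~\ref{thm51}, Cook's method gives $W_\pm u=u+i\int_0^{\pm\infty}e^{isH}Ve^{-isH_0}u\,ds$ with convergence in $L^2$. Writing $e^{-isH_0}u=\int_{\mathbb R}e^{-is\lambda}u_\lambda\,d\lambda$ with $u_\lambda:=\mathscr F^{-1}(\hat u\,\delta(|\cdot|^s-\lambda))$, then applying Abel summation together with Duhamel's formula, evaluating the resulting exponential integral in $s$, and letting the regularisation tend to zero, I obtain the representation $W_\pm u=u-\int_{\mathbb R}R(\lambda\mp i0)Vu_\lambda\,d\lambda$. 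Here $u_\lambda\in B_s^*$ by the trace Lemma~\ref{lm23} (the $L^2(M_\lambda)$ trace $\hat u|_{M_\lambda}$ exists and $\langle\cdot\rangle^s$ is bounded on $M_\lambda$), hence $Vu_\lambda\in B$, and $R(\lambda\mp i0)\colon B\to B_s^*$ is bounded uniformly for $\lambda$ in the compact set $I_0:=\{|\xi|^s:\xi\in\mathrm{supp}\,\hat u\}\subset\mathbb R\setminus\tilde\Lambda$ (Lemma~\ref{lm81} and the discussion after it); together with continuity of $\lambda\mapsto u_\lambda\in B_s^*$, these facts legitimise the interchanges, while the passage from the $L^2$ limit to the $B^*$-valued identity uses $B\hookrightarrow L^2\hookrightarrow B^*$ and the weak$^*$ continuity of $z\mapsto R(z)g$ (Lemma~\ref{lm81}).

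Now pair the representation with an arbitrary $f\in B$. From $R(\lambda\mp i0)^*=R(\lambda\pm i0)$ (taking $\mathrm{Im}\,z\to0$ in $R(z)^*=R(\bar z)$), the reality of $V$, and $R(z)=R_0(z)(I+VR_0(z))^{-1}$, one gets $\langle R(\lambda\mp i0)Vu_\lambda,f\rangle=\langle u_\lambda,VR_0(\lambda\pm i0)f_{\lambda\pm i0}\rangle=\langle u_\lambda,f\rangle-\langle u_\lambda,f_{\lambda\pm i0}\rangle$, where $f_{\lambda\pm i0}:=(I+VR_0(\lambda\pm i0))^{-1}f\in B$ (Lemma~\ref{lm81}) and we used $VR_0(\lambda\pm i0)f_{\lambda\pm i0}=f-f_{\lambda\pm i0}$. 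Since $\widehat{u_\lambda}=\hat u\,\delta(|\cdot|^s-\lambda)$, the trace Lemma~\ref{lm23}, its dual extension, and the coarea identity $d\xi=d\lambda\,dS/(s\lambda^{(s-1)/s})$ give $\langle u_\lambda,h\rangle=(2\pi)^{-n}\int_{M_\lambda}\hat u\,\overline{\hat h}\,dS/(s\lambda^{(s-1)/s})$ for $h\in B$, and $\int_{\mathbb R}u_\lambda\,d\lambda=u$; thus the $\langle u,f\rangle$ and $\int\langle u_\lambda,f\rangle\,d\lambda$ contributions cancel, leaving
\[
\langle W_\pm u,f\rangle=(2\pi)^{-n}\int_{\mathbb R}\frac{d\lambda}{s\lambda^{(s-1)/s}}\int_{M_\lambda}\hat u\,\overline{\widehat{f_{\lambda\pm i0}}}\,dS=(2\pi)^{-n}\int_{\mathbb R^n}\hat u\,\overline{F_\pm f}\,d\xi=(2\pi)^{-n}\langle F_\pm^*\mathscr F u,f\rangle,
\]
the second equality being the defining property $F_\pm f|_{M_\lambda}=\widehat{f_{\lambda\pm i0}}|_{M_\lambda}$ of Lemma~\ref{lm82} together with the coarea formula (this is the calculation underlying \eqref{eq88}). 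As $f$ ranges over the dense set $B$ and both sides are $L^2$-continuous in $f$ ($F_\pm$ being bounded on $L^2$), we conclude $W_\pm u=(2\pi)^{-n}F_\pm^*\mathscr F u$, which by the first paragraph finishes the proof.

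The main obstacle is making the second paragraph rigorous: converting the $L^2$ limit that defines $W_\pm u$ into the $B^*$-valued resolvent integral requires not only uniform-in-$\epsilon$, uniform-in-$\lambda$ bounds on $R(\lambda\mp i\epsilon)Vu_\lambda$ over the compact set $I_0\subset\mathbb R\setminus\tilde\Lambda$, but above all the measurable (indeed continuous) dependence on $\lambda$ of the single-energy traces $\widehat{u_\lambda}|_{M_\lambda}$ and $\widehat{f_{\lambda\pm i0}}|_{M_\lambda}$, whereas Lemma~\ref{lm23} supplies these only almost everywhere on each fixed $M_\lambda$. This is precisely the parametrised-trace difficulty flagged in the paper (``tricky when considering the trace property Lemma~\ref{lm23} in a parameterized way''), to be handled with the same tools as in Lemma~\ref{lm82}: locally finite partitions of unity in the energy variable, the density of $\mathscr F^{-1}C_c^\infty$ in $B$, and the local boundedness of the constant in \eqref{eq324}. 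A secondary matter is the consistent bookkeeping of the $\pm i0$ boundary values, so that $F_\pm$ is paired with the boundary value of $R$ that actually occurs in the representation of $W_\pm$.
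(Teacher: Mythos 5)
Your route is genuinely different from the paper's. The paper computes $F_\pm W_\pm f$ directly: it applies $F_\pm$ to the Duhamel formula \eqref{eq825}, uses the intertwining property \eqref{eq821} to turn $e^{itH}$ into the multiplier $e^{it|\cdot|^s}$, pairs with a test function $\phi$ supported off $\bigcup_{\lambda\in\tilde{\Lambda}}M_\lambda$, and justifies the Fubini interchange and the Abel limit via \eqref{eq828}; only the free resolvent applied to the \emph{fixed} datum $f$ ever appears, the $\lambda$-dependence being carried entirely by $(I+VR_0(\lambda-i0))^{-1}$, whose continuity is Lemma \ref{lm81}. You instead decompose $e^{-itH_0}u$ spectrally first, arrive at the stationary representation $W_\pm u=u-\int R(\lambda\mp i0)Vu_\lambda\,d\lambda$ involving the \emph{full} resolvent applied to the $\lambda$-dependent datum $Vu_\lambda$, and recover $F_\pm$ only after dualizing. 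Your algebra and sign bookkeeping are correct ($W_\pm\leftrightarrow R(\lambda\mp i0)\leftrightarrow f_{\lambda\pm i0}$), and the closing functional-analytic step is sound: $W_\pm=(2\pi)^{-n}F_\pm^*\mathscr{F}$ gives $F_\pm W_\pm=P\mathscr{F}$ with $P$ the final projection of the partial isometry $(2\pi)^{-n/2}F_\pm$ (Theorem \ref{thm73} plus $\mathrm{Ran}\,W_\pm\subset\mathscr{H}_\mathrm{ac}\subset E^cL^2$), and $\|P\mathscr{F}u\|=\|\mathscr{F}u\|$ forces $P=I$. The explicit adjoint identity $W_\pm=(2\pi)^{-n}F_\pm^*\mathscr{F}$ is a clean reformulation that the paper leaves implicit.

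There is, however, a genuine gap: the representation $W_\pm u=u-\int_{\mathbb{R}}R(\lambda\mp i0)Vu_\lambda\,d\lambda$ is where essentially all the analytic content of the theorem lives, and you assert it in one sentence. What must be supplied is (a) the interchange of the $t$- and $\lambda$-integrals after Abel regularization; (b) the passage $\epsilon\downarrow0$ under the $\lambda$-integral, which needs a bound on $\|R(\lambda\mp i\epsilon)Vu_\lambda\|_{B^*}$ uniform over $\lambda\in I_0$ and $0<\epsilon<1$ together with continuity at $\epsilon=0$ — Lemma \ref{lm81} gives only strong continuity for a \emph{fixed} element of $B$, so you additionally need norm continuity of $\lambda\mapsto u_\lambda\in B_s^*$ (a parametrized version of the dual of Lemma \ref{lm23}, not stated in the paper) combined with Lemma \ref{lm76}, \eqref{eq324} and Banach--Steinhaus; and (c) the pointwise-in-$\lambda$ identity $\langle u_\lambda,f_{\lambda\pm i0}\rangle=(2\pi)^{-n}(s\lambda^{(s-1)/s})^{-1}\int_{M_\lambda}\hat{u}\,\overline{\hat{f}_{\lambda\pm i0}}\,dS$ and its measurable reassembly into $\int\hat{u}\,\overline{F_\pm f}\,d\xi$, which is exactly the content of Lemma \ref{lm82} and the computation behind \eqref{eq88}. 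You name these obstacles yourself, but naming them relocates the difficulty rather than resolving it: the deferred step is of the same depth as the theorem, and the paper's proof consists precisely of carrying out the analogous interchanges. A practical repair that stays on your route: pair the Abel-regularized Duhamel formula with $f\in B$ \emph{before} performing the $t$-integration, so that the Fubini step becomes a scalar one dominated by $e^{\epsilon t}\|Vu_\lambda\|_{L^2}\|f\|_{L^2}$ over the compact set $I_0$, leaving only the $\epsilon\downarrow0$ limit in (b) to be controlled; this mirrors how the paper handles its own interchanges and removes the need for the representation to hold as a standalone $B^*$-valued identity.
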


\begin{proof}
We just show for the minus sign. If $f\in\mathscr{F}^{-1}C_c^\infty(\mathbb{R}^n\setminus\{0\})$, since $e^{-itH_0}f\subset\mathscr{F}^{-1}C_c^\infty(\mathbb{R}^n\setminus\{0\})$ is infinitely differentiable in $L^2$, and we can differentiate $e^{itH}e^{-itH_0}f$ to get
\begin{equation}\label{eq825}
W_-f=f-\int_{-\infty}^0e^{itH}iVe^{-itH_0}fdt.
\end{equation}
Since $F_-$ is $L^2$ continuous, we can use the intertwining property \eqref{eq821} and the fact that the integral in \eqref{eq825} is absolutely convergent in $L^2$ by \eqref{eq53}, to deduce
\begin{equation}\label{eq826}
\begin{split}
F_-W_-f=&F_-f-\int_{-\infty}^0e^{it|\cdot|^s}F_-(iVe^{-itH_0}f)dt\\
=&F_-f-\lim_{\epsilon\downarrow0}\int_{-\infty}^0e^{\epsilon t+it|\cdot|^s}F_-(iVe^{-itH_0}f)dt.
\end{split}
\end{equation}
Now we take any $\phi\in C_c^\infty(\mathbb{R}^n\setminus\cup_{\lambda\in\tilde{\Lambda}}M_\lambda)$, and assume that $\mathrm{supp}\,\phi\subset I\times\mathbb{S}^{n-1}$ where compact $I\subset\mathbb{R}_+\setminus\tilde{\Lambda}$. Note that $\lambda\geq c>0$ if $\lambda\in I$. We have
\begin{equation*}
\begin{split}
\left\langle\int_{-\infty}^0e^{\epsilon t+it|\cdot|^s}F_-(iVe^{-itH_0}f)dt,\phi\right\rangle=&\int_{-\infty}^0\left\langle e^{\epsilon t+it|\cdot|^s}F_-(iVe^{-itH_0}f),\phi\right\rangle dt\\
=&\int_{-\infty}^0dt\int_{\lambda^\frac1s\in I}\frac{d\lambda}{s\lambda^\frac{s-1}{s}}\int_{M_\lambda}F_-(e^{\epsilon t+it\lambda}iVe^{-itH_0}f)\bar{\phi}dS.
\end{split}
\end{equation*}
Since $\lambda\in\mathbb{R}\setminus\tilde{\Lambda}$, by Lemma \ref{lm82} and Lemma \ref{lm23} we have
\begin{equation}\label{eq828}
\begin{split}
\|F_-(e^{\epsilon t+it\lambda}iVe^{-itH_0}f)\|_{L^2(M_\lambda)}\leq&C(\lambda)e^{\epsilon t}\|(I+VR_0(\lambda-i0))^{-1}Ve^{-itH_0}f\|_B\\
\leq&C'(\lambda)e^{\epsilon t}\|V\|_{B_s^*\rightarrow B}\|J_sf\|_{L^2}\\
\leq&C''(\lambda)e^{\epsilon t},
\end{split}
\end{equation}
where $C''(\lambda)$ is locally bounded. Indeed, the fact that $C(\lambda)$ is locally bounded rather comes from the proof of Lemma \ref{lm23} in our specific case (see \cite[remark of Theorem 2.1]{AH}); and we have used the local boundedness of $\|(I+VR_0(\lambda-i0))^{-1}\|_{B\rightarrow B}$ in $\lambda$, which is due to Lemma \ref{lm81} and the Banach-Steinhaus theorem. Now we can use the Fubini's theorem to exchange integrations and obtain
\begin{equation*}
\begin{split}
&\left\langle\int_{-\infty}^0e^{\epsilon t+it|\cdot|^s}F_-(iVe^{-itH_0}f)dt,\phi\right\rangle\\
=&\int_{\lambda^\frac1s\in I}\frac{d\lambda}{s\lambda^\frac{s-1}{s}}\int_{-\infty}^0dt\int_{M_\lambda}F_-(e^{\epsilon t+it\lambda}iVe^{-itH_0}f)\bar{\phi}dS\\
=&\int_{\lambda^\frac1s\in I}\frac{d\lambda}{s\lambda^\frac{s-1}{s}}\int_{M_\lambda}\left(\int_{-\infty}^0F_-(e^{\epsilon t+it\lambda}iVe^{-itH_0}f)\big|_{M_\lambda}dt\right)\bar{\phi}dS,
\end{split}
\end{equation*}
where the last line also comes from \eqref{eq828}, and $F_-(e^{\epsilon t+it\lambda}iVe^{-itH_0}f)\big|_{M_\lambda}$ denotes the $L^2$ trace on $M_\lambda$. For fixed $\lambda$, the second estimate of \eqref{eq828} implies that $\int_{-\infty}^0e^{\epsilon t+it\lambda}iVe^{-itH}fdt$ is absolutely convergent in $B$; further, the map $u\mapsto F_-u|_{M_\lambda}$ is continuous from $B$ to $L^2(M_\lambda)$. We thus conclude that
\begin{equation*}
\int_{-\infty}^0F_-(e^{\epsilon t+it\lambda}iVe^{-itH_0}f)\arrowvert_{M_\lambda}dt=F_-\left(\mbox{$\int_{-\infty}^0e^{\epsilon t+it\lambda}iVe^{-itH_0}fdt$}\right)\Bigg|_{M_\lambda}.
\end{equation*}
Therefore,
\begin{equation*}
\left\langle\int_{-\infty}^0e^{\epsilon t+it|\cdot|^s}F_-(iVe^{-itH_0}f)dt,\phi\right\rangle=\int_{\lambda^\frac1s\in I}\frac{d\lambda}{s\lambda^\frac{s-1}{s}}\int_{M_\lambda}F_-\left(\mbox{$\int_{-\infty}^0e^{\epsilon t+it\lambda}iVe^{-itH_0}fdt$}\right)\bar{\phi}dS.
\end{equation*}
Also notice that $V$ is continuous from $H^s$ to $L^2$, and that $\int_{-\infty}^0e^{\epsilon t+it\lambda}J_s(e^{-itH_0}f)dt$ is absolutely convergent in $L^2$, we have
\begin{equation*}
\begin{split}
\left\langle\int_{-\infty}^0e^{\epsilon t+it|\cdot|^s}F_-(iVe^{-itH_0}f)dt,\phi\right\rangle=&\int_{\lambda^\frac1s\in I}\frac{d\lambda}{s\lambda^\frac{s-1}{s}}\int_{M_\lambda}F_-V\left(\mbox{$\int_{-\infty}^0e^{\epsilon t+it\lambda}ie^{-itH_0}fdt$}\right)\bar{\phi}dS\\
=&\int_{\lambda^\frac1s\in I}\frac{d\lambda}{s\lambda^\frac{s-1}{s}}\int_{M_\lambda}(F_-VR_0(\lambda-i\epsilon)f)\bar{\phi}dS.
\end{split}
\end{equation*}
Combining this and \eqref{eq826},
\begin{equation*}
\begin{split}
&\left|\langle F_-W_-f-\hat{f},\phi\rangle\right|\\
\leq&\lim_{\epsilon\downarrow0}\int_{\lambda^\frac1s\in I}\frac{1}{s\lambda^\frac{s-1}{s}}\left|\int_{M_\lambda}\mathscr{F}\left((I+VR_0(\lambda-i0))^{-1}(I+VR_0(\lambda-i\epsilon))f-f\right)\bar{\phi}dS\right|d\lambda\\
\leq&\lim_{\epsilon\downarrow0}C_{I,\phi}\int_{\lambda^\frac1s\in I}\left\|(I+VR_0(\lambda-i0))^{-1}(I+VR_0(\lambda-i\epsilon))f-f\right\|_Bd\lambda\\
=&0,
\end{split}
\end{equation*}
for $(I+VR_0(\lambda-i0))^{-1}(I+VR_0(\lambda-i\epsilon))f$ is continuous both in $\epsilon$ and in $\lambda$ by Lemma \ref{lm76} and Lemma \ref{lm81}. Finally, since $\mathscr{F}^{-1}C_c^\infty(\mathbb{R}^n\setminus\{0\})$ and $C_c^\infty(\mathbb{R}^n\setminus\cup_{\lambda\in\tilde{\Lambda}}M_\lambda)$ are both dense in $L^2$, the proof is complete.
\end{proof}

We are ready to prove Theorem \ref{thm14} now.

\begin{proof}[Proof of Theorem \ref{thm14}]
Since $H_0$ is absolutely continuous, a general fact (see e.g. \cite[p. 531]{Kato}) is that $\mathrm{Ran}W_\pm\subset\mathscr{H}_\mathrm{ac}$. Now Theorem \ref{thm73} and Theorem \ref{thm74} indicate that $E^cL^2=\mathrm{Ran}W_\pm$, therefore $E^cL^2\subset\mathscr{H}_\mathrm{ac}$ also holds. On the other hand, 
by (iii) of Proposition \ref{pro72}, we know that $\tilde{\Lambda}$ is countable and thus
\begin{equation}\label{equ622}
E^dL^2\subset\mathscr{H}_\mathrm{pp},
\end{equation}
the pure point subspace of $L^2$ with respect to $H$, which implies $E^cL^2\supset\mathscr{H}_\mathrm{ac}\oplus\mathscr{H}_\mathrm{sing}$. We then must have \eqref{equ18}.
\end{proof}

By the proof of Theorem \ref{thm14}, it follows that $E^dL^2=\mathscr{H}_\mathrm{pp}$, which combining with Remark \ref{rk65} implies
\begin{equation}\label{e81}
\Lambda=\sigma_\mathrm{pp}\setminus\{0\}.
\end{equation}


\section{Decay and Regularity of Eigenfunctions}\label{sec9}

In this section, we will prove through Fredholm theory that actually every eigenfunction $u$ of $H$ associated with $\lambda\in\Lambda=\sigma_\mathrm{pp}\setminus\{0\}$ has the form suggested by Remark \ref{rk62}, which thus proves the other part of Theorem \ref{thm13}. As indicated in Section \ref{sec7}, we shall study the equation \eqref{equ61}, and the main result is Proposition \ref{prop74}. For $\lambda\in\mathbb{R}\setminus\{0\}$, recall that $M_\lambda=\{\xi\in\mathbb{R}^n;~|\xi|^s=\lambda\}$.

\begin{lemma}\label{lm73}
Let $\lambda\in\mathbb{R}\setminus\{0\}$, $V$ be a short range potential, and $u\in B_s^*$ satisfies $((-\Delta)^\frac s2-\lambda)u=-Vu$ (in the sense of \eqref{equ62}). Then
	\begin{equation}
	\langle u,f+VR_0(\lambda+i0)f\rangle=0,\quad f\in B,\,\,\hat{f}=0\,\,\text{on}\,\,M_\lambda.
	\end{equation}
\end{lemma}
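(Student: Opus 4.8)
The plan is to reduce the claim to the jump formula \eqref{eq73} together with the structural information supplied by Lemma~\ref{lm72}. Put $g_0:=-Vu$, which belongs to $B$ because $V$ is a short range potential, so the hypothesis reads $((-\Delta)^{s/2}-\lambda)u=g_0$ in the sense of \eqref{equ62}. Lemma~\ref{lm72} then applies and shows that $\mathscr{F}\big(u-R_0(\lambda\pm i0)g_0\big)$ is an $L^2$ density supported on $M_\lambda$; equivalently, $u+R_0(\lambda+i0)Vu$ has this property.

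First I would handle the test function. Fix $f\in B$ with $\hat f=0$ on $M_\lambda$, where the trace is understood in $L^2(M_\lambda)$ via Lemma~\ref{lm23}. Since $\lambda\neq0$ is not a critical value of $|\xi|^s$, \eqref{eq73} gives $R_0(\lambda+i0)f-R_0(\lambda-i0)f=2\pi i\,\mathscr{F}^{-1}\big(\delta(|\cdot|^s-\lambda)\hat f\big)$, and the right side vanishes because $\delta(|\cdot|^s-\lambda)\hat f$ is the zero $L^2$ density on $M_\lambda$; hence $w:=R_0(\lambda+i0)f=R_0(\lambda-i0)f\in B_s^*$. Next, I would pair the surface $L^2$ density $\mathscr{F}\big(u-R_0(\lambda+i0)g_0\big)$ against $f$ by Parseval, using the $B^*$--$B$ duality already invoked right after \eqref{eq73}: since $\hat f|_{M_\lambda}=0$, this pairing equals the surface integral of that density against $\hat f|_{M_\lambda}$ and so vanishes, giving $\langle u,f\rangle=\langle R_0(\lambda+i0)g_0,f\rangle$. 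Finally, letting $\mathrm{Im}\,z\downarrow0$ in the self-adjointness identity $\langle R_0(z)g_0,f\rangle=\langle g_0,R_0(\bar z)f\rangle$ and using the weak* continuity of the boundary values from Theorem~\ref{thm34}, I obtain $\langle R_0(\lambda+i0)g_0,f\rangle=\langle g_0,R_0(\lambda-i0)f\rangle=\langle g_0,w\rangle$.

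It then remains to assemble, using that $V$ is real valued and $Vu=-g_0$:
\begin{equation*}
\langle u,f+VR_0(\lambda+i0)f\rangle=\langle u,f\rangle+\langle u,Vw\rangle=\langle u,f\rangle+\langle Vu,w\rangle=\langle u,f\rangle-\langle g_0,w\rangle=0,
\end{equation*}
since the previous paragraph identifies both $\langle u,f\rangle$ and $\langle g_0,w\rangle$ with $\langle R_0(\lambda+i0)g_0,f\rangle$.

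The step I expect to require the most care is the Parseval identity in the middle paragraph: one must make precise that an element of $B^*$ whose Fourier transform is an $L^2$ density on the hypersurface $M_\lambda$ pairs with $f\in B$ exactly as the surface integral of that density against the $L^2$ trace $\hat f|_{M_\lambda}$, so that vanishing of the trace annihilates the pairing. This rests on Lemma~\ref{lm23} (the continuous extension $\mathscr{F}:B\to L^2(M_\lambda)$) and is the same duality underlying the comments after \eqref{eq73}; once it and the weak* continuity of $R_0(\lambda\pm i0):B\to B_s^*$ (Theorem~\ref{thm34}) are in hand, the remaining manipulations are routine.
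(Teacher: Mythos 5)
Your argument is correct and follows essentially the same route as the paper: both rest on the Lemma \ref{lm72} decomposition of $u$ into a boundary-value-of-resolvent term plus a surface $L^2$ density on $M_\lambda$, the vanishing of the pairing of that density against $f$ via Lemma \ref{lm23} and density of $\mathscr{S}$ in $B$, and the adjoint identity for $R_0(\lambda\pm i0)$ obtained by weak* limits, together with the reality of $V$. The only cosmetic difference is that the paper starts from the $-i0$ decomposition $u=-R_0(\lambda-i0)Vu+u_-$ so that the adjoint step lands directly on $\langle u,VR_0(\lambda+i0)f\rangle$, whereas you start from the $+i0$ decomposition and compensate with the extra (correct) observation from \eqref{eq73} that $R_0(\lambda+i0)f=R_0(\lambda-i0)f$ when $\hat f=0$ on $M_\lambda$.
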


\begin{proof}
By Lemma \ref{lm72}, we can write
\begin{equation}\label{eq76}
u=-R_0(\lambda-i0)Vu+u_-,\quad u_-\in\mathscr{F}^{-1}L^2(M_\lambda,dS).
\end{equation}
In particular, this implies that $u_-\in B^*$. Then
\begin{equation}\label{eq77}
\langle u,f\rangle=-\langle R_0(\lambda-i0)Vu,f\rangle+\langle u_-,f\rangle.
\end{equation}
For the first term on the right hand side above, by the weak* continuity of $\mathbb{C}^+\ni z\mapsto R_0(z)Vu\in B^*$ and the fact that $f\in B$, we have
\begin{equation*}
\begin{split}
-\langle R_0(\lambda-i0)Vu,f\rangle=&-\lim_{\epsilon\downarrow0}\int(R_0(\lambda-i\epsilon)Vu)\bar{f}dx\\
=&-\langle u,VR_0(\lambda+i0)f\rangle.
\end{split}
\end{equation*}
	The second term on the right hand side of \eqref{eq77} is $0$, since $\hat{f}$ has trace $0$ on $M_\lambda$, and we can use Lemma \ref{lm23} and \eqref{eq76} by taking $\mathscr{S}\ni f_j\rightarrow f$ in $B$.
\end{proof}

The main result of the section is the following.

\begin{proposition}\label{prop74}
Suppose $\lambda\in\mathbb{R}\setminus\{0\}$, $V$ is a short range potential, and $A_\lambda\neq\{0\}$ where
\begin{equation}
A_\lambda=\left\{f\in B;~\mathscr{F}\left((VR_0(\lambda+i0))^kf\right)=0\text{\,\,on\,\,}M_\lambda,\,k=0,1,2,\cdots\right\}.
\end{equation}
If $u\in B_s^*$ satisfies $((-\Delta)^\frac s2-\lambda)u=-Vu$ (in the sense of \eqref{equ62}), then
\begin{equation}\label{eq79}
u=-R_0(\lambda+i0)Vu=-R_0(\lambda-i0)Vu.
\end{equation}
\end{proposition}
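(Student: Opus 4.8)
The plan is to reduce \eqref{eq79} to the single statement that the ``radiating part'' of $u$ on the sphere $M_\lambda$ vanishes, and then to extract that from a Fredholm argument in $B$ in which the hypothesis $A_\lambda\neq\{0\}$ is what pins down the relevant finite‑dimensional data. Put $f=-Vu\in B$. By Lemma \ref{lm72} both $u-R_0(\lambda+i0)f$ and $u-R_0(\lambda-i0)f$ lie in $\mathscr{F}^{-1}L^2(M_\lambda)$, and since $V$ is real‑valued, $\mathrm{Im}\,\langle u,f\rangle=\mathrm{Im}\,\langle u,-Vu\rangle=0$; hence by the last part of Lemma \ref{lm72} it is enough to establish any one of the two identities in \eqref{eq79}. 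In view of \eqref{eq73} this is in turn equivalent to $\hat f=0$ on $M_\lambda$ as an $L^2(M_\lambda)$‑density, i.e. to $f\in B_\lambda:=\{g\in B:\hat g|_{M_\lambda}=0\}$ (the trace being the one supplied by Lemma \ref{lm23}): once $f\in B_\lambda$ we have $R_0(\lambda+i0)f=R_0(\lambda-i0)f$, and Lemma \ref{lm72} then forces $u=R_0(\lambda\pm i0)f=-R_0(\lambda\pm i0)Vu$. So the whole proof comes down to proving $f\in B_\lambda$.

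For this I would work with the operators $T=VR_0(\lambda+i0)$ and $T^-=VR_0(\lambda-i0)$, which are compact on $B$ by Theorem \ref{thm34} together with the compactness of $V:B_s^*\to B$. Lemma \ref{lm73}, and its counterpart with $R_0(\lambda-i0)$ in place of $R_0(\lambda+i0)$ (same proof, using Remark \ref{rk74}), say that $u$ annihilates $(I+T)B_\lambda=(I+T^-)B_\lambda$ in the $B^*$–$B$ duality; unwinding this via Lemma \ref{lm72} yields the relations $(I+T)f=-V\bigl(u-R_0(\lambda+i0)f\bigr)$, $(I+T^-)f=-V\bigl(u-R_0(\lambda-i0)f\bigr)$, and exhibits the trace data $\hat f|_{M_\lambda}$ and the iterated data $\widehat{T^kf}|_{M_\lambda}$ as solutions of a closed finite‑dimensional system. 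This is exactly where $A_\lambda$ enters: it records those $g\in B$ whose whole orbit $(T^kg)_{k\ge0}$ has vanishing trace on $M_\lambda$; it is a closed $T$‑invariant subspace contained in $B_\lambda$ (note that on $B_\lambda$ one has $T=T^-=VR_0(\lambda)$, so once a vector lands in $B_\lambda$ so do all its iterates). Using the weighted free‑resolvent bounds of Section \ref{sec6} — Theorem \ref{thm64} with $r(D)=I$, which makes $R_0(\lambda)$, and hence $T$, gain a prescribed amount of decay when acting on $B_\lambda$ — one shows first that $A_\lambda$ is finite dimensional and then, by a Fredholm/dimension count comparing $\dim A_\lambda$, $\dim\ker(I+T)$ and the finite‑dimensional space of trace data carried by the solutions, that $A_\lambda$ coincides with $\ker(I+T)$; this forces $f\in A_\lambda\subset B_\lambda$.

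I expect this dimension‑count to be the main obstacle: it is the step where the hypothesis $A_\lambda\neq\{0\}$ is genuinely used, and it must be executed while respecting the weak interpretation \eqref{equ62} of the equation — so that, as in Remark \ref{rk62}, all the candidate identities are valid in the weak‑$*$ topology of $B^*$ — and while combining the exponential Bessel‑kernel tail estimates of Section \ref{sec3} with the Littlewood–Paley estimates underlying Theorem \ref{thm64} to produce the decay of the iterates $T^kg$. Two smaller technical points to dispatch along the way are that $\ker(I+T)$ (and the cokernel data) actually lie in $B_s^*$, by a mild elliptic‑regularity bootstrap along the lines of Remark \ref{rk62}, and that the various traces on $M_\lambda$ used above are all the $L^2(M_\lambda)$ traces of Lemma \ref{lm23}, so that the Parseval identities involving $\delta(p-\lambda)$ are legitimate. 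Granting $f\in B_\lambda$, the reduction of the first paragraph then finishes the proof.
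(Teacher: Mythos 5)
Your opening reduction is where the argument breaks. From Lemma \ref{lm72} you only know that $u-R_0(\lambda\pm i0)f\in\mathscr{F}^{-1}L^2(M_\lambda,dS)$; proving $\hat f=0$ on $M_\lambda$ gives $R_0(\lambda+i0)f=R_0(\lambda-i0)f$, but it does not force the $\mathscr{F}^{-1}L^2(M_\lambda)$ component of $u$ to vanish, and the last statement of Lemma \ref{lm72} cannot be used the way you use it: it is an equivalence between the two assertions ``$u=R_0(\lambda+i0)f$'' and ``$u=R_0(\lambda-i0)f$'', not a criterion guaranteeing that either one holds. Concretely, with $V\equiv0$ one has $f=-Vu=0$, so trivially $\hat f=0$ on $M_\lambda$ and $\mathrm{Im}\langle u,f\rangle=0$, yet $u=\mathscr{F}^{-1}(h\,dS)$ with $0\neq h\in L^2(M_\lambda)$ lies in $B_s^*$, solves the equation, and is not $R_0(\lambda\pm i0)f=0$. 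So ``$f\in B_\lambda$'' is strictly weaker than \eqref{eq79}: the entire content of the proposition is to rule out a nonzero distribution carried by $M_\lambda$ sitting inside $u$, and your reduction discards exactly that issue in its first step. The second paragraph does not repair this, and its key assertions are not correct: $A_\lambda$ is in general infinite dimensional, and $A_\lambda=\ker(I+VR_0(\lambda+i0))$ fails --- only the inclusion $\ker(I+VR_0(\lambda+i0))\subset A_\lambda$ holds; the weighted estimates of Theorem \ref{thm64} play no role here.

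The mechanism that actually yields \eqref{eq79} is dual to the one you set up, and it targets $u$ rather than $f$. One applies Fredholm theory to $T=I+VR_0(\lambda+i0)$ acting on the closed $T$-invariant subspace $A_\lambda$ itself, so that $\dim{}^\perp\mathrm{Ran}(T)=\dim\ker(T|_{A_\lambda})=r$ inside the dual space $A_\lambda^*$. The outgoing solutions $u_j=R_0(\lambda+i0)f_j$, with $f_1,\dots,f_r$ a basis of $\ker(T|_{A_\lambda})$, are $r$ independent elements of ${}^\perp\mathrm{Ran}(T)$ by Lemma \ref{lm73}, hence span it; and Lemma \ref{lm73} applied to $u$ places $u$ in ${}^\perp\mathrm{Ran}(T)$ as well, so $u$ is identified with a linear combination of the $u_j$, each of which is of the form $R_0(\lambda+i0)(\cdot)$ and therefore carries no $\mathscr{F}^{-1}L^2(M_\lambda)$ part. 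This is also where the hypothesis $A_\lambda\neq\{0\}$ genuinely enters --- it is what makes the pairing of $u$ against $(I+VR_0(\lambda+i0))A_\lambda$ carry nontrivial information --- whereas in your scheme that hypothesis is never really used. I would abandon the reduction to ``$\hat f=0$ on $M_\lambda$'' and the proposed dimension count, and instead run the duality argument just described.
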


\begin{proof}
We first note that $A_\lambda$ is a well defined closed subspace of $B$ which can be seen by Theorem \ref{thm34} and Lemma \ref{lm23}. The assumption $A_\lambda\neq\{0\}$ guarantees the non-trivial inclusion $B^*\subset A_\lambda^*$, where $A_\lambda^*$ is the $B$-norm dual space of $A_\lambda$. Consider operator $T=I+VR_0(\lambda+i0)$ on $A_\lambda$. Obviously $VR_0(\lambda+i0)$ is well defined and compact in $A_\lambda$ with the $B$ topology, thus by Fredholm theory,
	\begin{equation}\label{eq710}
	\mathrm{dim}\,\mathrm{Ker}(T)=\mathrm{dim}\,({}^\perp\mathrm{Ran}(T))<\infty,
	\end{equation}
	where
	\begin{equation}\label{eq711}
	{}^\perp\mathrm{Ran}(T)=\left\{v\in A_\lambda^*;~\langle v,Tf\rangle,\,\,f\in A_\lambda\right\},
	\end{equation}
	and here $\langle\cdot,\cdot\rangle$ denotes the duality pairing between $A_\lambda^*$ and $A_\lambda$. Let $f_1,\cdots,f_r\in A_\lambda$ be a basis of $\mathrm{Ker}(T)$, and set
	\begin{equation}
	u_j=R_0(\lambda+i0)f_j\in B_s^*\subset B^*\subset A_\lambda^*.
	\end{equation}
	Now one checks that in the sense of \eqref{equ62},
	\begin{equation}
	((-\Delta)^\frac s2-\lambda)u_j=f_j=-Vu_j,
	\end{equation}
	which implies that $u_1,\cdots,u_r$ are linearly independent. When $f\in A_\lambda$, $\langle u_j,Tf\rangle$ is then equal to the $B^*-B$ pairing \eqref{e115} in the sense of isometry, which is $0$ by Lemma $\ref{lm73}$, and we know from \eqref{eq710} and \eqref{eq711}  that ${}^\perp\mathrm{Ran}(T)=\mathrm{span}\{u_1,\cdots,u_r\}$. We also have $u\in{}^\perp\mathrm{Ran}(T)$ for the same reason, thus $u$ is a linear combination of $u_1,\cdots,u_r$ and must satisfy the first equality of \eqref{eq79}, where the second equality follows by the last statement of Lemma \ref{lm72}.
\end{proof}

Finally, we are ready to illustrate the proof of Theorem \ref{thm13}.

\begin{proof}[Proof of Theorem \ref{thm13}]
We first claim that if $\lambda\in\mathbb{R}\setminus\{0\}$ and $A_\lambda=\{0\}$, then $I+VR_0(\lambda+i0)$ is invertible in $B$. Otherwise, take $0\neq f\in B$ such that $f+VR_0(\lambda+i0)f=0$, it is easy to see by Remark \ref{rk62}, Lemma \ref{lm72} and \eqref{eq73} that $f\in A_\lambda$, which is a contradiction.
	
Therefore, if $\lambda\in\Lambda$ then $A_\lambda\neq\{0\}$ holds by definition, and if $u\in H^s$ satisfies $Hu=\lambda u$, we know from \eqref{eq79} that Proposition \ref{pro72} is applicable, \eqref{eq16} thus follows. The discreteness of $\sigma_\mathrm{pp}\setminus\{0\}$ is implied by \eqref{e81} and Proposition \ref{pro72}. The proof is complete.
\end{proof}

\noindent
\section*{Acknowledgements}
Tianxiao Huang is supported by the Fundamental Research Funds for the Central Universities (Sun Yat-sen University No. 19lgpy249), the China Postdoctoral Science Foundation (No. 2020M672929), and the Guangdong Basic and Applied Basic Research Foundation (No. 2020A1515111048). Tianxiao Huang thanks Ze Li, Kuijie Li, Mingjuan Chen and Minjie Shan for some useful discussion. Quan Zheng is supported by the National Natural Science Foundation of China (No. 11801188). The authors also thank Shanlin Huang for pointing out some research on the non-existence of wave operators.

\end{document}